\theoremstyle{plain}
\newtheorem{assumption}{\protect\assumptionname}
\theoremstyle{plain}
\newtheorem{thm}{\protect\theoremname}
\theoremstyle{plain}
\newtheorem{prop}{\protect\propositionname}
\theoremstyle{plain}
\newtheorem{lem}{\protect\lemmaname}
\providecommand{\U}[1]{\protect\rule{.1in}{.1in}}
\providecommand{\assumptionname}{Assumption}
\providecommand{\lemmaname}{Lemma}
\providecommand{\propositionname}{Proposition}
\providecommand{\theoremname}{Theorem}
\providecommand{\Vtz}{\boldsymbol{z}}
\def\@makefnmark{\hbox{\@textsuperscript{\normalfont\@thefnmark}}}}
\providecommand{\assumptionname}{Assumption}
\providecommand{\lemmaname}{Lemma}
\providecommand{\propositionname}{Proposition}
\providecommand{\theoremname}{Theorem}
\begin{document}
\title{{\Large Innovation, Spillovers and Economic Geography}}
\author{José M. Gaspar\thanks{CEF.UP, Faculty of Economics, University of Porto. Email: jgaspar@fep.up.pt.}
and Minoru Osawa\thanks{Institute of Economic Research, Kyoto University. Email: osawa.minoru.4z@kyoto-u.ac.jp.}}
\date{\vspace{-5ex}
 }
\maketitle
\begin{abstract}
We develop a Schumpeterian quality-ladder spatial model in which innovation
arrivals depend on regional knowledge spillovers. A parsimonious reduced-form
diffusion mechanism induces the convergence of regions' average distance
to the global frontier quality. As a result, regional differences
in knowledge levels stem residually from asymmetries in the spatial
distribution of researchers and firms. We analytically characterize
the processes of innovation and knowledge diffusion. We then explore
how the weight of intra- relative to inter-regional knowledge spillovers
interacts with freer trade to shape the spatial distribution of economic
activities. If intra-regional spillovers are relatively stronger,
a higher economic integration leads to progressive agglomeration.
If inter-regional spillovers dominate, researchers and firms may re-disperse
after an initial phase of agglomeration as integration increases.
This happens because firms and researchers have incentives to relocate
to the smaller region, where they can leverage the concentrated knowledge
base of the larger region while avoiding congestion in innovation.
The smoothness of the dispersion process depends on the particular
weight of intra-regional spillovers. If inter-regional spillovers
become stronger as trade becomes freer, then the latter induces a
monotone dispersion process. When integration is high enough, stable
long-run equilibria always maximize the growth rate of the global
frontier quality and the average distance to the frontier, irrespective
of whether spillovers are mainly local or global. 
\end{abstract}
\bigskip{}

\noindent\textbf{Keywords: }Innovation; regional knowledge spillovers;
economic geography; dispersion;

\noindent\textbf{JEL codes: }O30, R10, R12, R23.

\section{Introduction}

Economic geography emphasizes the role of endogenous forces in shaping
lasting and sizable economic agglomerations in the modern economy.
In its aim to explain the spatial distribution of economic activities,
there has been a narrow focus on pecuniary externalities through trade
linkages.\footnote{Reviews on the literature of geographical economics, or ``new economic
geography'' models, are provided by e.g. \citet{Fujita2013}
and \citet{gaspar2018prospective}.} However, in order to understand the processes of agglomeration, dispersion
and (de)-industrialization, it is crucial to develop theories that
explore the interaction among multiple spatial linkages \citep{gaspar2018prospective}.
We aim to fill this gap by explaining how the weight/intensity of
intra-regional spillovers impacts knowledge creation and affects the
spatial distribution of agents. Moreover, we study how this weight
interplays with economic integration to understand the evolution of
the space economy as trade barriers decrease.

We develop a two-region spatial quality-ladder model that is based
on microfoundations borrowed from the literature of endogenous growth
theory \citep{Aghion-Howitt-Book1998}, and augment it with a regional
spillover function such that regional innovation rates depend on the
geography of knowledge spillovers \citep{audretsch2004knowledge},
i.e.\ on how the spatial distribution of mobile agents (researchers)
and firms in the economy affects regional knowledge spillovers. Therefore,
it is assumed that knowledge transfers imperfectly between regions,
depending on the spatial nature of knowledge spillovers. We analytically
characterize the processes of innovation and knowledge diffusion.

We introduce a parsimonious reduced-form diffusion mechanism in the
dynamics of average qualities (relative to the global frontier), which
induces the convergence of regions' average distance to the global
frontier quality. As a result, regional differences in aggregate knowledge
levels stem residually from asymmetries in the spatial distribution
of researchers and firms. This allows us to focus on a setting where
regions are structurally symmetric. In turn, the spatial distribution
of researchers is determined endogenously by the interplay between
trade linkages and knowledge linkages through regional spillovers.
This confers great analytical tractability and allows us to focus
on spatial outcomes, based on first principles, as a result of pecuniary
factors and the \emph{economic geography} of knowledge spillovers
\citep{BS2022}.

A key issue in the geography of innovation is whether knowledge spillovers
are mainly local or extend across regions. When innovation depends
heavily on tacit know-how and face-to-face interaction, local spillovers
dominate. The local nature of spillovers is supported by empirical
evidence, as summarized by \citet{Doring01052006}. However, with
the ongoing globalization of innovation, which promotes the transfer
of knowledge across different regions, agents will tend to leverage
externally generated knowledge more and more. Thus, theoretical and
empirical refinements of the economic geography of innovation have
become increasingly important. We discuss this issue further in Section
3.


We find that, when spillovers are mainly global, then an increase
in economic integration from a very low level initially fosters agglomeration
in a single region. However, above a certain threshold, more integration
leads to more symmetric spatial outcomes, because firms find it worthwhile
to relocate to the smaller region in order to benefit from the sizeable
pool of agents in the larger region, which increases the chance of
innovation and expected profits in the smaller region. Therefore,
when the relative weight of local spillovers is low (but not too low),\footnote{For exceedingly low values of the weight of local spillovers, the
symmetric dispersion equilibrium is the unique stable equilibrium
in the entire range of economic integration.} our model accounts for a (complete) re-dispersion of economic activities
after an initial phase of agglomeration. That is, we are able to uncover
a bell-shaped relation between economic integration and spatial development
\citep{Fujita2013}. In this case, knowledge spillovers constitute
a dispersion force that becomes relatively stronger as economic integration
brings about the withering of agglomeration forces due to increasing
returns to scale in manufacturing. But the process of (de)-industrialization
is not trivial; depending on the specific weight of local spillovers
relative to global spillovers, the relationship between economic integration
and the spatial distribution of activities occurs with very different
qualitative properties, not yet described in the literature.\footnote{To the best of our knowledge.}
In fact, we show that increases in the relative weight of local spillovers
are linked to more pronounced agglomerations in the industrialization
process, and to more sudden (discontinuous) jumps towards dispersed
outcomes, particularly for intermediate levels of economic integration.

By contrast, when spillovers are mainly local, they generate an additional
agglomeration force. Re-dispersion becomes altogether impossible because
within-region interaction is too important for innovation to make
any deviation to a smaller region worthwhile. In this case, a higher
trade integration increases the net agglomeration forces and promotes
agglomeration, in line with the findings of \citet{martin2001growth}.
Moreover, an increase in the relative weight of local spillovers benefits
agglomeration and leads to greater spatial inequality.

We find that the complete re-dispersion of economic activities requires
the weight of local spillovers to be low at the symmetric (dispersion)
equilibrium, irrespective of the particular functional form of regional
spillover function. This captures the idea of a congestive effect
in the production of knowledge, which makes innovation less likely
in larger regions and creates opportunities for breakthrough innovations
in smaller regions. This promotes the convergence of regions toward
a balanced distribution of economic activities, in line e.g. with
\citet{brezis1993}.

We also find that, if the weight of global spillovers increases as
trade barriers decrease \citep{keller2010international,bueraoberfield2020},
eventually dominating local spillovers, then the economy disperses
monotonically from the state of agglomeration as economic integration
increases. This change may be smooth or sudden, depending on the specific
weight of local knowledge spillovers.

We thus present a microfounded narrative that may explain why and
how the spatial nature of knowledge spillovers may mitigate or exacerbate
spatial asymmetries in increasingly integrated economies. As such,
our paper provides a new theoretical mechanism that could be incorporated
into larger-scale quantitative models, such as the ones discussed
in \citet{redding2017quantitative}, to better understand the determinants
of the spatial distribution of economic activity and derive counterfactuals.

The rest of the paper is organized as follows. Section 2 discusses
some related literature. Section 3 discusses the empirical literature
on the spatial nature of knowledge spillovers. Section 4 introduces
the spatial quality-ladder model. Section 5 characterizes the dynamics
of innovation. Section 6 deals with the existence and stability of
long-run equilibria. Section 7 studies the relationship between economic
integration and spatial outcomes. In Section 8, we provide some comparative
statics and in Section 9 we present results with a more general form
for the regional's innovation rate. In Section 10 we briefly discuss
the case where the weight of inter-regional spillovers increases as
trade becomes freer. Finally, Section 11 is left for discussion and
concluding remarks.

\section{Literature Review}

The role of knowledge linkages has become increasingly relevant in
the economic geography literature. According to \citet{Fujita-RSUE2007},
geography is an essential feature of knowledge creation and diffusion.
For instance, people residing in the same region interact more frequently
and thus contribute to develop the same, regional set of cultural
ideas. However, while each region tends to develop its unique culture,
the economy as a whole evolves according to the synergy that results
from the interaction across different regions. That is, according
to \citet{Duranton-Puga-AER2001}, knowledge creation and location
are inter-dependent. \citet{Berliant-Fujita-RSUE2012} developed a
model of spatial knowledge interactions and showed that higher cultural
diversity, albeit hindering communication, promotes the productivity
of knowledge creation. This corroborates the empirical findings of
\citet{Ottaviano-Peri-JEG2006,Ottaviano-Peri-DP2008}. \citet{Ottaviano-Prarolo-JRS2009}
show how improvements in the communication between different cultures
fosters the creation of multicultural cities in which cultural diversity
promotes productivity. This happens because better communication allows
different communities to interact and benefit from productive externalities
without risking losing their cultural identities.


\citet{Berliant-Fujita-SEJ2011} take a first step towards using a
micro-founded R\&D structure to infer about its effects on economic
growth. They find that long-run growth is positively related to the
effectiveness of interaction among workers as well as the effectiveness
in the transmission of public knowledge. 



In the endogenous growth models developed e.g. by \citet{Aghion-Howitt-ECTA1990,Aghion-Howitt-Book1998},
\citet{Young-JPE1998}, \citet{Howitt-JPE1999}, or more recently
\citet{Dinopoulous-Segerstrom-JDE2010}, innovation depends on factors
such as the amount of the firms' research effort, the common pool
of public knowledge available to all firms, and the individual firm's
quality level. Introducing geography and worker mobility in these
frameworks allows the innovation success to also depend on the magnitude
of regional interaction through the exchange of ideas between researchers
and producers alike among regions. This is the aim we pursue in the
present paper.

We introduce an innovation sector in an economic geography framework,
where consumers have quasi-linear preferences \citep{pfluger2004simple},
that is based on microfoundations borrowed from the literature of
endogenous growth theory \citep{Aghion-Howitt-Book1998}, but with
myopic agents (as in \citep[see e.g.][]{akcigit2022international,aghion2025theory}.
We augment it with a regional spillover function that depends on the
spatial distribution of regional knowledge levels. Under mild assumptions,
our modeling strategy is such that regional average qualities relative
to the frontier are equalized in the long-run, such that the structural
symmetry across regions is preserved. Any asymmetries in the model
thus stem from the spatial distribution of researchers and firms.
Indirect utility differentials, which govern the migration of researchers
between regions in the long-run, are determined solely by trade linkages
and by knowledge linkages.

Our modeling of the innovation sector provides a framework that is
similar in analytical tractability to the models found in the literature
of \emph{Quantitative Spatial Economics} (QSE) \citep{redding2017quantitative,BEHRENS2021103348,kleinman2023linear}.
However, the innovation process in our paper has the additional advantage
of being grounded on solid microfoundations and derived from first
principles. Moreover, in our setting, knowledge transfers imperfectly
between regions, in accordance with the empirical literature on geographical
knowledge spillovers and regional growth \citep{Doring01052006}.
Finally, our model is scale-neutral. As such, its spatial outcomes
are driven by the spatial nature of knowledge spillovers rather than
by implicit assumptions about returns to scale in the innovation sector.\footnote{For a comprehensive discussion on how these implicit assumptions regarding
returns to scale generate mistaken conclusions in geographical economics,
we refer the reader to \citet{BS2022}.} Thus, we argue that our paper may contribute to the literature by
providing a new theoretical mechanism that could be incorporated into
larger-scale quantitative models to better understand the determinants
of the spatial distribution of economic activity.

\section{Evidence on the geography of knowledge spillovers}

\label{subsec:spillover-mechanisms}

In this paper, the spatial nature of knowledge spillovers is embodied
by a parameter $b$, which captures the relative importance of intra-
versus inter-regional spillovers, measured by the aggregate knowledge
levels in the domestic region and in the foreign region. Formally,
it is the weight of local knowledge contribution on a region's innovation
rate. As we will see, it will depend directly on the spatial distribution
of researchers and firms across regions, thus capturing the idea that
interaction between researchers both within and across regions generates
spillovers that foster innovation. In terms of regional knowledge
spillovers, we can write 
\begin{equation}
b=\frac{\text{intra-regional spillovers}}{\text{total spillovers}}\quad\in(0,1),
\end{equation}
so that when $b>1/2$ local spillovers dominate, while when $b<1/2$
global spillovers dominate. The empirical literature provides indirect
evidence on the geography of spillovers by examining how knowledge
diffuses across space.

Early work emphasizes the localized character of spillovers, driven
by tacit knowledge, repeated interaction, and agglomeration economies.
Patent citation analyses \citep{jaffe1993geographic} and studies
of innovative clusters \citep{audretsch1996R&D} showed that knowledge
flows decay quickly with distance, especially for tacit know-how that
is hard to transfer remotely. \citet{storper2004buzz} similarly emphasize
the creative ``buzz'' in cities driven by in-person contacts and
informal exchanges. Such ``local buzz'' \citep{bathelt2004buzz}
is particularly relevant for incremental innovation, implying settings
where local spillovers dominate $(b>1/2)$.

Recent research demonstrates, however, that several mechanisms expand
the reach of spillovers. A first is codification, reinforced by digital
communication. Once knowledge is formalized in patents, publications,
or blueprints, it can circulate widely. \citet{keller2002geography}
finds that foreign R\&D embodied in trade accounts for a significant
share of productivity growth. \citet{jaffe2002patents} show that
patent citations cross regional and national borders; and \citet{forman2019internet}
demonstrate that basic internet adoption increased cross-location
patent citations. These results point to values of $b$ closer to
or below $0.5$ as communication technologies and codification allow
ideas to circulate globally.

A second mechanism combines global networks and researcher mobility.
\citet{branstetter2006is} shows that Japanese multinationals in the
U.S.\ generated two-way knowledge flows. \citet{almeida1999localization}
and \citet{agrawal2006mobility} find that migrating inventors transmit
knowledge across regions. Related work confirms that even international
mobility, such as emigrant inventors or visiting researchers, helps
bridge regional innovation gaps (e.g. returning diaspora entrepreneurs
spreading know-how). These ``people- and firm-based'' channels reduce
the weight of local spillovers implying $b<1/2$ in contexts with
high talent mobility.

A third mechanism is collaborative R\&D and complementarities across
regions. Large consortia and policy programs, such as the EU Framework
Programmes, foster cross-regional partnerships. Evidence from joint
patents \citep{hoekman2010research} shows that such collaborations
deliberately spread knowledge across regions. By constructing ``global
pipelines'' between clusters, these collaborations ensure that discoveries
in one region are shared with partners elsewhere \citep{branstetter2006is}.
Such network-building across space reduces the weight of local spillovers
and pushes $b$ below $0.5$.

Finally, the type of innovation also shapes the geography of knowledge
diffusion. Incremental improvements and innovations based on tacit
know-how tend to stay local, since they rely on context-specific expertise
and close interactions. By contrast, major breakthroughs and radical
innovations are usually codified (e.g. published or patented) and
thus diffuse more globally. For instance, studies of European regions
suggest that routine, tacit innovations generate mostly local knowledge
spillovers, whereas cutting-edge innovations create knowledge that
travels widely \citep{autant2013social}. This factor shifts $b$
above or below $0.5$ depending on the technological trajectory.

Overall, the evidence suggests a historical shift: while tacit and
incremental innovation sustains the dominance of local spillovers,
the forces of codification, mobility, networks, and collaboration
are pushing economies toward more globalized spillovers \citep{carlino2015agglomeration,miguelez2018relatedness}.
In the context of our two-region model, where the regions are structurally
similar, differing mainly in size or scale of activity, this justifies
a close examination of the case $b<1/2$.

\section{The model}

The following is a solvable Core-Periphery model with both horizontally
and vertically differentiated varieties of manufactured goods. The
economy is comprised of two regions indexed by $i\in\{1,2\}$, two
kinds of labour, two productive sectors and one R\&D sector. There
is a unit mass of (skilled) inter-regionally mobile agents (henceforth,
\emph{researchers}) and a mass $l\equiv\lambda>0$ of (unskilled)
immobile workers (\emph{workers} for short) which are assumed to be
evenly distributed across both regions, i.e., $l_{i}=\frac{\lambda}{2}$
$(i=1,2)$. Outsider firms engage in R\&D activities by employing
researchers in order to improve the quality of manufactured varieties.
The mass of researchers in region $i$ is denoted by $z_{i}\in\left[0,1\right]$
and their spatial distribution by $\Vtz=(z_{1},z_{2})$ with $z_{1}+z_{2}=1$.

The remainder of this section derives the short-run general equilibrium,
taking $\Vtz$ as fixed. We solve the model backwards, that is, we
start with the firm's production and pricing decisions for a given
variety, and then subsequently discuss the process of innovation and
the wages paid to researchers.

\subsection{Demand}

The utility function of a consumer located in region $i$ is given
by: 
\begin{equation}
u_{i}=\mu\ln\left(\frac{M_{i}}{\mu}\right)+B_{i},\ \ \mu>0\label{eq:utility}
\end{equation}
where $B_{i}$ is the numéraire good produced under perfect competition
and constant returns to scale. This good is produced one-for-one using
$L$ workers and its price is set to unity as is the wage paid to
workers. The quality-augmented CES composite $M_{i}$ is given by:
\begin{equation}
M_{i}=\left[\sum_{j}\int_{s\in\mathcal{S}_{j}}\left(\sum_{m=0}^{k}\delta^{m}d_{ji}(m,s)\right)^{\tfrac{\sigma-1}{\sigma}}\mathrm{d}s\right]^{\tfrac{\sigma}{\sigma-1}},\label{eq:ces utility}
\end{equation}
where $\mathcal{S}_{i}$ is the set of varieties in region $i$, $d_{ji}(m,s)$
is the demand for manufactures in region $i$ produced in region $j$,
for a given variety $s\in\mathcal{S}_{j}$ with quality $m\in\{0,1,...,k\}$,
and $\sigma>1$ is the elasticity of substitution between any two
varieties. The parameter $\delta>1$ indexes the step size of quality
improvements in region $i$ after a successful innovation and $k$
is the leading quality grade for any given variety $s$. Since $\delta^{m}$
is increasing in $m$, the utility in (\ref{eq:ces utility}) reflects
the fact that consumers have a preference for higher quality. Throughout
the paper, the summation over regional indices, e.g., $\sum_{j}(\textcolor{lightgray}{\blacksquare}_{j})$,
denotes $\sum_{j=1}^{2}(\textcolor{lightgray}{\blacksquare}_{j})$.

For each variety $s$, it is optimal for households to buy only the
good with the lowest quality-adjusted price, $p_{ji}(m,s)/\delta^{m}$
(\citealp{Dinopoulous-Segerstrom-JDE2010}). If any two goods of the
same variety have the same quality adjusted price, we assume that
consumers will only buy the highest quality good (see \citealp{Dinopoulous-Segerstrom-JDE2010,davis2012private}).

Since individual incomes depend on the distribution of labour activities,
we have $y_{i}=1$ as labour income for the workers, and $y_{i}=w_{i}$,
which is the compensation paid to the researchers. Therefore, the
regional income is given by: Agents maximize (\ref{eq:utility}) subject
to the following budget constraint 
\[
B_{i}+\sum_{j}\int_{s\in\mathcal{S}_{j}}p_{ji}(s)d_{ji}(s)\mathrm{d}s=y_{i}+\bar{B}_{i},
\]
where $d_{ji}(s)$ is the demand of the good with the lowest quality
adjusted price for variety $s$, $p_{ji}(s)$ is the corresponding
price and $\bar{B}_{i}>0$ is an endowment of the numéraire good to
be discussed later. This yields the following optimal individual demands:
\begin{equation}
d_{ji}(s)=\mu\frac{a_{ji}(s)p_{ji}(s)^{-\sigma}}{P_{i}^{1-\sigma}},\ \ \ \ B_{i}=y_{i}+\bar{B}_{i}-\mu,\ \ \ \ M_{i}=\mu P_{i}^{-1},\label{eq:optimal individual demand}
\end{equation}
where $a_{ji}(s)=\delta_{ji}^{m(s)(\sigma-1)}$ is an alternative
measure of the quality of a variety $s$ and $P_{i}$ is the quality
adjusted price index in region $i$ given by: 
\begin{equation}
P_{i}=\left[\sum_{j}\int_{s\in\mathcal{S}_{j}}a_{j}(s)p_{ji}(s)^{1-\sigma}\mathrm{d}s\right]^{\tfrac{1}{1-\sigma}}.\label{eq:quality price index}
\end{equation}

\noindent We assume that $\bar{B}>\mu$ in order to assure that both
types of goods are consumed. From (\ref{eq:utility}) and (\ref{eq:optimal individual demand}),
we obtain the indirect utility: 
\begin{equation}
v_{i}=y_{i}-\mu\ln P_{i}-\mu+\bar{B}_{i}.\label{eq:indirect utility}
\end{equation}

\subsection{Manufacturing firms}

The firm responsible for each quality improvement for a variety $s$
retains a monopoly right to produce that variety at the corresponding
quality level. Therefore, if the quality rungs $m=1,...,k$ have been
reached, the $m\textit{th}$ innovator is the sole source of the good
of variety $s$ with the quality level $\delta^{m}$ \citep{Barro-Sala-i-Martin-Book2004}.
As a result, each variety is produced by a single firm.

For each firm, there is a variable input requirement of $\beta$ workers.
A manufacturing firm in region $i$ thus faces the following cost:
\begin{equation}
C_{i}(q_{i}(s))=\beta q_{i}(s),\label{eq:total cost function}
\end{equation}
where $q_{i}(s)$ is total production by a firm in region $i$ that
produces the lowest quality adjusted price for variety $s$.

Interregion trade of manufactures is burdened by iceberg transportation
costs. Let $\tau>1$ denote the number of units that must be shipped
from region $i$ so that one unit is delivered at region $j\ne i$.
For convenience, we set $\tau_{ij}=\tau$ for $i\neq j$ and $\tau_{ij}=1$
otherwise. The quantity produced by a firm in region $i$ is then
given by: 
\[
q_{i}(s)=\sum_{j}\tau_{ij}d_{ij}(s)\left(\frac{\lambda}{2}+z_{j}\right).
\]

In the following, $s^{*}\in\mathcal{S}$ indexes a variety-$s$ firm
that produces at the leading quality grade $k$. Here, the precise
notation might be $s(k(s))$, since $k$ is $s$-dependent. To avoid
notational burden, our discussion below will omit $k$ and its dependence
on $s$, and further suppress $k$ by using $s^{*}$.

An $s^{*}$-producing firm faces competition from a potential producer
of the same variety with the next best grade. If innovations are drastic
(the step size of improvements $\delta$ is large enough), the quality
leader can charge the unconstrained monopoly price since the next
best producer cannot make a profit. If innovations are non-drastic,
the quality leader may initially engage in limit pricing and immediately
revert to the unconstrained monopoly price once it learns that the
next best producer has exited the market.\footnote{We assume that there are positive costs of re-entering the market.
This is why the quality leader can charge the unconstrained monopolist
price without worrying about competition from lower grades of the
product \citep{Howitt-JPE1999}. } In either case, the closest competitor (and all lower grades) is
priced out of business. In what follows, the short-run general equilibrium
will be comprised solely of the firms that are able to produce the
highest quality possible of each variety, i.e. the quality leaders.

The profit of a manufacturing firm producing the variety $s^{*}$
in region $i$ is given by: 
\begin{align}
\pi_{i}(s^{*})= & \sum_{j}p_{ij}(s^{*})d_{ij}(s^{*})\left(\frac{\lambda}{2}+z_{j}\right)-\beta q_{i}(s^{*})\nonumber \\
= & \sum_{j}\left(p_{ij}(s^{*})-\tau_{ij}\beta\right)d_{ij}(s^{*})\left(\frac{\lambda}{2}+z_{j}\right).\label{eq:profit of firm i}
\end{align}
Given (\ref{eq:profit of firm i}) and the optimal individual demand
in (\ref{eq:optimal individual demand}), the firm's profit maximizing
price is the usual mark-up over marginal cost: 
\begin{equation}
p_{ij}=\dfrac{\sigma}{\sigma-1}\tau_{ij}\beta,\label{eq:optimal price}
\end{equation}
which does not depend on the quality of the firm's variety $s$.

Under (\ref{eq:optimal price}), the regional quality adjusted price
index in (\ref{eq:quality price index}) becomes: 
\begin{equation}
P_{i}=\dfrac{\beta\sigma}{\sigma-1}\left(\sum_{j}\phi_{ij}A_{j}\right)^{\tfrac{1}{1-\sigma}},\label{eq:quality price index 3}
\end{equation}
where $\phi_{ij}\equiv\tau_{ij}^{1-\sigma}\in(0,1)$ is the \emph{freeness
of trade} and 
\begin{equation}
A_{j}=\int_{s^{*}\in\mathcal{S}_{j}}a_{j}(s^{*})\mathrm{d}s^{*},\label{eq:aggregateregionalknowledge}
\end{equation}
is the aggregate knowledge (quality) level in region $j$.

We can write $\pi_{i}(s^{*})$ as 
\begin{equation}
\pi_{i}(s^{*})=a_{i}(s^{*})\tilde{\pi}_{i},\label{eq:profitreduced}
\end{equation}
where 
\[
\tilde{\pi}_{i}=\sum_{j}\left(p_{ij}-\tau_{ij}\beta\right)\mu\frac{p_{ij}^{-\sigma}}{P_{i}^{1-\sigma}}\left(\frac{\lambda}{2}+z_{j}\right),
\]
is the profit of a firm in region $i$ with quality $m(s)=0$. Using
(\ref{eq:optimal individual demand}) and (\ref{eq:optimal price}),
we get 
\begin{equation}
\tilde{\pi}_{i}=\frac{\mu}{\sigma}\left(\frac{\frac{\lambda}{2}+z_{i}}{A_{i}+\phi A_{j}}+\phi\frac{\frac{\lambda}{2}+z_{j}}{\phi A_{i}+A_{j}}\right).\label{eq:qualityzeroprofit}
\end{equation}

\subsection{R\&D sector}



The modeling of the R\&D sector is based on the literature of Schumpeterian
growth theory (see e.g. \citealp{grossman1991trade,Aghion-Howitt-ECTA1990,Aghion-Howitt-Book1998,aghion2014we}),
except that we consider myopic agents (as in \citealt{akcigit2022international,aghion2025theory})
for simplicity and without loss of generality. There are $r$ (indeterminate)
symmetric outsider firms, funded by entrepreneurs, who engage in research
activities in order to improve the state-of-the-art quality of a chosen
variety $s$. There is free entry in the R\&D sector for each variety
$s$. Innovation in a variety occurs with Poisson intensity $\Phi_{i}(s)\equiv\Phi_{i}\geq0$.\footnote{We assume at most one success per variety at an instant; ties are
broken by an arbitrary rule.} In order to innovate, an outsider firm employs $\alpha/r$ researchers
and pays them the wage flow $w_{i}$. At the instant of success the
innovator upgrades the variety to quality $k(s)+1$, indexed by $s^{**}$
and immediately sells the patent/line to production firms (new entrants)
through a competitive patent market. The new quality leader displaces
the previous incumbent and earns subsequent profit flows. The patent
prize is equal to the instantaneous operating profit of the new leader,
$\pi_{i}(s^{**})$, determined by the profit function in (\ref{eq:profit of firm i})
evaluated at $s^{**}$.\footnote{With forward-looking agents, the \emph{present-value} reward of innovation
$V_{i}$ is the value of becoming and remaining the leader, given
by the standard Hamilton--Jacobi--Bellman asset pricing formula
$V_{i}=\frac{\pi_{i}(s^{**})}{\rho_{i}+\Phi_{i}}$, where $\rho_{i}>0$
is the discount rate in region $i$. If we choose time units such
that $\rho_{i}+\Phi_{i}=1$, the forward-looking and myopic evaluations
become identical. Either way confers great analytical tractability
and avoids introducing an explicit discount factor.}

After selling the patent, the entrepreneur exits, so only the immediate
cash prize matters. Then, the expected instantaneous payoff of a potential
innovator is given by: 
\begin{equation}
\mathbb{E}\left[V_{i}\right]=\frac{\Phi_{i}}{r}\pi_{i}(s^{**})-\frac{\alpha}{r}w_{i}.\label{eq:expectedprofits}
\end{equation}
We are assuming that incumbents do not engage in research. Since they
are already earning monopoly profits and entry into R\&D races is
free, incumbents have weaker incentives to conduct research activities.
This is due to the replacement effect pointed out by \citet{arrow1962economic}.\footnote{This assumption can be relaxed if we consider that quality leaders
have cost advantages in innovation.}

Successful innovators all draw from the same pool of public knowledge,
measured by the frontier quality grade across all varieties in both
regions:\footnote{Since all varieties are produced by quality leaders, we drop the quality
grade subscript hereinafter to ease the burden of notation unless
needed for clarity. We will also display or omit the explicit dependence
of state variables on time $t$ according to clarity of exposition.} 
\[
a_{\text{max}}(t)\equiv\sup_{s\in\mathcal{S}_{i},i\in\{1,2\}}a_{i}(s,t).
\]

\noindent We follow \citet[Chap. 12]{Aghion-Howitt-Book1998} and
consider a global spillover whereby all successful innovations push
the frontier quality $a_{\text{max}}$ forward. This also implies
that laggard firms are able to leapfrog firms with frontier quality
$a_{\text{max}}$. Formally, we have the following assumption. 
\begin{assumption}
\noindent Whenever a firm innovates at time $t$, the frontier quality
jumps by a factor of $\delta$: $a_{\text{max}}(t^{+})=\delta a_{\text{max}}(t^{-}).$
The innovator's quality becomes $a_{\text{max}}(t^{+})$, while the
qualities of all other varieties remain unchanged.
\end{assumption}
\noindent At any time $t$, there is a distribution of quality levels
$a_{i}(s)$ in region $i$ ranging from $\underline{a_{i}}\geq1$
to $a_{\text{max}}.$ For each new variety being produced by a new
firm at some time $t=t_{0}^{s},$ we have $m(s)=0$ so that $a_{i}(s)=1$
for all varieties $s\in\mathcal{S}_{i}$ at $t_{0}^{s}=0$. This also
implies that $m(s)=0$ for all varieties produce from $t=0$.

Let us define the relative quality of a firm as $a_{i}=\frac{a_{i}(s)}{a^{max}}\in[0,1]$.
We assume that $a_{i}$ is distributed according to a cumulative distribution
function $H_{i}(a),$ with support $[0,1]$. The average \emph{relative}
quality in region $i$ is given by 
\[
\bar{a}_{i}=\mathbb{E}\left[a|i\right]=\int_{0}^{1}ah_{i}(a)da,
\]
where $h_{i}(a)=H_{i}^{\prime}(a)$. Labour market clearing for researchers
implies that the number of varieties (and hence firms) is given by
$n_{i}=z_{i}/\alpha$. Therefore, the aggregate quality in region
$i$ is given by 
\begin{equation}
A_{i}=\frac{z_{i}}{c}\bar{a}_{i},\label{eq:aggregatequality}
\end{equation}
where $c\equiv\frac{\alpha}{a_{\text{max}}}$ is research effort per
variety adjusted by the frontier quality.

\noindent 




\subsection{Short-run equilibrium}

We can now characterize the short-run equilibrium at time $t$ by
summarizing the timing of events over an infinitesimal interval $[t,t+\mathrm{d}t).$\footnote{Equivalently, we could consider a short-run equilibrium over a discrete-time
one-period setup with probability of success $\Pi_{i}=1-e^{-\Phi_{i}\Delta t}$.
With a small hazard, we have $\Pi_{i}\approx\Phi_{i}$, making the
discrete and continuous time formulations approximately equivalent.} At $t$, the leading quality grade of any variety $s$ is $k$, inherited
from past innovations. Each outsider firm engages in research and
pays the wage flow $\frac{\alpha}{r}w_{i}\mathrm{d}t$ to researchers.
The success probability of innovation for a variety is $\Phi_{i}\mathrm{d}t$.
If no innovation arrives, the incumbent remains the quality leader
after $t+\mathrm{d}t$. If an innovation arrives, the successful entrepreneur
upgrades the quality of the chosen variety to $a(s^{**})=a_{\text{max}}(t^{+})=\delta a_{\text{max}}(t^{-})$,
immediately sells the patent for the lump-sum prize $V_{i}\equiv\pi_{i}(s^{**})$
to a newly formed production firm and exits.

To close the model, we assume that production firms are collectively
owned by their workers. Researchers are not residual claimants on
firm profits, ensuring that R\&D labor supply decisions remain governed
by the expected innovation prize. This setup preserves tractability
and supports a general equilibrium interpretation without requiring
forward-looking behavior.

Production firms' operating profits are distributed as dividends to
workers', in units of the numéraire good. Workers use their income
coming from wages and dividends for consumption and to finance patent
purchases when innovations occur. Residual gains, $R_{i}$, accrue
to the endowment $\bar{B}_{i}=\bar{B}+R_{i}$, where $\bar{B}>0$
is an initial endowment large enough to guarantee the consumption
of the numéraire. Given the quasi-linear utility in (\ref{eq:utility}),
dividends do not influence the workers' demand for manufactured goods.
As a result, dividends have no impact on production firms' operating
profits, which means they do not affect the wages paid to researchers. 

In the short-run equilibrium, production quality leaders earn positive
operating profits; free entry applies to R\&D (not production). With
free entry in R\&D, the expected payoff flow of entrepreneurs is zero
at the variety level. Using (\ref{eq:expectedprofits}), this yields
the following condition:\footnote{In \citet{melitz2008market}, zero expected profit applies to production
with heterogeneous productivities, yielding a cutoff and selection.
Introducing Melitz-Ottaviano-style heterogeneity here would add a
selection margin on top of R\&D; the prize becomes the expected operating
profit conditional on clearing the cutoff productivity level. This
would deliver selection/sorting while preserving our innovation-driven
frontier dynamics. We view this as a promising extension.} 
\begin{equation}
w_{i}=\frac{\Phi_{i}\pi_{i}(s^{**})}{\alpha},\label{eq:freeentry}
\end{equation}
where $\pi_{i}(s^{**})=a_{i}(s^{**})\tilde{\pi}_{i}$ is given by
$(\ref{eq:profitreduced})$. Using (\ref{eq:qualityzeroprofit}) in
(\ref{eq:freeentry}), this yields

\noindent
\begin{align*}
w_{i} & =\frac{\mu}{\sigma\alpha}\Phi_{i}a_{i}(s^{**})\left(\frac{\frac{\lambda}{2}+z_{i}}{A_{i}+\phi A_{j}}+\phi\frac{\frac{\lambda}{2}+z_{j}}{\phi A_{i}+A_{j}}\right).
\end{align*}
In terms of relative qualities, using (\ref{eq:aggregatequality})
and $c\equiv\frac{\alpha}{a_{\text{max}}}$, we get 
\begin{equation}
w_{i}=\frac{\mu}{\sigma}\Phi_{i}\delta\left(\frac{\frac{\lambda}{2}+z_{i}}{z_{i}\bar{a}_{i}+\phi z_{j}\bar{a}_{j}}+\phi\frac{\frac{\lambda}{2}+z_{j}}{\phi z_{i}\bar{a}_{i}+z_{j}\bar{a}_{j}}\right).\label{eq:wage equation}
\end{equation}

\noindent The price index in (\ref{eq:quality price index 3}) becomes:
\begin{equation}
P_{i}=\frac{\beta\sigma}{\sigma-1}\left(\frac{1}{c}\sum_{j}z_{j}\phi_{ij}\bar{a}_{j}\right)^{\frac{1}{1-\sigma}}.\label{eq:quality price index 2-1}
\end{equation}
Finally, using (\ref{eq:wage equation}), (\ref{eq:quality price index 2-1})
and (\ref{eq:indirect utility}), we get the indirect utility of a
researcher in region $i$: 
\begin{align}
v_{i}= & \frac{\mu}{\sigma}\Phi_{i}\delta\left(\frac{\frac{\lambda}{2}+z_{i}}{z_{i}\bar{a}_{i}+\phi z_{j}\bar{a}_{j}}+\phi\frac{\frac{\lambda}{2}+z_{j}}{\phi z_{i}\bar{a}_{i}+z_{j}\bar{a}_{j}}\right)+\dfrac{\mu}{\sigma-1}\ln\left(z_{i}\bar{a}_{i}+\phi z_{j}\bar{a}_{j}\right)+\eta,\label{eq:indirectutilityrelativequality}
\end{align}
where $\eta\equiv-\mu\left(\frac{\beta\sigma}{\sigma-1}\right)-\mu+\frac{\mu}{\sigma-1}\ln\left(\frac{1}{c}\right)+\bar{B}$.
We assume that $\bar{B}_{i}=\bar{B}$ for researchers in both regions.

\section{Knowledge diffusion and spillovers}

\subsection{The dynamics of innovation}

Innovation and knowledge diffusion are assumed to operate on a much
faster timescale than the long-run\emph{ }structural forces that change
the spatial distribution of researchers $z$ and firms (e.g., migration,
entry/exit, plant relocation). We have the following assumption. \begin{assumption}[Time--scale
separation] \label{assu:Time=00003D002013scale-separation:-over}
Over the horizon relevant for quality adjustments, the researchers'
spatial distribution $\Vtz=(z_{1},z_{2})$ is fixed. \end{assumption}
This separates the fast adjustment of quality levels from the slow
evolution of $z$ in the long-run. Accordingly, we will (i) analyze
innovation dynamics for a given $z$, and (ii) study the long-run
evolution of $z$ assuming quality levels have rapidly adjusted to
their conditional steady-states. This time-scale separation is akin
to the fast-equilibrium/slow-migration structure in many standard
core--periphery models (\citealp[see e.g.][]{fujita1999spatial}).

The frontier quality evolves according to the following differential
equation: 
\begin{equation}
\dot{a}_{\text{max}}(t)=(\ln\delta)\left(n_{1}\Phi_{1}+n_{2}\Phi_{2}\right)a_{\text{max}}(t).\label{eq:frontierdynamics}
\end{equation}

\noindent When a hit occurs for a variety with relative quality $a\in[0,1]$
its relative quality jumps to $1$. The instantaneous contribution
of such events to the change in the regional average $\bar{a}_{i}$
is 
\[
\Phi_{i}\int_{0}^{1}(1-a)h(a)da=\Phi_{i}(1-\bar{a}_{i}).
\]

\noindent Whenever the global frontier is raised, all relative qualities
are scaled down by the same factor $\delta^{-1}.$ If the frontier
rises at an instantaneous growth rate $g\equiv\frac{\dot{a}_{\text{max}}}{a_{\text{max}}}$,
then, to first order, every relative quality, and therefore the average,
drifts down at rate $g$. The frontier drift is given by $-g\bar{a}_{i}.$
Let $\Phi_{i}=\Phi_{i}(\bar{a}_{i},\bar{a}_{j})$ denote the per--variety
arrival hazard in region $i$ at the current state $(\bar{a}_{i},\bar{a}_{j})\in\mathcal{H}\equiv(0,1)^{2}$.
The dynamics of average relative quality in region $i$ are given
by the following differential equation: 
\begin{equation}
\frac{d\bar{a}_{i}}{dt}=\Phi_{i}(1-\bar{a}_{i})-g\bar{a}_{i}+D_{i}(\bar{a}_{i},\bar{a}_{j}),\label{eq:relativequalitymotion}
\end{equation}
where 
\[
D_{i}(\bar{a}_{i},\bar{a}_{j})=-\frac{1}{2}(\Phi_{i}-\Phi_{j})(1-\bar{a}_{i}),
\]
is a minimal ``size-offset'' diffusion/congestion term that captures
knowledge diffusion in reduced form and eliminates any size/home-bias
embedded in $\Phi_{i}$.\footnote{See \citet{BENHABIB2005935} for mechanisms of international knowledge
diffusion/catching-up.} Thus, (\ref{eq:relativequalitymotion}) becomes simply: 
\begin{equation}
\frac{d\bar{a}_{i}}{dt}=\frac{1}{2}\left(\Phi_{1}+\Phi_{2}\right)(1-\bar{a}_{i})-g\bar{a}_{i}.\label{eq:relativequalitymotionfinal}
\end{equation}
The term $\frac{1}{2}(\Phi_{1}+\Phi_{2})$ can be interpreted as the
effective reset hazard at which varieties jump back to the frontier
($a_{i}=1$), equalized across regions by knowledge diffusion through
the term $D_{i}$.

Mathematically, interior invariance follows directly from (\ref{eq:relativequalitymotionfinal}),
because, at $a_{i}=0$, the drift is strictly positive and, at $a_{i}=1$,
it is strictly negative. Therefore, boundary points are not absorbing
and cannot be reached from interior initial conditions. This is why
we can restrict the state space to $\mathcal{H}$ and ignore boundary
points. Assumption \ref{assu:Time=00003D002013scale-separation:-over}
then implies that the system in (\ref{eq:relativequalitymotionfinal})
converges quickly to a conditional steady state $(\bar{a}_{1},\bar{a}_{2})=(\bar{a},\bar{a})\in\mathcal{H}$
before the long-run is reached. Afterwards, slow movements in $z$
shift this conditional steady state over the long-run.

The regional instantaneous Poisson arrival rate of successful innovations
(hazard rate) for any variety $s\in\mathcal{S}_{i}$ is given by 
\[
\Phi_{i}=F_{i}(A_{i},A_{j};\boldsymbol{b})G_{i}\left(\frac{\bar{a}_{j}}{\bar{a}_{i}}\right)\xi\left(c\right),\quad\xi^{\prime}>0,\quad\xi^{\prime\prime}\leq0,\quad G^{\prime}>0,
\]
where $F_{i}:\mathcal{D}\subseteq\mathbb{R}_{+}^{2}\to\mathbb{R}_{+}$
is called the \emph{regional spillover function, }assumed to be continuously
differentiable, and with the property $F_{i}(\lambda A_{i},\lambda A_{j})=\lambda F_{i}(A_{i},A_{j})$
for all $\lambda>0$ and for all $(A_{i},A_{j})\in\mathcal{D}$; $\boldsymbol{b}$
is a vector of parameters that \emph{does} \emph{not }include the
freeness of trade $\phi$ (for now). Following \citet{Li-AER2003},
spillovers scale with global rather than relative aggregate quality,
which is observed by all firms. We follow \citet{segerstrom1998endogenous}
and \citet{Howitt-JPE1999} and introduce a ``complexity term''
$\xi:\mathbb{R}_{+}\to\mathbb{R}_{+}$, which is a function of $c$,
reflecting the fact that innovation success is decreasing in the complexity
of each product, as measured by $a_{\text{max}}$, but increasing
in the number of researchers per firm $\alpha$. Finally, $G_{i}:\mathbb{R}_{+}\to\mathbb{R}_{+}$
is the \emph{catching-up function, }with $G_{i}(1)=1$\emph{. }The
regional hazard rate $\Phi_{i}$ rises when region $i$ is a laggard:
larger technology gaps create more profitable imitation/adoption opportunities
and stronger R\&D incentives. That is exactly what $G_{i}(\cdot)$
captures: an advantage in backwardness (\citealp{BENHABIB2005935}).

For simplicity, assume that the complexity term is given by $\xi\left(c\right)=\frac{1}{\delta}\frac{\alpha}{a_{\text{max}}}.$\footnote{Constant returns to scale with respect to research effort $\alpha$
may be hard to justify empirically but substantially simplifies the
analysis.} We have the following result. 
\begin{thm}
\label{thm:convergence}In the long-run, regional relative qualities
are given by 
\begin{equation}
\bar{a}_{1}=\bar{a}_{2}=\bar{a}\equiv\frac{F_{1}+F_{2}}{F_{1}+F_{2}+\frac{2}{\alpha\delta}\ln\delta\left[z_{1}F_{1}+z_{2}F_{2}\right]},\label{eq:relativequalityss}
\end{equation}
where $F_{i}\equiv F_{i}(z_{i},z_{j})$. The frontier growth rate
is given by 
\begin{equation}
g=\frac{\ln\delta}{\alpha\delta}\bar{a}\left(z_{1}F_{1}+z_{2}F_{2}\right).\label{eq:frontiergrowthrate}
\end{equation}
\end{thm}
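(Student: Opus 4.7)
The plan is to exploit the striking symmetry built into (\ref{eq:relativequalitymotionfinal}): after the size-offset term $D_i$ has absorbed the home bias, both regional ODEs share the \emph{same} effective reset hazard $\tfrac{1}{2}(\Phi_{1}+\Phi_{2})$ and the \emph{same} frontier drift $g$. Setting $d\bar a_i/dt=0$ and rearranging yields
\[
\bar a_i \;=\; \frac{\tfrac{1}{2}(\Phi_{1}+\Phi_{2})}{\tfrac{1}{2}(\Phi_{1}+\Phi_{2})+g},
\]
whose right-hand side carries no index $i$. By Assumption \ref{assu:Time=00003D002013scale-separation:-over}, the conditional steady state of the fast subsystem is reached before $\Vtz$ moves, so the symmetric identity $\bar a_{1}=\bar a_{2}\equiv\bar a$ is an exact statement along the slow manifold. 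Interior invariance (so the common value lies in $(0,1)$) is already guaranteed by the boundary-drift argument preceding the theorem, so no extra work is needed there.

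Next I would use the symmetric state to simplify $\Phi_i$. At $\bar a_j/\bar a_i=1$ the catching-up factor collapses to $G_i(1)=1$, and the assumed degree-one homogeneity of $F_i(\cdot,\cdot)$ combined with $A_k=z_k\bar a/c$ from (\ref{eq:aggregatequality}) allows $\bar a/c$ to be pulled outside the spillover function: $F_i(A_i,A_j)=(\bar a/c)\,F_i(z_i,z_j)$. The maintained form $\xi(c)=c/\delta$ then cancels the $c$ and collapses the hazard to $\Phi_i=(\bar a/\delta)\,F_i(z_i,z_j)$. Substituting into $g=(\ln\delta)(n_{1}\Phi_{1}+n_{2}\Phi_{2})$ with $n_i=z_i/\alpha$ yields (\ref{eq:frontiergrowthrate}) directly.

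To close, I plug the two expressions for $\Phi_i$ and $g$ into the scalar steady-state equation displayed above. A common factor of the form $\bar a/\delta$ drops from both terms in the denominator, and the resulting equation is linear in $\bar a$; solving gives (\ref{eq:relativequalityss}). The main (and really the only) subtlety is bookkeeping: one must keep careful track of the interplay between the homogeneity-driven factor $\bar a/c$, the $c$ implicit in $\xi$, and the fact that $\Phi_i$ depends on $\bar a$ through $A_i=z_i\bar a/c$. It is precisely this set of cancellations that turns what looks like an implicit fixed-point problem for $\bar a$ into a linear one. Everything else — interior invariance, $G_i(1)=1$, and the scale-separation justifying the reduction to a conditional steady state — is either structural or already established in the text.
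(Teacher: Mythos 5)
Your route is essentially the paper's own (Appendix A): impose the steady state of (\ref{eq:relativequalitymotionfinal}), observe that $\bar{a}_{i}=\tfrac{\frac{1}{2}(\Phi_{1}+\Phi_{2})}{\frac{1}{2}(\Phi_{1}+\Phi_{2})+g}$ carries no regional index so $\bar{a}_{1}=\bar{a}_{2}$, then use degree-one homogeneity of $F_{i}$, $G_{i}(1)=1$, $A_{i}=z_{i}\bar{a}/c$ and $\xi(c)=c/\delta$ to collapse the hazard to $\Phi_{i}=\tfrac{\bar{a}}{\delta}F_{i}(z_{i},z_{j})$, use $n_{i}=z_{i}/\alpha$ to obtain $g$, and solve for $\bar{a}$. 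One substantive difference: the paper does not stop at the fixed-point characterization, it also proves global convergence of the ratio $x=\bar{a}_{1}/\bar{a}_{2}$ to $1$ via the Lyapunov function $V=\tfrac{1}{2}(x-1)^{2}$. Your substitute, citing Assumption 2, only freezes $\Vtz$ on the fast time scale; it does not by itself show that the fast quality dynamics actually reach the conditional steady state, which is what ``in the long-run'' requires. The missing step is short (the ratio dynamics are $\dot{x}\propto(1-x)$ and the scalar equation is of logistic type), but it should be supplied rather than asserted.

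A second, bookkeeping point: carrying out exactly the cancellation you describe -- $\Phi_{1}+\Phi_{2}=\tfrac{\bar{a}}{\delta}(F_{1}+F_{2})$, $2g=\tfrac{2\ln\delta}{\alpha\delta}\bar{a}\left(z_{1}F_{1}+z_{2}F_{2}\right)$, then dividing out the common factor $\bar{a}/\delta$ -- yields $\bar{a}=\tfrac{F_{1}+F_{2}}{F_{1}+F_{2}+\frac{2}{\alpha}\ln\delta\left[z_{1}F_{1}+z_{2}F_{2}\right]}$, i.e.\ with $\tfrac{2}{\alpha}\ln\delta$ rather than the $\tfrac{2}{\alpha\delta}\ln\delta$ appearing in (\ref{eq:relativequalityss}). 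So the claim that ``solving gives (\ref{eq:relativequalityss})'' is not literally what your stated steps deliver; the same factor-of-$\delta$ tension exists between the paper's own (\ref{eq:phiFi})--(\ref{eq:frontiergrowthrate}) and (\ref{eq:relativequalityss}), so you should flag the discrepancy explicitly instead of asserting the match.
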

\begin{proof}
See Appendix A. 
\end{proof}
Theorem \ref{thm:convergence} establishes that the step-size $\delta$
decreases the average relative quality, while total research effort
per firm $\alpha$ increases it. Crucially though, the average quality
level depends on the spatial distribution $\Vtz$ and how it affects
the innovation rate through the spillover function $F_{i}.$

The innovation rate for a variety $s$ in region $i$ becomes 
\begin{equation}
\Phi_{i}=\frac{\bar{a}}{\delta}F_{i}(z_{i},z_{j};\boldsymbol{b}),\label{eq:phiFi}
\end{equation}
where $\bar{a}$ is given by (\ref{eq:relativequalityss}) and the
frontier growth rate $g$ is given by (\ref{eq:frontiergrowthrate}).
The cross-sectional stationary distribution of relative quality is
given by the power-law $H_{i}(a)=a^{\theta}$ on $[0,1]$, with $\theta=\frac{\upsilon}{g}$
and $\upsilon=\frac{\Phi_{1}+\Phi_{2}}{2}$.\footnote{To see this, note that, between resets, $a(t)=e^{-gT}$, where $T\sim\text{Exp}(\rho)$
is the time since the last reset. We have $H(a)=\Pr(e^{-gT}\leq a)=\Pr(T\geq e^{\frac{1}{g}\ln\left(\frac{1}{a}\right)})=e^{-\frac{\upsilon}{g}\ln\left(\frac{1}{a}\right)}=a^{\frac{\upsilon}{g}},$
with $\upsilon=\frac{\Phi_{1}+\Phi_{2}}{2}$.} The effective reset hazard $\upsilon$ implies that knowledge created
anywhere diffuses quickly enough that, on average, each region benefits
from half of total innovation arrivals. 

Note how, through $F_{i}(\Vtz)$, the knowledge-stock-based view of
regional spillovers is equivalent to a researcher-pool-base view.
Thus, spillovers can be expressed in terms of the (spatially weighted)
distribution of researchers across regions.

\subsection{Regional spillovers}

We now introduce a specific functional form for the spillover function.
Specifically, we set 
\begin{align}
 & F_{i}(\Vtz)=\gamma\left[bz_{i}+(1-b)z_{j}\right]=F(z_{i}),
\end{align}
where $b\in(0,1)$ is the weight of local knowledge contribution on
the innovation rate, capturing the relative importance of intra- versus
inter-regional knowledge spillovers, as discussed in Section~\ref{subsec:spillover-mechanisms},
and $\gamma>0$ captures the overall intensity of regional spillovers.
When $b>\tfrac{1}{2}$, domestic knowledge or effort plays a dominant
role. When $b<\tfrac{1}{2}$, innovation is driven more by external,
inter-regional sources. In what follows, we shall refer to $b$ as
the \emph{relative intensity/weight of local spillovers. }We define
\begin{align}
 & F(z)=\gamma\left[bz+(1-b)(1-z)\right] & z\in[0,1],\label{eq:regionalspillover}
\end{align}
so that $F_{1}(\Vtz)=F(z_{1})$ and $F_{2}(\Vtz)=F(1-z_{1})$.

The rationale for this specification is as follows. Analogously to
the interpretation of \citet{Berliant-Fujita-RSUE2012}, we implicitly
assume that each region holds its own set of ideas (or culture). Therefore,
production of knowledge depends on the amount of ``within region''
interaction among researchers, but also on the interaction with researchers
hailing from a different region. A common mechanism for the interaction
between researchers from different regions is the mobility of individuals
and the trade or transfer of goods, which, in one way or another,
carry production related knowledge with them \citep{Doring01052006}.
The interaction between agents both within a region and across different
regions generates knowledge spillovers that foster innovations. The
spatial nature of regional spillovers depends on the composition and
types of knowledge that mainly characterizes the innovation processes
in a region (cf.\@ Section 3).

If the degree of sectoral specialization is relatively homogeneous
within regions, then regional spillovers depend directly on the spatial
distribution of agents. In turn, the impact of the spatial distribution
on innovation depends on whether knowledge spillovers are more localized
or more globalized. A higher $b$ means that spillovers are stronger
with a higher pool of researchers living in the same region. If $b>\frac{1}{2}$,
$F'(z)>0$, and we say that local spillovers dominate. If $b<\frac{1}{2}$,
then $F'(z)<0$, and we say that global spillovers dominate. Thus,
when local (global) spillovers dominate, knowledge spillovers constitute
an additional agglomeration (dispersion) force.

Higher global spillovers may be associated with a leapfrogging mechanism
whereby more industrialized regions fail to benefit from breakthroughs
in innovation because they are too reliant on ``old'', more incremental
technologies, as argued by \citet{brezis1993}. Since dispersive forces
such as congestion in innovation are stronger in the larger region,
this creates opportunities for firms to locate in the smaller region,
which eventually catches up with the other region and promotes a more
balanced spatial distribution. Knowledge spillovers induce a dispersion
force in this case. On the other hand, a high local spillovers are
associated with the fact that innovative production facilities with
a high intensity of tacit knowledge and frequency of direct communication
tend to cluster together as they benefit from a higher proximity between
agents \citep{Doring01052006}.\footnote{The need for high geographical proximity might be mitigated by a higher
level of economic integration between regions as it reduces the distance
decay effect, making knowledge spillovers more effective across regions.} Although the existing empirical literature supports localized spillovers,
the ongoing globalization of innovation is bound to promote the inter-regional
transfer and diffusion of knowledge and thus calls for a theoretical
and empirical refinement of the spatial nature of innovation \citep{carlino2015agglomeration,miguelez2018relatedness}.

\subsection{Indirect utility under the regional spillover}

We first provide a convenient representation of the indirect utility
based on one variable. Since $z_{1}$ completely characterizes $\Vtz=(z_{1},z_{2})=(z_{1},1-z_{1})$,
we use $z\in[0,1]$ as the state variable, so that $z_{1}=z$ and
$z_{2}=1-z$.

The innovation rate in region $1$ is given by: 
\begin{equation}
\Phi_{1}(z)=\frac{\gamma}{\delta}\left[bz+(1-b)(1-z)\right]\bar{a}.\label{eq:ratesimplecase}
\end{equation}

\noindent The steady-state average relative quality in (\ref{eq:relativequalityss})
becomes 
\[
\bar{a}(z)=\frac{1}{1+\frac{2}{\alpha\delta}(\ln\delta)\left[b(1-2z)^{2}+2(1-z)z\right]}.
\]
It can be shown that $\frac{d\bar{a}}{dz}<(>)0$ for $z>(<)\frac{1}{2}$
when $b>\frac{1}{2}$. Therefore, when local spillovers dominate,
the average distance to the frontier is maximized when $z=\frac{1}{2}$
and is minimized at $z\in\{0,1\}$. If global spillovers dominate,
we have the converse case. As for the growth rate of the frontier
quality, $g$ in (\ref{eq:frontiergrowthrate}), it becomes 
\[
g=\omega\left[1-\frac{1}{b(1-2z)^{2}-2(z-1)z+1}\right],
\]
where $\omega=\frac{\ln\delta}{\delta\alpha}$. It is easily observable
that $\frac{dg}{dz}>(<)0$ for $z>(<)\tfrac{1}{2}$ when $b>\tfrac{1}{2}$.
Hence, the growth rate of $a_{\text{max}}$ is maximized for $z\in\{0,1\}$
and minimized when $z=\tfrac{1}{2}$ if spillovers are mainly local.
The converse case is true when $b<\tfrac{1}{2}.$

From Theorem \ref{thm:convergence}, and plugging (\ref{eq:ratesimplecase})
into (\ref{eq:indirectutilityrelativequality}) , the long-run indirect
utility in region $1$ is 
\begin{align}
v_{1}(z)= & \frac{\mu}{\sigma}F(z)\left(\dfrac{\frac{\lambda}{2}+z}{z+\phi(1-z)}+\phi\dfrac{\frac{\lambda}{2}+1-z}{\phi z+1-z}\right)+\dfrac{\mu}{\sigma-1}\ln\left[z+\phi(1-z)\right]+\eta,\label{eq:indirectutilityregion1}
\end{align}
where $F(z)$ is given by \eqref{eq:regionalspillover} and 
\[
\eta=-\mu\left(\frac{\beta\sigma}{\sigma-1}\right)-\mu+\frac{\mu}{\sigma-1}\ln\left(\frac{\bar{a}}{c}\right)+\bar{B}.
\]

\noindent If local spillovers dominate and region $1$ is larger,
an increase in $z$ amplifies the nominal wages through $F(z)$ but
also increases the cost-of-living through $\bar{a}$. The converse
is true when spillovers are mainly global and region $1$ is smaller.
However, since average relative qualities are equalized across regions,
the difference between regional indirect utilities only depend on
$F_{1}$ and $F_{2}$. This setting is enough to provide detailed
and interesting insights throughout the next sections without sacrificing
analytical tractability.

\section{Long-run equilibria: researchers and firms}

\label{sec_lrequ}

In the long-run, researchers are free to migrate between regions.
In doing so, they choose the region that offers them the highest indirect
utility. The long-run spatial distribution thus depends on the utility
differential: 
\begin{equation}
\Delta v(z)=v_{1}(z)-v_{2}(z).\label{eq:utility differential}
\end{equation}
Under Assumption \ref{assu:Time=00003D002013scale-separation:-over},
quality adjusts ``instantaneously'' relative to migration, so along
migration paths we evaluate relative quality levels at its conditional
steady state. For each $z$, the quality block has a unique explicit
conditional steady state, symmetric across regions, $\bar{a}_{1}=\bar{a}_{2}=\bar{a}(z)$,
given by (\ref{eq:relativequalityss}). Let the migration law of researchers
be 
\[
\dot{z}=\kappa\Delta v\big(z;\bar{a}_{1},\bar{a}_{2}\big),\qquad\kappa>0.
\]
Evaluating at the conditional steady state yields the reduced form
\begin{equation}
\dot{z}=\kappa\Delta v\big(z;\bar{a}(z),\bar{a}(z)\big)\equiv\kappa\Delta v(z).\label{eq:migrationdynamics}
\end{equation}
As discussed, the term $\eta$ in (\ref{eq:indirectutilityregion1})
cancels when subtracting utilities, the quality level $\bar{a}$ does
not affect the indirect utility differential $\Delta v$. We thus
neutralize the effect of quality and preserve the structural symmetry
across regions. Asymmetries in innovation stem uniquely from the knowledge
spillover functions $F_{1}$ and $F_{2}$. As a result, the spatial
outcomes of our model are solely determined by pecuniary factors and
by the regional spillover functions $F_{1}=F(z)$ and $F_{2}=F(1-z)$.
In other words, the long-run spatial distribution of researchers and
firm is a result of endogenous agglomeration mechanisms operating
through market factors and the economic geography of knowledge spillovers.

We follow \citet{Castro_2021} in the characterization of equilibria
and their stability. There are two kinds of long-run equilibria which
should be dealt with separately. 
\begin{enumerate}
\item \emph{Agglomeration }of all researchers in a single region $z^{*}=\left\{ 0,1\right\} $
is an equilibrium if and only if $\Delta v(1)\geq0,$ or, equivalently,
$\Delta v(0)\leq0$. 
\item \emph{Dispersion} of researchers $z^{*}\in\left(0,1\right)$ is an
equilibrium if and only if $\Delta v(z^{*})=0$. If $z^{*}=\frac{1}{2}$
it corresponds to \emph{symmetric dispersion.} Otherwise, it is called
\emph{asymmetric}. 
\end{enumerate}
Equilibria are stable if, after a perturbation such that $z=z^{*}\pm\epsilon$,
with $\epsilon>0$ small enough, the utility differential $\Delta v(z)$
becomes such that agents go back to their place of origin, i.e., $z=z^{*}$.
A sufficient condition for stability of agglomeration is $\Delta v(1)>0$
(or $\Delta v(0)<0$). A sufficient condition for the stability of
dispersion is that $\Delta v^{\prime}(z^{*})<0$. When equilibria
are \emph{regular} (resp. $\Delta v(1)\neq0$ and $\Delta v^{\prime}(z^{*})\neq0$),
these conditions are also necessary.\footnote{In models that are well-behaved, the non-existence of irregular long-run
equilibria holds in a full measure subset of a suitably defined parameter
space.}

\subsection{Existence and multiplicity}

Our first result regards the multiplicity of long-run equilibria.
Given symmetry across regions, we focus on the case whereby region
$1$ is either the same size or is larger than region $2$, i.e.,
$z\in\left[\frac{1}{2},1\right]$.

Symmetric dispersion $z^{*}=\frac{1}{2}$ is called an \emph{invariant
pattern}, because it is a long-run equilibrium for the entire parameter
range \citep{aizawa2020break5}. Next, we have the following result
regarding possible equilibria for $z\in\left(\frac{1}{2},1\right].$ 
\begin{prop}
There are at most two equilibria for $z\in\left(\frac{1}{2},1\right].$ 
\end{prop}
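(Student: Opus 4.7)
The strategy is to factor out the symmetric-dispersion root at $z=1/2$ and then bound the remaining zeros of the reduced function. Equilibria in $(1/2,1]$ are of two kinds: interior asymmetric dispersions (roots of $\Delta v$ in $(1/2,1)$) and the boundary agglomeration $z^{*}=1$ (an equilibrium iff $\Delta v(1)\ge 0$). It therefore suffices to show that $\Delta v$ has at most one interior root on $(1/2,1)$, which together with the at-most-one boundary equilibrium gives the bound of two.

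First, I exploit the linearity of $F$, which implies $F(z)-F(1-z)=\gamma(2b-1)(2z-1)$ and $F(z)+F(1-z)=\gamma$. Combined with the swap identity $\Omega_{2}(z)=\Omega_{1}(1-z)$ and direct computation, I obtain
\[
F(z)\Omega_{1}(z)-F(1-z)\Omega_{2}(z)=\frac{\gamma(2z-1)\,M(z)}{2\,p(z)q(z)},
\]
where $p(z)=(1-\phi)z+\phi$, $q(z)=1-(1-\phi)z$, and $M(z)$ is a quadratic in $z$ symmetric about $z=1/2$. Moreover, $\ln[p(z)/q(z)]=(2z-1)\Lambda(z)$ with $\Lambda(z)>0$ symmetric about $z=1/2$. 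Hence $\Delta v(z)=(2z-1)G(z)$ with $G(z)=\tfrac{\mu\gamma\,M(z)}{2\sigma\,p(z)q(z)}+\tfrac{\mu\,\Lambda(z)}{\sigma-1}$, and the interior roots of $\Delta v$ on $(1/2,1)$ coincide with the zeros of $G$ there.

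Next, I pass to $v=(2z-1)^{2}\in(0,1]$. The rational piece $M(v)/[p(v)q(v)]$ is a linear-over-linear function of $v$ with no pole on $[0,1]$, and the transcendental piece $\Lambda(v)=2\theta+\tfrac{2}{3}\theta^{3}v+\tfrac{2}{5}\theta^{5}v^{2}+\cdots$ (with $\theta=(1-\phi)/(1+\phi)$) is strictly positive, strictly increasing, and strictly convex on $(0,1]$. The rational piece is either (a) strictly increasing and convex or (b) strictly decreasing and concave in $v$, depending on the sign of a parameter-dependent constant. In case (a), $G(v)$ is a sum of two strictly increasing functions, hence strictly monotone, and admits at most one zero on $(0,1]$. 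In case (b), $G'(v)$ equals a strictly-decreasing negative function plus a strictly-increasing positive function, so it vanishes at most once; thus $G$ itself has at most two zeros on $(0,1]$. A boundary-sign argument shows that whenever two interior zeros arise one must have $G(1)<0$, which rules out the boundary agglomeration equilibrium, so the combined count in $(1/2,1]$ never exceeds two.

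The main obstacle is case (b): the sum of a concave and a convex function is not automatically monotone, so bounding the interior zeros of $G$ requires tracking the signs of $G(0)$ and $G(1)$ explicitly as functions of $b$, $\phi$, $\lambda$, $\sigma$, $\mu$, and $\gamma$, and tying them to the presence or absence of the boundary equilibrium. The factorization $\Delta v(z)=(2z-1)G(z)$ together with the symmetry-about-$1/2$ structure of $M$ and $\Lambda$ keeps this sign analysis tractable, but it is the essential calculation of the proof.
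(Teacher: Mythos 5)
Your reduction $\Delta v(z)=(2z-1)G(z)$ with $v=(2z-1)^{2}$ is correct, and case (a) is sound; the proof fails, however, exactly where you yourself locate the ``essential calculation.'' First, in case (b) the step ``$G'(v)$ equals a strictly-decreasing negative function plus a strictly-increasing positive function, so it vanishes at most once'' is a non sequitur: a decreasing plus an increasing function need not be monotone and can vanish many times (e.g. $A(v)=-(1+v)$, $B(v)=1+v+\varepsilon\sin(\omega v)$ with $\varepsilon\omega<1$). In your setting the rational piece has $R'(v)=K\,(1-\theta^{2}v)^{-2}$ with constant $K<0$, so ``$G'$ has at most one zero'' is equivalent to the level-crossing statement that $\Lambda'(v)\,(1-\theta^{2}v)^{2}$ meets the constant $c_{1}|K|/c_{2}$ at most once on $(0,1]$; you neither state nor prove such a monotonicity property, and it is not obvious (near $\theta=1$ this function falls from $\tfrac{2}{3}\theta^{3}$ at $v=0$ to $0$ at $v=1$, and its shape for intermediate $\theta$ is governed by competing terms). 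Without it, $G$ could a priori have three zeros on $(0,1]$. Second, the ``boundary-sign argument'' --- that two interior zeros of $G$ force $G(1)<0$ and hence exclude agglomeration --- is only asserted; if instead $G$ had two interior zeros with $G(1)>0$, one would have two asymmetric equilibria plus agglomeration, i.e. three equilibria on $(\tfrac12,1]$, which is precisely what the proposition rules out. Since you flag this computation as essential but do not perform it, the argument is incomplete at its decisive point.

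For comparison, the paper proves the statement by differentiating $\Delta v$ directly: the denominator of $\tfrac{d\Delta v}{dz}$ is positive and the numerator $P(z)$ is a quartic polynomial, so $\Delta v$ has at most four turning points and hence at most five zeros on $[0,1]$; since $z=\tfrac12$ is always a zero and, by the antisymmetry $\Delta v(1-z)=-\Delta v(z)$, the remaining zeros come in pairs, at most two lie in $(\tfrac12,1]$ (agglomeration being counted through $\Delta v(1)\ge 0$). If you wish to keep your route, the same polynomial fact repairs case (b): by antisymmetry $P$ is symmetric about $z=\tfrac12$, hence an even quartic in $u=2z-1$ and a quadratic in $v$, so $\Delta v'$ changes sign at most twice on $(\tfrac12,1]$ and no convexity claim about $\Lambda'$ is needed; the accounting at the boundary $z=1$, which your stricter bookkeeping makes explicit, would still require the sign analysis of $G(1)$ that you postponed.
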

\begin{proof}
See Appendix B.1. 
\end{proof}
We can be more precise regarding the existence of asymmetric dispersion
equilibria with the following result. 
\begin{prop}
An asymmetric dispersion equilibrium $z\equiv z^{*}\in\left(\frac{1}{2},1\right]$
exists if $b\in\left(\max\left\{ 0,\tilde{b}\right\} ,\hat{b}\right)$,
where:{\small{} 
\[
\tilde{b}\equiv\frac{\gamma(\sigma-1)(2z-1)\left[(z-1)z\left(\phi^{2}-1\right)+\phi^{2}\right]+\sigma\left[z(\phi-1)+1\right]\left[z(\phi-1)-\phi\right]\ln\left[\frac{z(\phi-1)+1}{z(1-\phi)+\phi)}\right]}{\gamma(\sigma-1)(2z-1)(\phi+1)\left[2(z-1)z(\phi-1)+\phi\right]},
\]
}and 
\[
\hat{b}\equiv\frac{1+\phi^{2}}{(1+\phi)^{2}}.
\]
\end{prop}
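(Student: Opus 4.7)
The core idea is that the equilibrium condition $\Delta v(z^*) = 0$ is affine in $b$ once the built-in symmetry $v_2(z) = v_1(1-z)$ is exploited. The plan is to invert this linear equation to obtain $b$ as an explicit function of $z^*$, and then reduce the existence question to an elementary range analysis via the intermediate value theorem.

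First I would rewrite $\Delta v(z) = v_1(z) - v_1(1-z)$ using \eqref{eq:indirectutilityregion1}. With $E_1(z) = z + \phi(1-z)$, $E_2(z) = (1-z) + \phi z$, and market-access term $B(z) = (\lambda/2 + z)/E_1 + \phi(\lambda/2 + 1-z)/E_2$, the equation reads
\[
\tfrac{\mu}{\sigma}\bigl[F(z) B(z) - F(1-z) B(1-z)\bigr] + \tfrac{\mu}{\sigma-1}\ln\tfrac{E_1(z)}{E_2(z)} = 0.
\]
The key algebraic move is the identity $F(z) - F(1-z) = \gamma(2b-1)(2z-1)$ together with $F(z) + F(1-z) = \gamma$, which splits $F(z)B(z) - F(1-z)B(1-z)$ into a piece linear in $b$ proportional to $(2b-1)(2z-1)[B(z)+B(1-z)]$, plus a $b$-free piece proportional to $\gamma[B(z)-B(1-z)]$. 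Because on $z \in (1/2, 1]$ the coefficient of $b$ is strictly positive, I would solve uniquely for $b = \tilde{b}(z; \phi, \sigma, \gamma, \lambda)$. Using the identifications $z(\phi-1)+1 = E_2$, $z(1-\phi)+\phi = E_1$, $[z(\phi-1)+1][z(\phi-1)-\phi] = -E_1 E_2$, and $2(z-1)z(\phi-1)+\phi = \phi + 2z(1-z)(1-\phi)$, the closed form should collapse to the expression displayed in the statement.

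Second, with $\tilde{b}(\cdot)$ continuous on $(1/2, 1]$, the existence claim reduces to identifying its range. Direct substitution at $z = 1$ gives the lower endpoint $\tilde{b}$, which can turn negative for small $\phi$, hence the $\max\{0, \tilde{b}\}$ truncation. The upper endpoint $\hat{b}$ arises as $\lim_{z \to (1/2)^+} \tilde{b}(z)$: numerator and denominator both vanish through the common factor $(2z-1)$ and through the logarithm $\ln(E_2/E_1)$, whose leading Taylor coefficient at $z=1/2$ is proportional to $(1-\phi)/(1+\phi)$. A single L'H\^opital step combines the polynomial and logarithmic contributions, and after the cross-cancellations the scale-free ratio $\hat{b} = (1+\phi^2)/(1+\phi)^2$ survives. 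Finally, by continuity and the intermediate value theorem, every $b \in (\max\{0, \tilde{b}\}, \hat{b})$ is attained at some $z^* \in (1/2, 1]$ with $\tilde{b}(z^*) = b$, which by construction satisfies $\Delta v(z^*) = 0$ and is therefore an asymmetric dispersion equilibrium.

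The hard part will be the L'H\^opital limit at $z = 1/2$: a priori the polynomial terms in the numerator carry dependence on $\gamma$, $\sigma$, and $\lambda$, and the proof must exhibit the exact cancellations that leave only the clean $\phi$-dependent ratio $(1+\phi^2)/(1+\phi)^2$. A secondary item is signing $\tilde{b}'(z)$ on $(1/2, 1]$ so that the image of $\tilde{b}(\cdot)$ is the full open interval rather than a proper subset; this step is routine but algebraically tedious.
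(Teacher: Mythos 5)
Your plan misreads the structure of the result and the algebra does not go through as claimed. The equilibrium condition is indeed affine in $b$, but if you solve $\Delta v(z)=0$ for $b$ at \emph{fixed} $\lambda$ you obtain a function that still contains $\lambda$ (it enters through $B(z)+B(1-z)$ and $(1-z)B(z)-zB(1-z)$, and these dependences do not cancel), whereas the $\tilde b$ displayed in the statement is $\lambda$-free. So your ``closed form collapses to the displayed expression'' step fails. The paper's proof goes the other way: it solves $\Delta v(z)=0$ for the \emph{immobile mass} $\lambda$, obtaining $\lambda=\lambda^{*}(z;b,\phi,\gamma,\sigma)$, and reads existence as ``there is an admissible $\lambda>0$ for which the given $z\in(\tfrac12,1]$ is an equilibrium.'' In that construction $\hat b$ is the value of $b$ at which the coefficient of $\lambda$ (the denominator $c_{4}$) vanishes, i.e.\ the vertical asymptote of $\lambda^{*}$, and $\tilde b$ is the unique zero of $\lambda^{*}(z)$ in $b$ (equivalently, the solve-for-$b$ value at $\lambda=0$); monotonicity $\partial\lambda^{*}/\partial b>0$ then gives $\lambda^{*}(z)>0$ exactly on $b\in(\max\{0,\tilde b\},\hat b)$. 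This also explains why the interval's lower endpoint legitimately depends on $z$: the statement is for a fixed candidate $z$, not a fixed $\lambda$, so your final intermediate-value step (``every $b$ in the interval is attained at some $z^{*}$ with $\tilde b(z^{*})=b$'') answers a different question.

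Your identification of $\hat b$ is also quantitatively wrong: $\hat b$ is \emph{not} $\lim_{z\to(1/2)^{+}}\tilde b(z)$. Expanding at $z=\tfrac12+\epsilon$, the log term is $\approx-\tfrac{4\epsilon(1-\phi)}{1+\phi}$ and one finds
\[
\lim_{z\to(1/2)^{+}}\tilde b(z)=\frac{\tfrac{1}{2}\gamma(\sigma-1)\left(1+3\phi^{2}\right)+\sigma\left(1-\phi^{2}\right)}{\gamma(\sigma-1)(1+\phi)^{2}},
\]
which depends on $\gamma$ and $\sigma$ and does not reduce to $(1+\phi^{2})/(1+\phi)^{2}$; no cancellation of the kind you anticipate occurs. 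The upper bound $\hat b$ comes from the sign switch of the $\lambda$-coefficient in the equilibrium condition (for $b>\hat b$ one gets $\lambda^{*}(z)<0$, hence no admissible worker mass), not from a boundary value of $\tilde b(\cdot)$. To repair the argument you would need to (i) solve for $\lambda$ rather than $b$, (ii) obtain $\hat b$ from $c_{4}=0$, (iii) obtain $\tilde b$ from $\lambda^{*}(z)=0$ using $\partial\lambda^{*}/\partial b>0$, and (iv) handle the case split in $\gamma$ (via $\gamma_{c}$) that determines whether $\tilde b$ is positive, as in Appendix B.2.
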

\begin{proof}
See Appendix B.2 
\end{proof}

\begin{figure}[!h]
\begin{centering}
\subfloat[Stable symmetric dispersion: $\phi=0.1$.]{\begin{centering}
\includegraphics[scale=0.6]{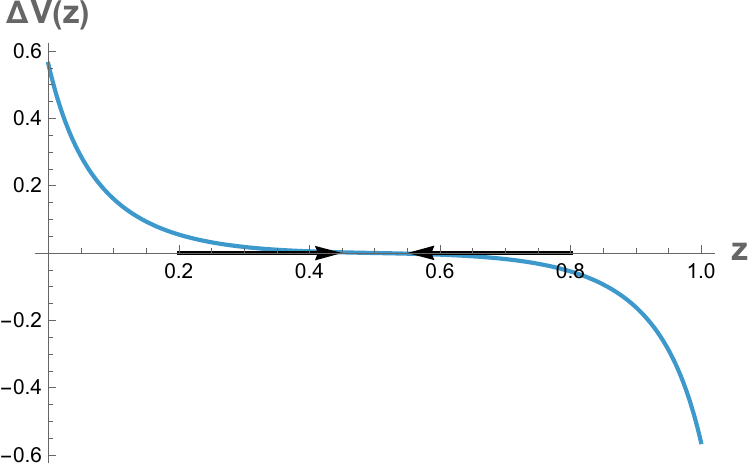} 
\par\end{centering}
\label{subfigure1a}

}\subfloat[Stable asymmetric dispersion: $\phi=0.3$.]{\begin{centering}
\includegraphics[scale=0.6]{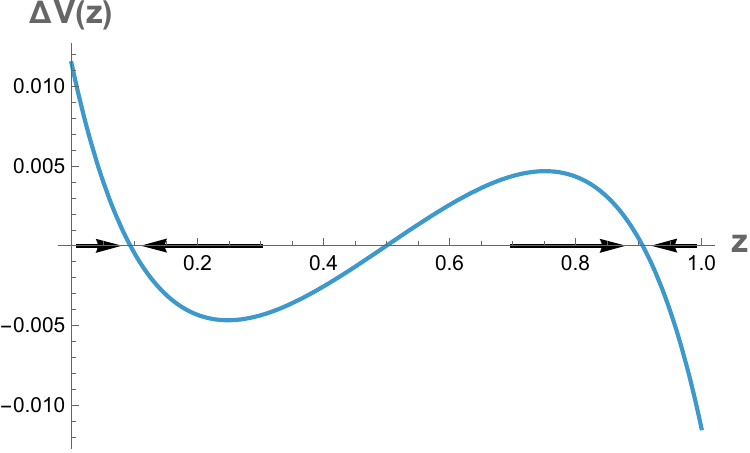} 
\par\end{centering}
\label{subfigure1b}

}
\par\end{centering}
\begin{centering}
\subfloat[Stable agglomeration: $\phi=0.38$]{\begin{centering}
\includegraphics[scale=0.6]{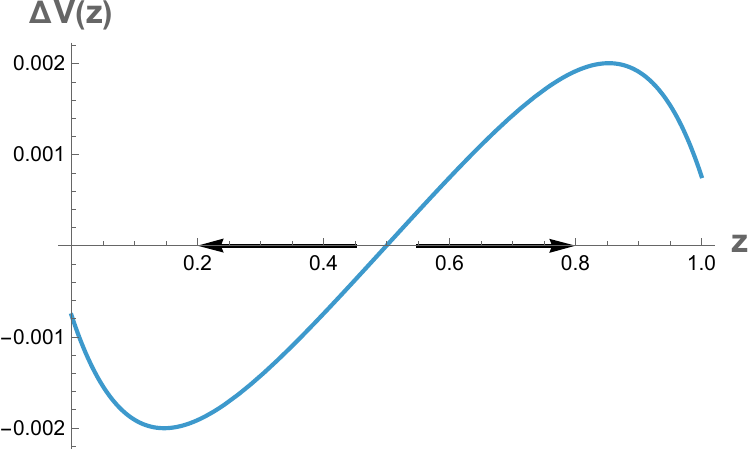} 
\par\end{centering}
\label{subfigure1c}

}\subfloat[Stable dispersions: $\phi=0.4$.]{\begin{centering}
\includegraphics[scale=0.6]{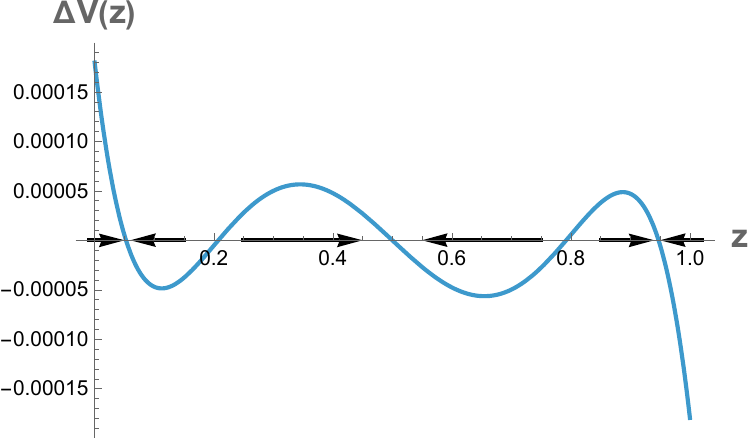} 
\par\end{centering}
\label{subfigure1d}

}
\par\end{centering}
\begin{centering}
\subfloat[Stable symmetric dispersion: $\phi=0.8$.]{\begin{centering}
\includegraphics[scale=0.6]{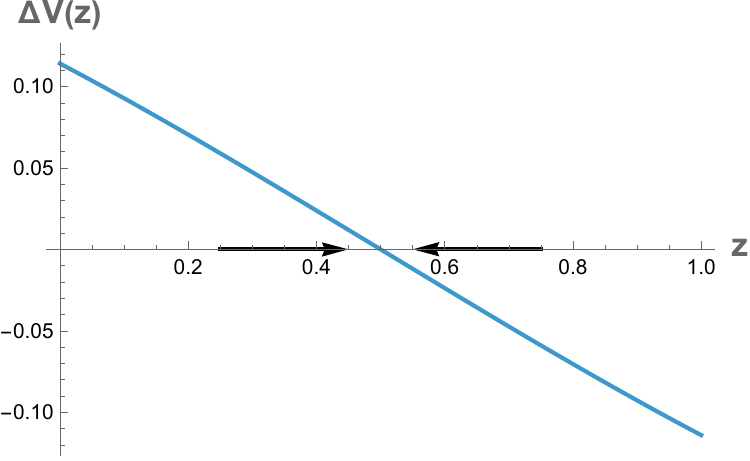} 
\par\end{centering}
\label{subfigure1e}

}
\par\end{centering}
\caption{Long-run equilibria and their stability as $\phi$ increases.}
\label{fig:Figure1} 
\end{figure}

Figure~\ref{fig:Figure1} shows one scenario with five qualitatively
different cases, with varying freeness of trade $\phi$, for the case
when inter-regional spillovers dominate, $b\in(0,\frac{1}{2})$, regarding
existence of the model's long-run spatial distribution for the parameter
values $\left(\lambda,\gamma,\sigma,b\right)=\left(2,1,5,0.342\right)$.
These examples exhaust all qualitative possibilities. As a prelude
to the forthcoming Section, Figure~\ref{fig:Figure1} also numerically
depicts the local stability of each equilibrium, which is to be analysed
analytically in greater detail in Section 6.2.

In Figure~\ref{subfigure1a}, only symmetric dispersion exists and
is stable for a very small $\phi$. For a higher trade freeness we
have one stable asymmetric dispersion for $z\in\left(\frac{1}{2},1\right]$
as portrayed in Figure~\ref{subfigure1b}. For a greater $\phi$,
Figure~\ref{subfigure1c} shows that the asymmetric dispersion equilibrium
disappears and symmetric dispersion becomes unstable, whereas agglomeration
becomes stable. For an even greater $\phi$, Figure~\ref{subfigure1d}
illustrates an example of two long-run dispersion equilibria $z^{*}$
for $z\in\left(\frac{1}{2},1\right]$ and symmetric dispersion $z=\frac{1}{2}$,
whereby we can observe that both symmetric dispersion and the more
agglomerated asymmetric dispersion equilibrium are locally stable,
whereas the less agglomerated asymmetric dispersion equilibrium is
unstable. The economy re-disperses and agglomeration does not exist
in this particular case. However, as the trade freeness increases
further, symmetric dispersion remains stable and all other equilibria
disappear, as demonstrated by Figure\ \ref{subfigure1e}.

When inter-regional spillovers dominate, $b\in(0,\frac{1}{2})$, the
model accounts for a ``bell-shaped'' relationship between economic
integration and agglomeration (\citealp{Fujita2013}), whereby firms
are initially dispersed, then start to agglomerate in a single region
as the trade freeness increases, but then find it worthwhile to relocate
to the peripheral regions in order to benefit from higher expected
profits due to the sizeable pool of researchers in the core which
increases the chance of innovation in the periphery. In other words,
when inter-regional spillovers dominate, they constitute an additional
dispersion force whose strength becomes relatively higher as economic
integration increases and leads to the vanishing of agglomeration
forces.

The case when intra-regional spillovers dominate, $b\in(\frac{1}{2},1)$,
is much less diversified and can be accounted for resorting to a subset
of the pictures from Figure \ref{fig:Figure1}, but with $b>\tfrac{1}{2}$
and different values of $\phi$ (although still increasing). The history
as economic integration increases is as follows. For a very low trade
freeness, symmetric dispersion is the only stable equilibrium as in
Figure~\ref{subfigure1a}. For an intermediate value of $\phi$,
one asymmetric dispersion equilibrium arises which is the only stable
one and becomes more asymmetric as $\phi$ increases further. This
is akin to the picture in Figure~\ref{subfigure1b}. Finally, the
asymmetric dispersion equilibrium gives rise to stable full agglomeration
in one single region once $\phi$ becomes very high. This is illustrated
in Figure~\ref{subfigure1c}. In other words, when intra-regional
interaction is relatively more important, knowledge spillovers are
more localized, and thus constitute an additional agglomeration force.

In the forthcoming Sections, we will analytically and numerically
study in greater detail the local stability of the spatial distributions
and the qualitative change in the model's structure as economic integration
increases.

\subsection{Stability}

\subsubsection{Agglomeration}

Regarding agglomeration, using (\ref{eq:indirectutilityregion1})
in (\ref{eq:utility differential}), we have that it is stable if:
\[
\Omega\equiv\frac{\gamma\left[(b-1)(\lambda+2)\phi^{2}+2b(\lambda+1)\phi+(b-1)\lambda\right]}{2\sigma\phi}-\frac{\ln\phi}{\sigma-1}>0.
\]
The second term is positive. Hence, agglomeration is always stable
if the first term is also positive: 
\[
b>b_{s}\equiv\frac{(\lambda+2)\phi^{2}+\lambda}{(\phi+1)\left[(\lambda+2)\phi+\lambda\right]}.
\]
It is easy to check that $b_{s}<\tfrac{1}{2}$ if $\phi\in\left(\frac{\lambda}{\lambda+2},1\right)$,
which means that, if $\phi\in\left(\frac{\lambda}{\lambda+2},1\right)$
and $b>\frac{1}{2}$, agglomeration is stable. In any case, we can
conclude that agglomeration is always stable if the intensity of local
spillovers is extremely high.

Let us now define as \emph{sustain point} \citep{fujita1999spatial},
a value of $\phi$ such that $\mathcal{S}(\phi)=0.$ We have the following
result relating the freeness of trade and the relative weight of local
spillovers. 
\begin{prop}
If $b<\tfrac{1}{2}$, there exist at most two sustain points, $\phi_{s1}$
and $\phi_{s2},$ and agglomeration is unstable for $\phi\in\left(0,\phi_{1s}\right)\cup\left(\phi_{2s},1\right)$
and stable for $\phi\in(\phi_{1s},\phi_{2s})$. If $b>\frac{1}{2}$,
there exists a unique sustain point $\phi_{s1}$ and agglomeration
is unstable for $\phi\in(0,\phi_{1s})$ and stable if $\phi\in\left(\phi_{1s},1\right)$. 
\end{prop}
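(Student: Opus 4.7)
The plan is to study the function $\mathcal{S}(\phi)\equiv\Omega(\phi)$, where $\Omega$ is the agglomeration-stability indicator already derived above the proposition. Sustain points are the zeros of $\mathcal{S}$ on $(0,1)$, and agglomeration is (locally) stable whenever $\mathcal{S}(\phi)>0$. The entire argument reduces to counting zeros of $\mathcal{S}$ via its boundary behavior and the monotonicity of its derivative.

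The first step is to pin down the boundary values. As $\phi\to 0^{+}$, expanding $\mathcal{S}(\phi)$ shows that the term $\gamma(b-1)\lambda/(2\sigma\phi)$ diverges to $-\infty$ at rate $1/\phi$ (using $b<1$), while $-\ln\phi/(\sigma-1)$ diverges to $+\infty$ only at rate $\ln(1/\phi)$; hence $\mathcal{S}(\phi)\to -\infty$. Substituting $\phi=1$ and simplifying gives
\[
\mathcal{S}(1)=\frac{\gamma(\lambda+1)(2b-1)}{\sigma},
\]
so $\operatorname{sgn}\mathcal{S}(1)=\operatorname{sgn}(2b-1)$. This already distinguishes the two regimes in the proposition.

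The second step, which is the main technical hinge, is to show that $\mathcal{S}$ has a \emph{unique} interior critical point and that it is a maximum. I would introduce the auxiliary function
\[
g(\phi)\equiv\phi^{2}\mathcal{S}'(\phi)=\tfrac{\gamma(b-1)}{2\sigma}\bigl[(\lambda+2)\phi^{2}-\lambda\bigr]-\tfrac{\phi}{\sigma-1},
\]
whose sign coincides with that of $\mathcal{S}'(\phi)$ on $(0,1)$. Direct evaluation gives $g(0)=\gamma(1-b)\lambda/(2\sigma)>0$ and $g(1)=\gamma(b-1)/\sigma-1/(\sigma-1)<0$. Moreover, since $b<1$,
\[
g'(\phi)=\tfrac{\gamma(b-1)(\lambda+2)\phi}{\sigma}-\tfrac{1}{\sigma-1}<0
\]
for all $\phi>0$, so $g$ is strictly decreasing. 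Hence $g$ has exactly one zero $\phi^{*}\in(0,1)$, and $\mathcal{S}'$ changes sign from positive to negative there, making $\phi^{*}$ the unique interior maximizer of $\mathcal{S}$.

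The final step is to count crossings. If $b>1/2$, $\mathcal{S}$ rises from $-\infty$ to a maximum and then falls to the positive value $\mathcal{S}(1)>0$; strict monotonicity on each side of $\phi^{*}$ implies exactly one zero, lying in $(0,\phi^{*})$, giving the unique sustain point $\phi_{s1}$ with $\mathcal{S}>0$ on $(\phi_{s1},1)$. If $b<1/2$, $\mathcal{S}$ rises from $-\infty$ and falls back to $\mathcal{S}(1)<0$; there are at most two zeros $\phi_{s1}\le\phi_{s2}$, and whenever both exist $\mathcal{S}>0$ strictly between them and $\mathcal{S}<0$ outside, which is the claimed stability pattern. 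The main potential obstacle is ruling out more than two critical points of $\mathcal{S}$; this is resolved cleanly by the linearity of $g'$ in $\phi$ together with $b<1$, which forces $g$ to be strictly monotone and thus $\mathcal{S}'$ to have a single sign change.
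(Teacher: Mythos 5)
Your proposal is correct and follows essentially the same route as the paper's Appendix B.3: compute the boundary values $\Omega(\phi)\to-\infty$ as $\phi\to 0^{+}$ and $\Omega(1)=\gamma(2b-1)(\lambda+1)/\sigma$, show $\Omega$ is single-peaked on $(0,1)$ by a sign analysis of its derivative, and count zeros accordingly. The only cosmetic difference is that you establish the single sign change of $\Omega'$ via the strictly decreasing auxiliary function $g(\phi)=\phi^{2}\Omega'(\phi)$ with $g(0)>0>g(1)$, whereas the paper solves the same quadratic explicitly and locates its unique root $\phi^{+}\in(0,1)$; both yield the identical unimodality conclusion.
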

\label{prop3} 
\begin{proof}
See Appendix B.3. 
\end{proof}
The result in Proposition 3 suggests that an intermediate level of
economic integration favours agglomeration if the interaction with
foreign researchers is relatively more important for innovation. By
contrast, if the intra-region interaction of researchers is more important,
agglomeration is possible when the freeness of trade is high enough.

\subsubsection{Symmetric dispersion}

Regarding symmetric dispersion $z^{*}=\frac{1}{2}$, using (\ref{eq:indirectutilityregion1})
in (\ref{eq:utility differential}) we can say that it is stable if:
\begin{equation}
\mathcal{B}\equiv\gamma(\sigma-1)\left[2b(\lambda+1)(\phi+1)^{2}-(2\lambda+3)\phi^{2}-2\lambda-1\right]+2\sigma\left(1-\phi^{2}\right)<0.\label{eq:stability symmetric dispersion}
\end{equation}
In fact, it is always unstable if the first term is positive, i.e.
if: 
\[
b>b_{d}\equiv\frac{(2\lambda+3)\phi^{2}+2\lambda+1}{2(\lambda+1)(\phi+1)^{2}}.
\]

\noindent This means that if weight of local contribution is prohibitively
high, symmetric dispersion is surely unstable.

We can observe that $\mathcal{B}$ in (\ref{eq:stability symmetric dispersion})
is a second degree polynomial in $\phi$ with at most two zeros, i.e.,\emph{
break points }$\phi_{b1}$ and $\phi_{b2}$, with $\phi_{b1}<\phi_{b2}$
and has a negative leading coefficient. Therefore, if both break points
exist, we have that symmetric dispersion is stable for $\phi\in(0,\phi_{b1})\cup(\phi_{b2},1)$
and unstable for $\phi\in\left(\phi_{b1},\phi_{b2}\right)$. The expressions
for the break points, along with the conditions for their existence,
are provided in Appendix B.4. We have the following result. 
\begin{prop}
For $b>\frac{1}{2},$ symmetric dispersion is stable for $\phi\in(0,\phi_{b1})$
and unstable for $\phi\in(\phi_{b1},1)$, provided that $b$ is not
too high (otherwise it is always unstable). If $b<\frac{1}{2},$ symmetric
dispersion is stable for $\phi\in(0,\phi_{b1})\cup(\phi_{b2},1)$
and unstable for $\phi\in(\phi_{b1},\phi_{b2}),$ provided that $b$
is not too low (otherwise it is always stable). 
\end{prop}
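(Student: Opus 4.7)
The plan is to analyze the stability indicator $\mathcal{B}(\phi)$ from \eqref{eq:stability symmetric dispersion} as a quadratic polynomial in the freeness of trade $\phi$, with $b,\gamma,\sigma,\lambda$ fixed. Expanding, I would write $\mathcal{B}(\phi) = B_{2}\phi^{2} + B_{1}\phi + B_{0}$ with
\[
B_{2} = \gamma(\sigma-1)[2b(\lambda+1)-(2\lambda+3)] - 2\sigma,\quad B_{1} = 4b(\lambda+1)\gamma(\sigma-1),\quad B_{0} = \gamma(\sigma-1)[2b(\lambda+1)-(2\lambda+1)] + 2\sigma.
\]
Since $b<1$ gives $2b(\lambda+1)<2\lambda+3$, the leading coefficient $B_{2}$ is unconditionally negative, so the parabola opens downward. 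A direct substitution yields the pivotal identity
\[
\mathcal{B}(1) = 4\gamma(\sigma-1)(\lambda+1)(2b-1),
\]
whose sign coincides with that of $b-\tfrac{1}{2}$. Because stability of symmetric dispersion is equivalent to $\mathcal{B}(\phi)<0$, the proposition reduces to a sign analysis of a downward parabola, organized around whether $b\gtrless\tfrac{1}{2}$.

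For $b>\tfrac{1}{2}$, we have $\mathcal{B}(1)>0$. A downward-opening parabola that is positive at $\phi=1$ admits at most one root in $(0,1)$: it has exactly one when $\mathcal{B}(0)<0$ (then $\mathcal{B}<0$ on $(0,\phi_{b1})$ and $\mathcal{B}>0$ on $(\phi_{b1},1)$), and none when $\mathcal{B}(0)\ge 0$ (then $\mathcal{B}>0$ throughout $(0,1)$, since the vertex $\phi_{v}=-B_{1}/(2B_{2})>0$ forces any real root to lie outside $[0,1]$ in this endpoint configuration). Setting $\mathcal{B}(0)=0$ yields the threshold $b^{\mathrm{high}} = \tfrac{2\lambda+1}{2(\lambda+1)} - \tfrac{\sigma}{\gamma(\sigma-1)(\lambda+1)}$; the first half of the proposition then follows, with ``$b$ too high'' meaning $b\ge b^{\mathrm{high}}$.

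For $b<\tfrac{1}{2}$, we have $\mathcal{B}(1)<0$. Under the natural autarky-stability regime $\mathcal{B}(0)<0$, a downward parabola negative at both endpoints has two roots in $(0,1)$ if and only if its maximum is positive, equivalently if and only if the discriminant $\Delta(b):=B_{1}^{2}-4B_{2}B_{0}$ is positive (the vertex condition $\phi_{v}\in(0,1)$ is automatic, since $B_{1}>0$, $B_{2}<0$ give $\phi_{v}>0$, and a routine algebraic check yields $\phi_{v}<1$ whenever $b<\tfrac{1}{2}$). The key simplification is that the $b^{2}$-coefficients of $B_{1}^{2}$ and of $4B_{2}B_{0}$ both equal $16(\lambda+1)^{2}\gamma^{2}(\sigma-1)^{2}$ and therefore cancel, so $\Delta(b)$ is \emph{linear and strictly increasing} in $b$. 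Combined with $\Delta(0)<0$ under the regime $\mathcal{B}(0)<0$ and $\Delta(\tfrac{1}{2})>0$ (at $b=\tfrac{1}{2}$, $\phi=1$ is a root and Vieta gives a second root $B_{0}/B_{2}$ distinct from $1$), there is a unique threshold $b^{\mathrm{low}}\in(0,\tfrac{1}{2})$ with $\Delta(b^{\mathrm{low}})=0$. For $b\in(b^{\mathrm{low}},\tfrac{1}{2})$ two break points $\phi_{b1}<\phi_{b2}$ exist in $(0,1)$ and $\mathcal{B}>0$ precisely on $(\phi_{b1},\phi_{b2})$; for $b\in(0,b^{\mathrm{low}})$ (``$b$ too low''), $\Delta<0$, so $\mathcal{B}<0$ on all of $\mathbb{R}$ and symmetric dispersion is stable for every $\phi$.

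The main obstacle is the bookkeeping ensuring that roots of $\mathcal{B}$ genuinely lie in $(0,1)$ rather than outside, and that $b^{\mathrm{high}}$ actually falls in $(\tfrac{1}{2},1)$ under the relevant parameter regime (which requires, e.g., $\gamma(\sigma-1)\lambda>2\sigma$). These are pure sign-and-magnitude arguments on polynomials of low degree, whose cleanliness relies entirely on the cancellation that makes $\Delta(b)$ linear. Explicit closed-form expressions for $\phi_{b1},\phi_{b2}$ and for the thresholds $b^{\mathrm{low}},b^{\mathrm{high}}$ follow from the quadratic formula and are the routine algebra deferred to Appendix~B.4.
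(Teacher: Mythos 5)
Your proposal is correct and reaches the same conclusions with the same thresholds as the paper, but it is organized along a genuinely different route. The paper's Appendix B.4 writes down the two break points $\phi_{b1},\phi_{b2}$ in closed form via the quadratic formula and then determines when they fall in $(0,1)$ by sign analysis of the numerator and denominator of those radical expressions, introducing $\gamma_{1}$, $b_{1}$ and $b_{2}$ along the way. You instead never touch the explicit roots: you exploit the downward-opening quadratic structure of $\mathcal{B}(\phi)$ (which the paper only notes in the main text), the identity $\mathcal{B}(1)=4\gamma(\sigma-1)(\lambda+1)(2b-1)$ that makes $b=\tfrac{1}{2}$ the pivot transparently, the endpoint sign $\mathcal{B}(0)\lessgtr0$, and the fact that the discriminant is linear and increasing in $b$. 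Your thresholds coincide exactly with the paper's: $b^{\mathrm{high}}$, obtained from $\mathcal{B}(0)=0$, equals the paper's $b_{2}$, and $b^{\mathrm{low}}$, obtained from $\Delta(b)=0$, equals the paper's $b_{1}$ (the paper's radicand is $\Delta/4$). What your route buys is a cleaner explanation of \emph{why} these thresholds take the form they do and an argument free of radical bookkeeping; what the paper's route buys is the explicit formulas for $\phi_{b1}$ and $\phi_{b2}$, which it reports and reuses later (e.g.\ in the bifurcation analysis of Appendix B.6 and the generalization in Section 9). Two small points: in the sub-case $b>\tfrac{1}{2}$, $\mathcal{B}(0)\geq0$, your parenthetical vertex argument is loose — the clean statement is that a concave quadratic is nonnegative exactly on the interval between its roots, so $\mathcal{B}(0)\geq0$ and $\mathcal{B}(1)>0$ place $(0,1)$ strictly inside that interval, giving $\mathcal{B}>0$ on $(0,1)$; and your restriction to the regime $\mathcal{B}(0)<0$ for $b<\tfrac{1}{2}$ (equivalently $b<b_{2}$, which forces $\gamma>\gamma_{1}$) plays the same role as the paper's own conditions (i)–(ii), so the case where only $\phi_{b2}\in(0,1)$ exists is left outside the dichotomy in both treatments.
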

\begin{proof}
See Appendix B.4. 
\end{proof}
This means that, when global spillovers dominate, our model accounts
for the possibility of initial agglomeration as trade integration
increases from a low level, and (complete) re-dispersion once trade
integration becomes high enough.

\subsubsection{Asymmetric dispersion}

Although we cannot find an explicit stability condition for any asymmetric
dispersion equilibrium $z^{*}\in\left(\frac{1}{2},1\right)$, we can
use equation (\ref{eq:equilibriumcondition}) that solves the equilibrium
condition $\Delta v=0$ given implicitly by $\lambda=\lambda^{*}(z)$
in the proof of Proposition 2 (Appendix B.2).\footnote{The same approach was adopted e.g. by \citet{gaspar2018agglomeration},
\citet{Gaspar-et-al-RSUE2021} and \citet{saraiva2025disentanglement}.} Then the stability condition of an asymmetric dispersion equilibrium
is given by: 
\[
\left.\frac{d\Delta v}{dz}(z^{*})\right|_{\lambda=\lambda^{*}(z)}<0.
\]

\noindent Specifically, using (\ref{eq:indirectutilityregion1}) and
differentiating (\ref{eq:utility differential}) with respect to $z$,
and evaluating at (\ref{eq:equilibriumcondition}), we get that an
asymmetric equilibrium $z^{*}\in\left(\frac{1}{2},1\right)$ is stable
if $\lambda^{*}(z)>0$ and:

\begin{align}
\mathcal{G}\equiv & (2z-1)\left(\phi^{2}-1\right)\left[(2b-1)\gamma(\sigma-1)(1-2z)^{2}-\sigma\right]+\nonumber \\
 & +\sigma\left[2z^{2}(\phi-1)^{2}-2z(\phi-1)^{2}+\phi^{2}+1\right]\ln\left[\frac{z(\phi-1)+1}{z(1-\phi)+\phi}\right]<0.\label{eq:stabasdisp}
\end{align}

\noindent We have the following result. 
\begin{prop}
If $b<\frac{1}{2}$ an asymmetric equilibrium $z^{*}\in\left(\frac{1}{2},1\right)$
is stable for a high enough weight of local contribution. If $b>\frac{1}{2}$,
an asymmetric equilibrium is always stable when it exists. 
\end{prop}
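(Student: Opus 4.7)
The plan is to view $\mathcal{G}$ as an affine function of $b$ with $(z,\phi,\gamma,\sigma)$ fixed. Since $b$ enters (\ref{eq:stabasdisp}) only through the factor $(2b-1)$, differentiating gives
\[
\frac{\partial\mathcal{G}}{\partial b}\;=\;2\gamma(\sigma-1)(2z-1)(\phi^{2}-1)(1-2z)^{2}\;=\;-\,2\gamma(\sigma-1)(2z-1)^{3}(1-\phi^{2}),
\]
which is strictly negative for $z\in(\tfrac12,1)$ and $\phi\in(0,1)$. Hence $\mathcal{G}(b)$ is strictly decreasing in $b$. If I can establish $\mathcal{G}(\tfrac12)<0$ uniformly in $(z,\phi)\in(\tfrac12,1)\times(0,1)$, both halves of the claim will follow: for $b>\tfrac12$, monotonicity yields $\mathcal{G}(b)<\mathcal{G}(\tfrac12)<0$ (and by Proposition 2 the existence interval $(\max\{0,\tilde b\},\hat b)$ contains $b>\tfrac12$ because $\hat{b}=(1+\phi^{2})/(1+\phi)^{2}>\tfrac12$ on $(0,1)$, so the stability claim is nonvacuous); for $b<\tfrac12$, continuity plus strict monotonicity yields a threshold $b_{c}(z,\phi;\gamma,\sigma)\in[0,\tfrac12)$ such that $\mathcal{G}<0$ iff $b>b_{c}$, which is exactly the sense of ``stable for a high enough weight of local contribution.''

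The main work is the inequality $\mathcal{G}(\tfrac12)<0$. At $b=\tfrac12$ the first bracket in (\ref{eq:stabasdisp}) reduces to $-\sigma$, so
\[
\mathcal{G}\bigl(\tfrac12\bigr)\;=\;\sigma(2z-1)(1-\phi^{2})\;+\;\sigma\,C(z,\phi)\,\ln\!\frac{z(\phi-1)+1}{z(1-\phi)+\phi},
\]
with $C(z,\phi):=\phi^{2}+1+2z(z-1)(1-\phi)^{2}$, which is positive on the domain. I would substitute $t:=z-\tfrac12\in(0,\tfrac12)$ and $s:=2t(1-\phi)/(1+\phi)\in\bigl(0,\tfrac{1-\phi}{1+\phi}\bigr)\subset(0,1)$. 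Two elementary identities then tidy the expression: the numerator and denominator of the log-argument are $\tfrac{1+\phi}{2}\mp t(1-\phi)$, so the log-argument equals $(1-s)/(1+s)$ and $\ln[\cdot]=-2\,\mathrm{arctanh}(s)$; and completing the square gives $C(z,\phi)=\tfrac12(1+\phi)^{2}+2t^{2}(1-\phi)^{2}$. Combined with $(2z-1)(1-\phi^{2})=s(1+\phi)^{2}$, the inequality $\mathcal{G}(\tfrac12)<0$ reduces, after dividing through by $\sigma s>0$, to
\[
(1+\phi)^{2}\;<\;\bigl[(1+\phi)^{2}+4t^{2}(1-\phi)^{2}\bigr]\Bigl(1+\tfrac{s^{2}}{3}+\tfrac{s^{4}}{5}+\cdots\Bigr),
\]
which is manifest term by term since the bracket strictly exceeds $(1+\phi)^{2}$ and the series factor strictly exceeds $1$.

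The main obstacle I anticipate is exactly this global inequality: a plain Taylor expansion around $z=\tfrac12$ only shows $\mathcal{G}(\tfrac12)=O((z-\tfrac12)^{3})$ with a strictly negative leading coefficient, yielding solely a local conclusion. The $s$-parameterization above is the nontrivial algebraic step, because it converts the mixed ``positive-minus-negative-log'' expression into a comparison between two explicit series whose sign is obvious term by term. Once this is established, the rest (monotonicity of $\mathcal{G}$ in $b$ plus continuity, and the nonvacuity of the existence interval on $(\tfrac12,\hat b)$ from Proposition 2) is routine, and both cases of the Proposition follow.
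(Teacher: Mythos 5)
Your proposal is correct, and it shares the paper's first step (writing $\mathcal{G}$ as affine in $b$ with $\partial\mathcal{G}/\partial b=-2\gamma(\sigma-1)(2z-1)^{3}(1-\phi^{2})<0$), but the key inequality is handled by a genuinely different route. The paper solves $\mathcal{G}=0$ for $b$ to obtain the explicit threshold $b_{c}$ in (\ref{eq:bcritconda}) and then proves $b_{c}<\tfrac12$ indirectly: it shows $\lim_{\phi\to1}b_{c}=\tfrac12$ and $\partial b_{c}/\partial\phi>0$, the latter via an auxiliary quantity $\mathcal{N}$ whose negativity is itself established by a monotonicity-in-$z$ argument anchored at $\mathcal{N}(z=\tfrac12)=0$. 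You instead prove the equivalent statement $\mathcal{G}\big|_{b=1/2}<0$ directly and globally: the substitution $t=z-\tfrac12$, $s=2t(1-\phi)/(1+\phi)$ turns the log into $-2\operatorname{arctanh}(s)$ with $s\in(0,1)$, the coefficient $C$ completes to $\tfrac12(1+\phi)^{2}+2t^{2}(1-\phi)^{2}$, and the inequality reduces to $(1+\phi)^{2}<\bigl[(1+\phi)^{2}+4t^{2}(1-\phi)^{2}\bigr]\operatorname{arctanh}(s)/s$, which is immediate since the bracket exceeds $(1+\phi)^{2}$ and $\operatorname{arctanh}(s)/s>1$; I verified the algebra ($C>0$, the log-argument equal to $(1-s)/(1+s)$, and $(2z-1)(1-\phi^{2})=s(1+\phi)^{2}$) and it is sound. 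Your approach is more elementary and self-contained — no boundary-limit argument and no nested sign analysis — and it delivers the uniform strict inequality on the whole open domain in one stroke; the paper's approach buys the closed-form threshold $b_{c}(z,\phi;\gamma,\sigma)$, which it then uses to phrase stability as $b>b_{c}$, whereas in your argument that threshold appears only implicitly as the zero of the affine map $b\mapsto\mathcal{G}(b)$. The remaining logical scaffolding (monotonicity in $b$ giving $\mathcal{G}(b)<\mathcal{G}(\tfrac12)<0$ for $b>\tfrac12$, the threshold characterization for $b<\tfrac12$, and the tacit conditioning on existence, i.e.\ $\lambda^{*}(z^{*})>0$) matches the paper's treatment.
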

\begin{proof}
See Appendix B.5. 
\end{proof}
\begin{figure}[!h]
\begin{centering}
\includegraphics[scale=0.75]{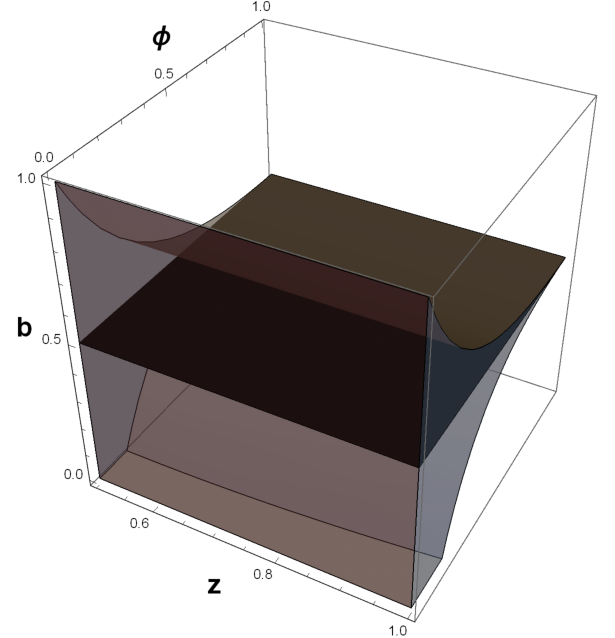} 
\par\end{centering}
\caption{Stability of asymmetric dispersion. The less opaque surface corresponds
to $\mathcal{G}<0$ and $\lambda^{*}(z)>0$ in $(z,\phi,b)$--space
for $\sigma=8$ and $\gamma=1$. The black plane corresponds to $b=\frac{1}{2}.$ }
\label{asymmdispersionfig} 
\end{figure}

Figure~\ref{asymmdispersionfig} illustrates Proposition 5 by setting
$\sigma=8$ and $\gamma=1$ and plotting the surface corresponding
to $\mathcal{A}=\left\{ (z,\phi,b):\mathcal{G}<0\cap\lambda^{*}(z)>0\right\} $.
For $b<\frac{1}{2}$, an asymmetric equilibrium may exist that is
not stable, and a higher $b$ favours its stability. If $b>\frac{1}{2}$,
an asymmetric equilibrium is always stable when it exists, but its
existence seems to be favoured by a lower $b$. In other words, an
asymmetric equilibrium exists and is stable when $b$ is close enough
to $\frac{1}{2}$.

Regarding $\phi$, a higher freeness of trade seems to disfavour the
stability of asymmetric dispersion.

\section{The impact of economic integration}

It is common in economic geography to study the qualitative change
of the spatial economy as economic integration increases. Accordingly,
we investigate the existence of bifurcations in our model and employ
the freeness of trade, $\phi$, as the bifurcation parameter. First,
suppose a break point exists, i.e., symmetric dispersion interchanges
stability for some value of the freeness of trade. We have the following
analytical result. 
\begin{prop}
At the break point $\phi=\phi_{b1}$, the symmetric dispersion undergoes
a supercritical pitchfork bifurcation. At the break point $\phi=\phi_{b2}$,
the symmetric dispersion undergoes a pitchfork bifurcation, which
may be subcritical, supercritical or degenerate.  
\end{prop}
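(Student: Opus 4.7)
The key structural fact is symmetry across regions: since $v_2(z)=v_1(1-z)$, the differential $\Delta v(z)=v_1(z)-v_1(1-z)$ is odd about $z=\tfrac{1}{2}$. Setting $u=z-\tfrac{1}{2}$, the Taylor expansion around the symmetric equilibrium contains only odd powers,
\[
\Delta v(\tfrac{1}{2}+u)=c_1(\phi)\,u+c_3(\phi)\,u^3+O(u^5),
\]
with $c_1(\phi)=\Delta v'(\tfrac{1}{2})=2v_1'(\tfrac{1}{2})$, $\Delta v''(\tfrac{1}{2})=0$, and $c_3(\phi)=\Delta v'''(\tfrac{1}{2})/6=v_1'''(\tfrac{1}{2})/3$. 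A break point is defined precisely by $c_1(\phi_b)=0$, i.e.\ by the zero-set of $\mathcal{B}$ in \eqref{eq:stability symmetric dispersion}. Because the quadratic term is automatically absent, a bifurcation at any break point is of pitchfork type; its subtype is governed by $c_3(\phi_b)$. From the normal form $\dot u=c_1 u+c_3 u^3$ one reads off: $c_3(\phi_b)<0$ yields a \emph{supercritical} pitchfork (non-trivial stable branches $u^2\simeq -c_1/c_3$ in the region where the symmetric equilibrium is unstable); $c_3(\phi_b)>0$ yields a \emph{subcritical} pitchfork (unstable branches in the region where the symmetric equilibrium is stable); $c_3(\phi_b)=0$ is degenerate, requiring the quintic term.

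The second step is to compute $v_1'''(\tfrac{1}{2})$ explicitly from \eqref{eq:indirectutilityregion1}. This is tractable because $F(z)=\gamma[bz+(1-b)(1-z)]$ is linear ($F'''\equiv0$), the log term $\tfrac{\mu}{\sigma-1}\ln[z+\phi(1-z)]$ yields $\tfrac{\mu}{\sigma-1}\tfrac{-2(1-\phi)^3}{[z+\phi(1-z)]^3}$ after three differentiations, and the profit-like term is a sum of two rational functions with denominators that are linear in $z$. Evaluating at $z=\tfrac{1}{2}$ reduces everything to a polynomial/rational expression in $(\phi,b,\lambda,\sigma,\gamma,\mu)$. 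Call the resulting expression $\mathcal{C}(\phi;b,\lambda,\sigma,\gamma)$. To analyze it at a break point, I solve $\mathcal{B}(\phi,b)=0$ linearly for $b$ (the coefficient of $b$ in $\mathcal{B}$ is $2\gamma(\sigma-1)(\lambda+1)(\phi+1)^2>0$) and substitute into $\mathcal{C}$. This yields a one-variable expression $\mathcal{C}^\ast(\phi)$ along the break-point locus that is readily signed.

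The third step is the sign analysis at $\phi_{b1}$. Here one must show $\mathcal{C}^\ast(\phi_{b1})<0$ unconditionally. The plan is to combine (i) $\phi\in(0,1)$, $\sigma>1$, $\lambda>0$, (ii) the explicit formula for $\phi_{b1}$ (the smaller root of a downward-opening quadratic in $\phi$; cf.\ Appendix~B.4), and (iii) the sign of the substituted log contribution at $z=\tfrac{1}{2}$ to certify negativity. This gives the supercritical conclusion. At $\phi_{b2}$ the same procedure yields a reduced expression whose sign is not determined by the parameter constraints, so to establish the full claim I exhibit three explicit parameter configurations $(\lambda,\sigma,\gamma,b)$, each admitting a valid $\phi_{b2}\in(0,1)$, realizing $\mathcal{C}^\ast(\phi_{b2})<0$, $>0$, and $=0$ respectively. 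Continuity in parameters then gives an open set of each kind plus a codimension-one degenerate locus, completing the classification.

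The main obstacle is purely algebraic: carrying out the third-derivative computation cleanly and, crucially, reducing the resulting expression along the break-point locus so that the sign at $\phi_{b1}$ is provably negative for \emph{every} admissible parameter vector. The symmetry argument does the conceptual heavy lifting (guaranteeing pitchfork rather than transcritical or saddle-node), but the certification of $c_3(\phi_{b1})<0$ and the exhibition of sign-changing examples at $\phi_{b2}$ require careful bookkeeping rather than a new idea.
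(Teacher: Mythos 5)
Your architecture is essentially the paper's (Appendix B.6): symmetry kills the even terms of $\Delta v$ around $z=\tfrac12$, so the break points can only produce pitchforks, and criticality is read off the cubic coefficient evaluated on the break-point locus obtained by eliminating $b$ from $\mathcal{B}=0$. The problem is that the substantive half of the proposition — supercriticality at $\phi_{b1}$ for \emph{all} admissible parameters — is exactly the step you leave as a plan ("combine (i)--(iii) to certify negativity"); nothing in your write-up actually signs the reduced cubic coefficient. The paper does this concretely: it computes $\frac{\partial^{3}f}{\partial z^{3}}\bigl(\tfrac12;\phi_{b}\bigr)\propto\varsigma$ with $\varsigma=\sigma(\phi_{b}-1)^{2}(\phi_{b}+1)-3\gamma(\sigma-1)\bigl[b(\phi_{b}+1)^{2}-\phi_{b}^{2}-1\bigr]\bigl[\lambda(\phi_{b}-1)+2\phi_{b}\bigr]$, derives explicit thresholds ($b^{*}$, $\tilde\gamma$, $\lambda\gtrless 2\phi_{b}/(1-\phi_{b})$) separating sub-, super- and degenerate cases at a generic break point, and then substitutes the break-point relation $b=b_{b}(\phi)$ to get a closed-form $\varsigma|_{b=b_{b}}$ whose sign can be certified (negative when $b_{b}>\tfrac12$, which by Appendix B.4 is the regime in which $\phi_{b1}$ is the existing break point). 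Until you produce an analogous certified inequality, the first sentence of the proposition is not proved; and for $\phi_{b2}$ you promise three parameter configurations realizing each sign but do not exhibit them (the paper instead characterizes the trichotomy through the explicit conditions above).

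Two smaller points would also need fixing. First, your heuristic (iii) points the wrong way: the third derivative of the log term is $+2(1-\phi)^{3}/[z+\phi(1-z)]^{3}$, not $-2(1-\phi)^{3}/[\cdot]^{3}$, and at $z=\tfrac12$ it generates the \emph{positive} term $\sigma(\phi_{b}-1)^{2}(\phi_{b}+1)$ in $\varsigma$, i.e.\ the cost-of-living channel pushes toward subcriticality; negativity at $\phi_{b1}$ has to come from the spillover/market-access term once the break-point condition is imposed, so the bookkeeping is not a formality. Second, reading the bifurcation off the normal form $\dot u=c_{1}u+c_{3}u^{3}$ tacitly requires the transversality condition $\frac{\partial^{2}f}{\partial\phi\partial z}\bigl(\tfrac12;\phi_{b}\bigr)\neq0$ (equivalently, that $c_{1}(\phi)$ crosses zero simply at $\phi_{b}$); the paper verifies this explicitly, and in your version it should at least be noted that it holds because $\phi_{b1}\neq\phi_{b2}$ are simple roots of the quadratic $\mathcal{B}$.
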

\begin{proof}
See Appendix B.6.  
\end{proof}
Proposition 6 asserts that a curve of asymmetric equilibria branches
from symmetric dispersion at the break point, when it exists. The
stability of the branch depends on the criticality of the bifurcation.
If the pitchfork is subcritical, the asymmetric equilibria are unstable.
If it is supercritical, the asymmetric equilibria are stable. When
the break point $\phi_{b1}$ exists, a curve of stable asymmetric
equilibria branches from symmetric dispersion at $\phi=\phi_{b1}$
and lies to its right.

We now look at some bifurcation diagrams to illustrate the possible
transition dynamics between different equilibria as trade integration
smoothly increases. To provide a complete gallery, we depict 6 qualitatively
different scenarios, keeping most parameter values constant (except
for the sixth scenario) and varying $b$, thus placing emphasis on
changes in the intensity of local spillovers. These scenarios exhaust
all the different qualitative possibilities. This can be shown through
the combination of the analysis performed in Section 6, with several
illustrations, including contour plots showing the stability of different
equilibria in $(\phi,b)$-space (see Figure~\ref{phibspace}), under
a very wide range of parameter values.\footnote{Additionally, see Figure~\ref{phibspace}, the transition between
different types of stable equilibria is qualitatively similar if we
fix the value of $\phi$ and employ $b$ as the bifurcation parameter
instead. In fact, the symmetric dispersion can be shown analytically
to also undergo a pitchfork bifurcation for a critical value of $b$.} The six scenarios analysed in this Section are as follows: 
\begin{align*}
\textrm{(i).} & \left(\lambda,\gamma,\sigma,b\right)=\left(2,0.9,8,0.33\right); & \textrm{(ii).} & \left(\lambda,\gamma,\sigma,b\right)=\left(2,0.9,8,0.338\right);\\
\\\textrm{(iii).} & \left(\lambda,\gamma,\sigma,b\right)=\left(2,0.9,8,0.339\right); & \textrm{(iv).} & \left(\lambda,\gamma,\sigma,b\right)=\left(2,0.9,8,0.35\right);\\
\\\textrm{(v).} & \left(\lambda,\gamma,\sigma,b\right)=\left(2,0.9,8,0.55\right); & \textrm{(vi).} & \left(\lambda,\gamma,\sigma,b\right)=\left(4,0.9,8,0.55\right);
\end{align*}

\begin{figure}[!h]
\begin{centering}
\subfloat[Bifurcation diagram for scenario (i).]{\includegraphics[scale=0.8]{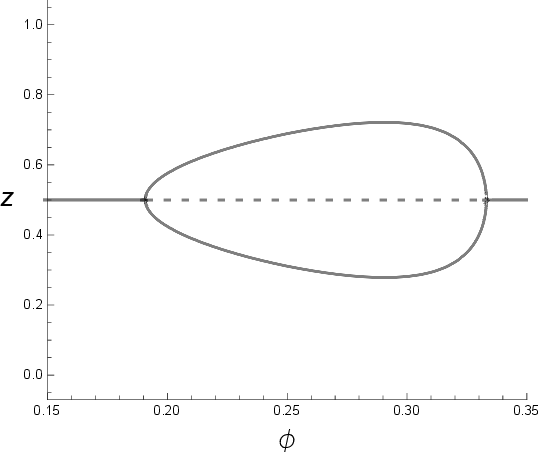}

\label{bifdiagramscenarioi}

}\subfloat[Bifurcation diagram for scenario (ii).]{\includegraphics[scale=0.8]{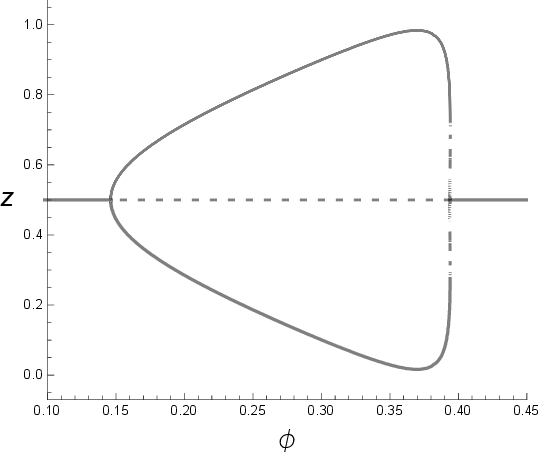}

\label{bifdiagramscenarioii} 

}
\par\end{centering}
\centering{}\subfloat[Bifurcation diagram for scenario (iii).]{\includegraphics[scale=0.8]{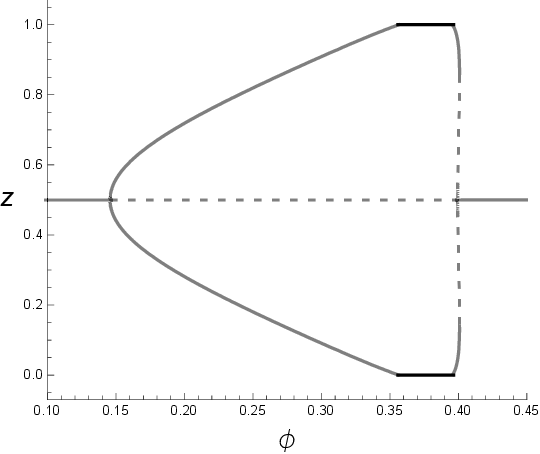}

\label{bifdiagramscenarioiii} 

}\subfloat[Bifurcation diagram for scenario (iv).]{\includegraphics[scale=0.8]{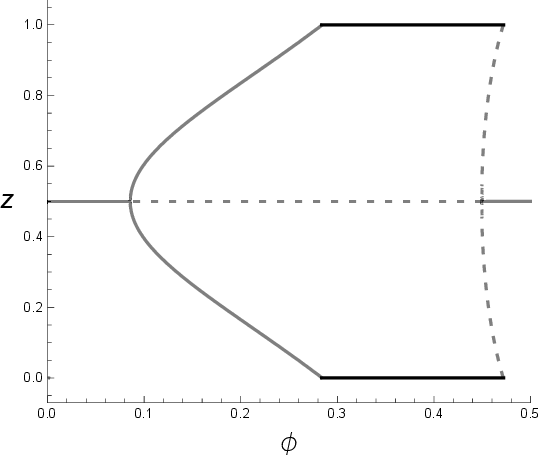}

\label{bifdiagramscenarioiv} 

}\caption{Bifurcation diagrams for $b<\tfrac{1}{2}:$ scenarios (i)-{}-(iv).
Filled lines correspond to stable equilbria and dashed lines correspond
to unstable equilibria.}
\end{figure}

\begin{figure}[!h]
\begin{centering}
\subfloat[Bifurcation diagram for scenario (v).]{\includegraphics[scale=0.8]{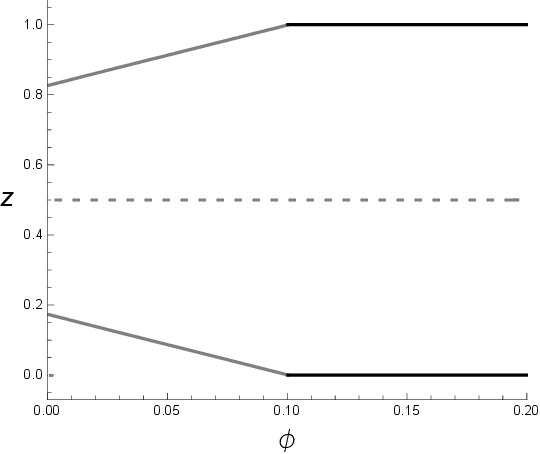}

\label{bifdiagramscenariov} 

}\subfloat[Bifurcation diagram for scenario (vi).]{\includegraphics[scale=0.8]{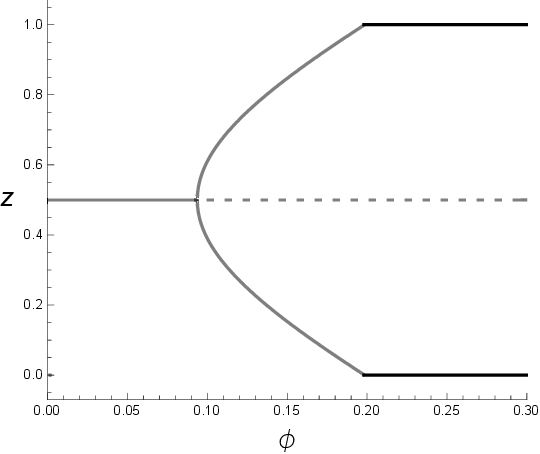}

\label{bifdiagramscenariobhigh} 

}
\par\end{centering}
\caption{Bifurcation diagrams for $b>\tfrac{1}{2}:$ scenarios (v)-{}-(vi).
Filled lines correspond to stable equilbria and dashed lines correspond
to unstable equilibria.}
\end{figure}

In scenario (i), shown in Figure~\ref{bifdiagramscenarioi}, the
intensity of local spillovers is such that within-region interaction
is relatively less important ($b=0.33$). For a low freeness of trade,
symmetric dispersion is stable because firms wish to avoid the burden
of a very costly transportation supplying to farmers from full agglomeration
in a single region. As $\phi$ increases, the economy initially agglomerates,
but then re-disperses as $\phi$ increases further. This re-dispersion
process occurs because, for a very high economic integration, firms
find it profitable to relocate to the less industrialized region in
order to benefit from the pool of researchers in the more agglomerated
region, which generates a higher innovation rate and thus higher wages
paid to researchers. Noteworthy, the turning point in the agglomeration
process happens \textit{before} industry reaches full agglomeration
in a single region, as in \citet{Pfluger-Suedekum-JUE2008}. However,
contrary to the latter, our model does not predict full agglomeration
in the entire parameter range of economic integration when the intensity
of local spillovers is low enough. Re-dispersion in scenario (i) is
more akin to geographical economic models of vertical linkages between
upstream and downstream firms by \citet{Krugman-Venables-QJE1995,venables1996equilibrium}
and \citet{puga1999rise}. However, in these models, re-dispersion
is smooth altogether and occurs when all agents are inter-regionally
\emph{immobile} and firms become too sensitive to regional cost differentials
when economic integration is very high.

In scenario (ii), illustrated by Figure~\ref{bifdiagramscenarioii},
local spillovers are just slightly stronger, and the model still accommodates
for re-dispersion. However, the re-dispersion process is not smooth
-- the economy suddenly jumps to symmetric dispersion from a fairly
asymmetric equilibrium spatial distribution.

Scenario (iii) also just slightly increases the weight of local spillovers
compared to the previous scenario (see Figure~\ref{bifdiagramscenarioiii}),
and the story of spatial outcomes as economic integration increases
is very similar, except that, in this case, full agglomeration is
stable for a small range of intermediate values of $\phi$, as predicted
by Proposition 3. The parametrization here also corresponds to that
illustrated in Figure~\ref{fig:Figure1}.

The re-dispersion processes of scenarios (ii) and (iii) are uncommon
in the literature of economic geography; rather, such jumps occur
in early models \citep{fujita1999spatial,BaldwinForslidMartinOttavianoRobertNicoud+2003}
from the state of symmetric dispersion to catastrophic agglomeration
\citep{behrens2011tempora}. The reverse discontinuous jump, i.e.,
from symmetric dispersion to partial agglomeration as trade costs
steadily decrease, has been uncovered in the model by \citet{pfluger2010size},
where \emph{all} production factors, except land, which is used both
for housing and production, are inter-regionally mobile. Their conclusions
about spatial outcomes reveal a line-symmetry of scenario (iii): as
integration increases, the economy jumps discontinuously from symmetric
dispersion to partial agglomeration, and the ensuing re-dispersion
is gradual and continuous.\footnote{In our model, the assumption that unskilled workers are immobile is
useful for tractability, as is the case of all footloose entrepreneur
models \citep{BaldwinForslidMartinOttavianoRobertNicoud+2003}. However,
we make the reasonable conjecture that immobile labour generates an
unnecessary dispersion force that changes the conclusions of our model
compared to the case of a perfectly mobile workforce only in the sense
of ``reversed'' stability as transport costs decrease, i.e. the
line-symmetry of all scenarios (i)--(vi). }

Figure~\ref{bifdiagramscenarioiv} illustrates scenario (iv) and
shows that the sudden re-dispersion process under a slightly higher
$b$ now happens from the state of full agglomeration directly to
the state of symmetric dispersion. In both scenarios (iii) and (iv),
the state of agglomeration is stable for intermediate values of economic
integration, as in \citet{robert2008offshoring}.

Noteworthy, throughout scenarios (i)--(iv), where global spillovers
dominate ($b<\tfrac{1}{2}$), the agglomeration process occurs gradually
for $\phi$ above the break point $\phi_{b1}$, as predicted by Proposition
6. At $\phi=\phi_{2b},$ the pitchfork bifurcation may be either supercritical
(e.g.\ Figure~\ref{bifdiagramscenarioi}) or subcritical (e.g.\ Figure~\ref{bifdiagramscenarioiii}).

Besides the bifurcation at symmetric dispersion, in Figures~\ref{bifdiagramscenarioii}
and \ref{bifdiagramscenarioiii} (scenarios (ii) and (iii)), a limit
point $\phi\equiv\phi_{l}\in(\phi_{b2},1)$ is discernible at which
two asymmetric equilibria, along a curve tangent to $\phi_{l}$ that
lies to its left, collide and coalesce. This suggests that in both
scenarios (ii) and (iii) the model undergoes a saddle-node bifurcation
at some asymmetric equilibrium $z^{*}\in\left(\frac{1}{2},1\right)$.
However, its existence seems impossible to determine analytically.
This kind of bifurcation also appears in the two-region footloose
entrepreneur model by \citet{Forslid-Ottaviano-JEG2003} with heterogeneous
agents analysed by \citet{Castro_2021} and also in the \citet{pfluger2004simple}
model extended to multiple regions by \citet{gaspar2018agglomeration}.
This bifurcation is associated with discontinuous jumps between some
asymmetric equilibrium other than agglomeration and the symmetric
dispersion once $\phi$ rises (falls) above (below) some threshold
level.

In scenario (v), for sufficiently strong local spillovers ($b>\tfrac{1}{2}$),
within-region interaction among researchers improves the innovation
rate enough such that wages become higher when researchers are either
partially agglomerated in one region for low values of $\phi$, or
completely agglomerated in one region for a high enough $\phi$. This
is portrayed in Figure~\ref{bifdiagramscenariov}. Scenario (v) precludes
the so-called ``no black-hole condition'' \citep{fujita1999spatial},
a condition that there always exists a region in parameter space for
which symmetric dispersion is stable. As argued by \citet{gaspar2018agglomeration},
this condition may be unwarranted if its exclusion allows for spatial
outcomes other than ubiquitous agglomeration.

We can thus conclude that a higher intensity of local spillovers is
associated with a more pronounced agglomeration during the industrialization
process, especially for intermediate values of economic integration.
If local spillovers dominate global spillovers ($b>\tfrac{1}{2}$),
re-dispersion is no longer possible because within-region interaction
among researchers is too important to make any deviation to the smaller
region worthwhile. This result is confirmed analytically in Section
9 for a more general specification of the spillover function $F_{i}(z,1-z)$
under mild assumptions.

In scenario (vi) we illustrate the qualitative change in the spatial
structure of the economy as $\phi$ increases for $b>1/2$, but with
a higher $\lambda$, since, with the parameter values of the previous
scenario, agglomeration would be ubiquitously stable (and hence uninteresting)
for higher values of $b$. In Figure~\ref{bifdiagramscenariobhigh},
we can observe a supercritical pitchfork bifurcation, as predicted
by Proposition 6 as in the model by \citet{pfluger2004simple}, where
there is no innovation. That is, for low levels of economic integration,
symmetric dispersion is stable. As $\phi$ increases, one region smoothly
becomes more and more industrialized en route to a full agglomeration
whereby that region becomes a core.

\begin{figure}[h]
\centering{}\includegraphics[scale=0.8]{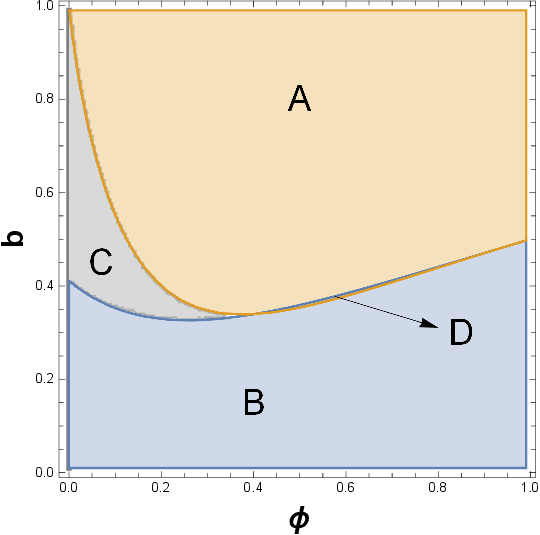} \includegraphics[scale=0.8]{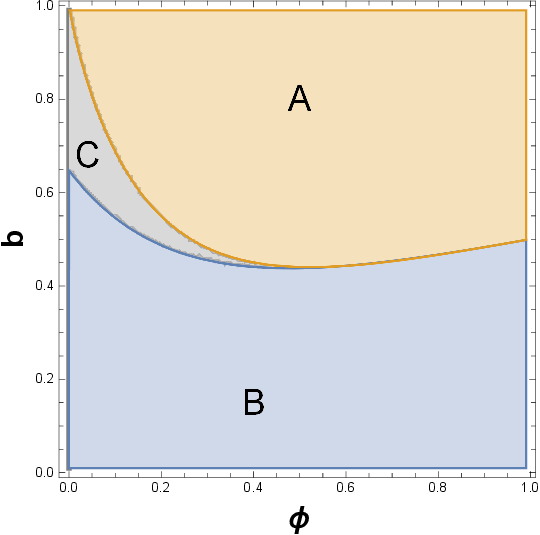}\caption{Stability region for each equilibrium in $(\phi,b)$-space. Region
$A$ denotes stable agglomeration, region $B$ denotes stable symmetric
dispersion, region $C$ denotes stable asymmetric equilbria. In region
$D$ both agglomeration and symmetric dispersion are stable. Parameter
values are $(\sigma,\gamma,\lambda)=(8,0.9,2)$, to the left, and
$(\sigma,\gamma,\lambda)=(8,0.9,4)$, to the right.}
\label{phibspace} 
\end{figure}

It is clear that the transition along and between each type of stable
equilibrium as $\phi$ increases depends on the intensity of local
spillovers $b$. We portray the stability regions of the three types
of equilibria (agglomeration, symmetric dispersion and asymmetric
dispersion) in Figure~\ref{phibspace}, in the space of $(\phi,b)\in(0,1)\times[0,1]$.
In region $A$, agglomeration is stable, which requires a high $\phi$
and a high $b$. In region $B$ we have stable symmetric dispersion,
which requires a low $b$ and either a very low or a very high value
of $\phi$ (as discussed before). Region $C$ denotes stable asymmetric
equilibria for low levels of $\phi$ and high enough levels of $b$.
There is a thin region $D$ where regions $A$ and $B$ overlap, indicating
that both agglomeration and symmetric dispersion are simultaneously
stable in that region. Finally, the region where both asymmetric dispersion
and symmetric dispersion are stable is not visually discernible, which
makes it a very unlikely outcome.

To conclude, we recall the results from Section 5.2 to analyze the
impacts of freer trade on knowledge variables. We saw that when $b>\tfrac{1}{2}$,
the growth rate of $a_{max}$ is maximized at agglomeration. Specifically,
we have 
\[
g=\frac{\omega b}{2\omega b+1},
\]
where $\omega=\frac{\ln\delta}{\delta\alpha}$. The higher the intensity
of local spillovers, the higher the growth rate $g$. The average
distance to the frontier $1-\bar{a}$ is also maximized when agglomeration
is stable. For $z=1$, regional aggregate knowledge levels are given
by $A_{2}=0$ and 
\[
A_{1}=\tfrac{1}{c}\bar{a}=\frac{a_{max}}{\alpha\left(1+2\omega b\right)}.
\]
For $b<\tfrac{1}{2},$ the growth rate $g$ is maximized at symmetric
dispersion: 
\[
g=\tfrac{1}{2}\left(\frac{\omega}{\omega+1}\right).
\]
The average distance to the frontier $1-\bar{a}$ is also maximized
when symmetric dispersion is stable. Regional aggregate knowledge
levels for $z=\tfrac{1}{2}$ are given by 
\[
A_{1}=A_{2}=\tfrac{1}{2c}\bar{a}=\frac{a_{\text{max}}}{2\alpha(\omega+1)}.
\]
The following result summarizes the trade-off between growth and average
relative quality. 
\begin{prop}
A high enough freeness of trade always yields a stable spatial distribution
that maximizes the global frontier growth rate $g$ and minimizes
the average relative quality $\bar{a}$. 
\end{prop}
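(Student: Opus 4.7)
The plan is to split by the sign of $b-\tfrac12$ and, in each regime, (a) identify the $z\in[0,1]$ that maximizes $g$ (equivalently minimizes $\bar a$) among all conceivable long-run distributions, and (b) invoke the stability results of Section~6.2 to show that this maximizer is a stable equilibrium for $\phi$ close to $1$. The knife-edge $b=\tfrac12$ makes both $g$ and $\bar a$ independent of $z$, so the conclusion is vacuous there and I may focus on $b\ne \tfrac12$.

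First, I would observe that both $g$ and $1-\bar a$ are strictly increasing functions of the single quantity $Q(z)\equiv b(1-2z)^2 + 2z(1-z)$ (which is $(z_1F_1+z_2F_2)/\gamma$ in the notation of Theorem~\ref{thm:convergence}), since $\bar a(z)=1/(1+2\omega Q(z))$ and $g(z)=\omega\gamma Q(z)/(1+2\omega Q(z))$ with $\omega=\tfrac{\ln\delta}{\alpha\delta}$. Thus maximizing $g$ and minimizing $\bar a$ both reduce to maximizing $Q$. Differentiation gives $Q'(z)=2(1-2z)(1-2b)$; combined with the boundary and midpoint values $Q(0)=Q(1)=b$ and $Q(1/2)=1/2$, this yields that $\arg\max_{z\in[0,1]} Q(z)=\{0,1\}$ when $b>\tfrac12$ and $\{1/2\}$ when $b<\tfrac12$.

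Second, I would call on Propositions~3 and~4 to handle stability for $\phi$ near $1$. For $b>\tfrac12$, Proposition~3 supplies a unique sustain point $\phi_{1s}\in(0,1)$ above which agglomeration is stable; hence for any $\phi\in(\phi_{1s},1)$ the stable spatial outcome $z\in\{0,1\}$ attains $\arg\max Q$. For $b<\tfrac12$, Proposition~4 yields either a second break point $\phi_{b2}$ above which symmetric dispersion is re-stabilized, or, when $b$ is sufficiently small, no break points at all, in which case symmetric dispersion is stable throughout $(0,1)$. In both sub-cases, $z=\tfrac12$ is a stable equilibrium for $\phi$ close to $1$, again coinciding with $\arg\max Q$.

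The main (mild) obstacle is ensuring that in the $b<\tfrac12$ case, for $\phi$ near $1$, no asymmetric equilibrium can preempt symmetric dispersion as the stable long-run outcome. This is resolved by Proposition~6: the pitchfork bifurcation at $\phi_{b2}$ either produces unstable asymmetric branches to the right of $\phi_{b2}$ (subcritical case) or stable asymmetric branches only to its left (supercritical case), so in either event $z=\tfrac12$ remains the relevant stable attractor for all $\phi\in(\phi_{b2},1)$. Combining the two cases, the $g$-maximizing (and $\bar a$-minimizing) configuration is always realized by a stable equilibrium once $\phi$ is sufficiently close to $1$, which is the claim.
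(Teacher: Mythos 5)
Your proposal is correct and follows essentially the same route as the paper: reduce both objectives to the monotone dependence of $g$ and $\bar a$ on $Q(z)=b(1-2z)^2+2z(1-z)$ (the Section~5.2 facts) and then invoke Propositions~3 and~4 to get stable agglomeration for $b>\tfrac12$ and stable symmetric dispersion for $b<\tfrac12$ when $\phi$ is close to $1$. The extra step ruling out asymmetric ``preemption'' via Proposition~6 is harmless but unnecessary, since the statement only asserts that \emph{a} stable distribution attains the maximum, not that it is the unique stable equilibrium.
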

Therefore, no matter the weight of local spillovers relative to global
spillovers, a high enough economic integration always maximizes the
growth rate of aggregate knowledge levels in the long-run and also
maximizes the time-invariant average distance to the frontier.



\section{Comparative statics}

It is worthwhile investigating analytically how changes in the intensity
of local spillovers affect the long-run spatial outcomes in the economy.
Following \citet{Castro_2021}, we say that a change in $b$ \emph{favours
agglomeration} if, due to the change: (a) symmetric dispersion may
become unstable but not stable, (b) agglomeration may become stable
but not unstable, and (c) asymmetric dispersion becomes more asymmetric.

Using (\ref{eq:indirectutilityregion1}) and (\ref{eq:utility differential}),
we have: 
\[
\frac{\partial\Delta v}{\partial b}=\frac{\gamma\mu(2z-1)(\phi+1)\mathinner{{\color{red}{\normalcolor \left\{ \lambda-4z^{2}+\phi\left[\lambda+4(z-1)z+2\right]+4z\right\} }}}}{2\sigma\left[z(\phi-1)+1\right]\left[z(1-\phi)+\phi\right]},
\]
which is positive for $z\in\left(\frac{1}{2},1\right)$. We have the
following result. 
\begin{lem}
An increase in the intensity of local spillovers $b$ favours agglomeration. 
\end{lem}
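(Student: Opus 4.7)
My plan is to derive all three components (a)--(c) of the ``favours agglomeration'' definition from the single fact, already recorded immediately above the lemma, that $\partial\Delta v/\partial b>0$ on $z\in(1/2,1]$.

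First, I would quickly verify that sign. The denominator is a product of positive quantities on $z\in(0,1)$; the prefactor $(2z-1)(\phi+1)$ is strictly positive on $(1/2,1]$ and vanishes at $z=1/2$; and the curly bracket rearranges as $\lambda+4z(1-z)+\phi\bigl[\lambda+2-4z(1-z)\bigr]$, which is strictly positive because $4z(1-z)\in(0,1]$ on $(0,1)$. Hence $\partial\Delta v/\partial b>0$ on $(1/2,1]$ and equals $0$ at $z=1/2$.

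Conditions (b) and (c) then follow immediately. For (b), regular stability of agglomeration $z^{*}=1$ is equivalent to $\Delta v(1)>0$, so $\partial\Delta v(1)/\partial b>0$ means that raising $b$ can only turn the sustain inequality from violated to satisfied, never the reverse. For (c), at any regular stable asymmetric dispersion equilibrium $z^{*}\in(1/2,1)$ we have $\Delta v(z^{*})=0$ together with $\Delta v'(z^{*})<0$, so the implicit function theorem yields
\[
\frac{dz^{*}}{db}\;=\;-\frac{\partial\Delta v/\partial b}{\Delta v'}\bigg|_{z^{*}}\;>\;0,
\]
and the equilibrium moves away from the symmetry point as $b$ rises.

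The one delicate part, and the main obstacle, is condition (a). At $z=1/2$ the derivative $\partial\Delta v/\partial b$ itself vanishes, so the sign argument above is uninformative. Instead, I would use the fact that regular stability of symmetric dispersion is equivalent to $\Delta v'(1/2)<0$, and study the cross partial $\partial^{2}\Delta v/(\partial z\,\partial b)$ at $z=1/2$. Differentiating the displayed expression for $\partial\Delta v/\partial b$ with respect to $z$, only the derivative of the prefactor $(2z-1)$ contributes at $z=1/2$; the remaining factor evaluates to a strictly positive number, so $\partial\Delta v'(1/2)/\partial b>0$, meaning that raising $b$ can only push $\Delta v'(1/2)$ from negative to positive and never the other way. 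Equivalently, one may simply differentiate the stability polynomial $\mathcal{B}$ in (\ref{eq:stability symmetric dispersion}) with respect to $b$, obtaining the manifestly positive $2\gamma(\sigma-1)(\lambda+1)(\phi+1)^{2}$. Combining (a)--(c) closes the proof.
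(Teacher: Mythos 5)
Your proposal is correct and follows essentially the same route as the paper: the key input is the sign $\partial\Delta v/\partial b>0$ on $z\in(1/2,1]$ (stated in the text just before the lemma), and the paper then simply delegates the implication for conditions (a)--(c) to Proposition 9 of \citet[p.197]{Castro_2021}. You make that cited step explicit and handle it soundly, in particular the degenerate point $z=1/2$ where $\partial\Delta v/\partial b$ vanishes, via the cross-partial $2Q(1/2)>0$ (equivalently $\partial\mathcal{B}/\partial b=2\gamma(\sigma-1)(\lambda+1)(\phi+1)^{2}>0$), together with the implicit-function-theorem argument for the shift of stable asymmetric equilibria.
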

\begin{proof}
See Proposition 9 of \citet[p.197]{Castro_2021}.  
\end{proof}
The interpretation behind this result is straightforward: a higher
$b$ implies a higher innovation rate in region $i$ when more researchers
live in region $i$. Hence,{} the expected payoff from innovation
is higher, which leads to higher wages and, thus, a higher utility
differential in region $i$. Therefore, a higher $b$ makes stability
of agglomeration (symmetric dispersion) more (less) likely for a given
value of $\phi$, and asymmetric dispersion becomes more asymmetric.

Suppose $b$ is such that symmetric dispersion is the unique stable
equilibrium for a low $\phi$ (Proposition 4), asymmetric dispersion
is the unique stable equilibrium for intermediate values of $\phi$
(Proposition 5), and agglomeration is the unique stable equilibrium
for a high $\phi$ (Proposition 3). Accordingly, the model undergoes
a supercritical pitchfork bifurcation at symmetric dispersion (Proposition
6). As illustrated in Figure~\ref{supercriticalpitchfork}, an increase
in $b$ shifts the pitchfork bifurcation leftwards, provided that
the set of stable equilibria remains unchanged. Conversely, a decrease
in $b$ promotes dispersion, which fits with the intuition that the
increasing globalization of knowledge may promote convergence across
regions \citep{carlino2015agglomeration}.


\begin{figure}[h]
\centering{}\includegraphics{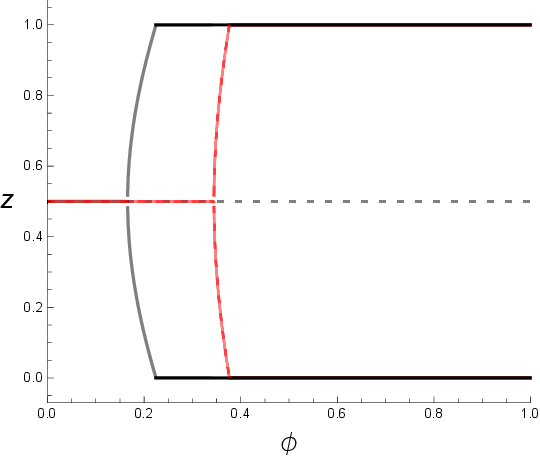}\caption{Bifurcation diagrams: (i) to the left, in black, we set $b=0.9$;
(ii) to the right, in red, we set $b=0.7$. Parameter values are $(\sigma,\gamma,\lambda)=(8,1,4).$}
\label{supercriticalpitchfork} 
\end{figure}

\section{Spillovers and re-dispersion: a more general case}

Let us now consider the more general specification for the innovation
rate in (\ref{eq:phiFi}). In region $1$, we have 
\begin{equation}
\Phi(z)=\frac{\bar{a}}{\delta}F(z;\boldsymbol{b}),\label{eq:general}
\end{equation}
where $\boldsymbol{b}$ is a vector of parameters that \emph{does}
\emph{not }include the freeness of trade $\phi$, but includes the
relative intensity of local knowledge spillovers $b\in(0,1)$. Additionally,
assume that there exists $\bar{b}\in(0,1)$ such that $F^{\prime}(z)<0$
if $b<\bar{b}$, and $F^{\prime}(z)>0$ if $b>\bar{b}$, for $z\in[\frac{1}{2},1]$.
This reflects the fact that $F(z)$ constitutes a dispersion force
in the larger region when global spillovers dominate ($b<\bar{b}$),
and an agglomeration force in the larger region when local spillovers
are stronger ($b>\bar{b}$). Again, a lower $b$ may be associated
with congestive effects in the production of knowledge in the larger
region \citep{brezis1993}, which creates opportunities for breakthrough
innovations in smaller regions. The indirect utility in region $1$
is obtained by replacing the term $\gamma\left[bz+(1-b)(1-z)\right]$
in (\ref{eq:indirectutilityregion1}) with $F(z)$.

At symmetric dispersion, making use of the fact that{} $F_{1}^{\prime}(\tfrac{1}{2})=-F_{2}^{\prime}(\frac{1}{2}),$
we have at most two break points: 
\begin{align*}
\phi_{b1}^{G} & =\frac{(\lambda+1)(\sigma-1)\left[F'\left(\frac{1}{2}\right)+2F\left(\frac{1}{2}\right)\right]-\varphi}{2\left[F\left(\frac{1}{2}\right)(\lambda+2)(\sigma-1)+\sigma\right]-(\lambda+1)(\sigma-1)F'\left(\frac{1}{2}\right)},\\
\phi_{b2}^{G} & =\frac{(\lambda+1)(\sigma-1)\left[F'\left(\frac{1}{2}\right)+2F\left(\frac{1}{2}\right)\right]+\varphi}{2\left[F\left(\frac{1}{2}\right)(\lambda+2)(\sigma-1)+\sigma\right]-(\lambda+1)(\sigma-1)F'\left(\frac{1}{2}\right)},
\end{align*}
where 
\[
\varphi=2\sqrt{2F\left(\frac{1}{2}\right)(\lambda+1)^{2}(\sigma-1)^{2}F'\left(\frac{1}{2}\right)+\left[F\left(\frac{1}{2}\right)(\sigma-1)+\sigma\right]^{2}}.
\]
Since $F(\frac{1}{2})>0,$ it can be shown that $\phi_{b2}^{G}\notin(0,1)$
if $F^{\prime}(\frac{1}{2})>0,$ i.e., if $b>\bar{b}$. Since symmetric
dispersion is stable for $\phi\in(0,\phi_{b1}^{G})\cup(\phi_{b2}^{G},1),$
we can thus conclude that the complete re-dispersion of economic activities
is not possible if local spillovers dominate. This leads to the following
result. 
\begin{prop}
If the relative intensity of local spillovers is high at symmetric
dispersion, then a (complete) re-dispersion of economic activities
for a high economic integration is not possible. 
\end{prop}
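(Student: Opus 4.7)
The plan is to show that, under the general specification \eqref{eq:general}, the break points $\phi_{b1}^{G}$ and $\phi_{b2}^{G}$ given in the text are the only values of $\phi$ at which symmetric dispersion changes stability, and that $\phi_{b2}^{G}\notin(0,1)$ whenever $F'(1/2)>0$. Since symmetric dispersion is stable for $\phi\in(0,\phi_{b1}^{G})\cup(\phi_{b2}^{G},1)$, the absence of a second break point inside $(0,1)$ rules out re-stabilization at high integration levels.

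First, I would rederive the stability condition at $z=1/2$. Starting from the indirect utility \eqref{eq:indirectutilityregion1} with $\gamma[bz+(1-b)(1-z)]$ replaced by $F(z)$, differentiate $\Delta v(z)=v_{1}(z)-v_{2}(z)$ with respect to $z$ and evaluate at $z=1/2$, using the symmetry relation $F_{1}'(1/2)=-F_{2}'(1/2)=F'(1/2)$. The numerator of $\Delta v'(1/2)$ is a quadratic in $\phi$ with negative leading coefficient (inherited from the structure of $\mathcal{B}$ in \eqref{eq:stability symmetric dispersion}), whose two roots are precisely $\phi_{b1}^{G}$ and $\phi_{b2}^{G}$. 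This step is routine algebra, essentially mirroring the derivation of Proposition~4 with $F(1/2)$ and $F'(1/2)$ in place of the constants that arise from the linear specification.

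The core step is to show $\phi_{b2}^{G}>1$ when $F'(1/2)>0$. Writing $f=F(1/2)$, $f'=F'(1/2)$, $\alpha_{0}=(\sigma-1)(\lambda+1)$ and $X=(\sigma-1)f+\sigma>0$, so that $\varphi=2\sqrt{2f\alpha_{0}^{2}f'+X^{2}}$, the numerator of $\phi_{b2}^{G}$ minus its denominator simplifies (using $\alpha_{0}/(\sigma-1)-(\lambda+2)=-1$) to
\[
N-D=2\alpha_{0}f'+(\varphi-2X).
\]
If $f'>0$, then $\varphi^{2}-4X^{2}=8f\alpha_{0}^{2}f'>0$, so $\varphi>2X$ and hence $N-D>0$. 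A brief check that the denominator $D=2[(\lambda+2)(\sigma-1)f+\sigma]-\alpha_{0}f'$ is positive in the regular case closes the strict inequality $\phi_{b2}^{G}>1$. (If $D\le 0$ the candidate $\phi_{b2}^{G}$ is even more clearly outside $(0,1)$, being negative or undefined.)

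Combining the two pieces, only $\phi_{b1}^{G}$ can lie in $(0,1)$ when $F'(1/2)>0$, so $\Delta v'(1/2)<0$ holds only on $(0,\phi_{b1}^{G})$; symmetric dispersion is unstable on the whole interval $(\phi_{b1}^{G},1)$. Therefore no further break point can restore its stability as trade becomes freer, which is precisely the statement of the proposition. The main obstacle is purely algebraic: unpacking the quadratic $\Delta v'(1/2)$ into the stated closed-form roots and confirming the sign of $\varphi-2X$; the rest follows from the invariant-pattern characterization of symmetric dispersion already established in Section~\ref{sec_lrequ}.
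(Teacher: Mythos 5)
Your proposal is correct and follows essentially the same route as the paper's own argument (which is given in the text of Section~9 rather than an appendix): take the generalized break points $\phi_{b1}^{G},\phi_{b2}^{G}$ at symmetric dispersion, show $\phi_{b2}^{G}\notin(0,1)$ whenever $F'(\tfrac{1}{2})>0$, and conclude via the characterization that symmetric dispersion is stable only on $(0,\phi_{b1}^{G})\cup(\phi_{b2}^{G},1)$. Your computation $N-D=2\alpha_{0}f'+(\varphi-2X)>0$, with $\varphi>2X$ following from $\varphi^{2}-4X^{2}=8f\alpha_{0}^{2}f'>0$, correctly supplies the algebra the paper leaves as ``it can be shown.''
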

In other words, the process of (complete) re-dispersion depends on
the dispersive or agglomerative nature of knowledge spillovers at
symmetric dispersion. When $b>\bar{b}$, knowledge spillovers are
localized and constitute an agglomeration force in the larger region,
which becomes relatively stronger as trade barriers decrease. For
a sufficiently high $\phi$, re-dispersion is impossible because there
is no incentive to relocate to the smaller region where innovation
is less likely and expected profits are lower.

The result from Proposition 8 is consistent with that of Proposition
4 where $\bar{b}=\tfrac{1}{2}$. If local spillovers dominate, symmetric
dispersion can only be stable for low values of the freeness of trade,
i.e. for $\phi\in(0,\phi_{b1}^{G})$.

We now take one step further by investigating what type of bifurcation
the symmetric dispersion undergoes at some break point $\phi_{b}\in\left\{ \phi_{b1}^{G},\phi_{b2}^{G}\right\} $.
We have the following derivatives: 
\[
\dfrac{\partial f}{\partial z}\left(\dfrac{1}{2};\phi_{b}\right)=0;\ \dfrac{\partial^{2}f}{\partial z^{2}}\left(\dfrac{1}{2};\phi_{b}\right)=0;\ \dfrac{\partial f}{\partial\phi}\left(\dfrac{1}{2};\phi_{b}\right)=0;
\]
Furthermore, we have: 
\[
\dfrac{\partial^{2}f}{\partial\phi\partial z}\left(\dfrac{1}{2};\phi_{b}\right)=\frac{8\mu}{(\phi_{b}+1)^{3}}\left\{ \frac{\gamma F\left(\frac{1}{2}\right)\left[-2\lambda(\phi_{b}-1)-3\phi_{b}+1\right]}{\sigma}+\frac{\phi_{b}+1}{1-\sigma}\right\} 
\]
which is zero if and only if $F(\frac{1}{2})=F_{b},$ with 
\[
F_{b}\equiv-\frac{\sigma(\phi_{b}+1)}{\gamma(\sigma-1)\left[2\lambda(\phi_{b}-1)+3\phi_{b}-1\right]}.
\]
If $F(\frac{1}{2})\neq F_{b},$ then symmetric dispersion undergoes
a pitchfork bifurcation (see Appendix B.6). In this case, its criticality
is determined by the sign of the third-order derivative $\frac{\partial^{3}f}{\partial z^{3}}(\frac{1}{2};\phi_{b}).$
However, without additional details on the shape of $F(z)$ it is
impossible to convey further information on these conditions and to
relate them with $b$.

We conclude that, while the shift between distinct spatial outcomes
as economic integration increases may display differences across different
values of $b$, the presence of a (complete) re-dispersion phase should
depend solely on the dispersive or agglomerative nature of knowledge
spillovers, i.e.\ on whether spillovers are mainly global or local.

\section{On freer trade and global spillovers}

So far, we have assumed that inter-regional spillovers do not depend
on the trade freeness $\phi$. This accounts for the possibility of
re-dispersion above a certain level of economic integration because
inter-regional spillovers do not depend directly on $\phi$, but become
relatively stronger compared to agglomeration forces due to market
factors as $\phi$ increases beyond a certain level. Since dispersion
forces operating through trade linkages depend on $\phi$, we have
two types of dispersion forces: one that depends on $\phi$ (market
factors) and one that does not (regional spillovers). This fits the
narrative of \citet{akamatsu2025spatialscaleagglomerationdispersion}
regarding models with different types of dispersion forces.

However, one may argue that knowledge spillovers between different
regions are stronger when there is a higher level of economic integration
because it mitigates the distance decay effect, facilitating cross-regional
communication, collaboration and knowledge diffusion between researchers
and firms through trade \citep[see e.g.][]{keller2010international,bueraoberfield2020}.
Accordingly, suppose now that the weight of inter-regional spillovers
depends on the level of the freeness of trade. Specifically, let the
spillover function be $F_{i}=bz_{i}+\phi\psi(1-b)z_{j}$, so that
the innovation rate for a variety $s$ in region $i$ is given by
\begin{equation}
\Phi_{i}^{g}=\frac{\gamma}{\delta}\left[bz_{i}+\phi\psi(1-b)z_{j}\right]\bar{a},\label{eq:withintegration}
\end{equation}
where $\psi>0$ measures the impact of integration in the intensity
of inter-regional knowledge spillovers. We set $\psi=\gamma=1$ for
simplicity and without loss of generality. Following the same reasoning
as that of the proof of Proposition 1 (see Appendix B.1), it is possible
to show that, through inspection of the utility differential $\Delta v^{g}\equiv v_{1}^{g}(z)-v_{2}^{g}(z)$,
there are at most two long-run equilibria for $z>\frac{1}{2}.$ Further,
we can show analytically that dispersion is always unstable when $b>\tfrac{1}{2}$.
We also find that when global spillovers dominate, i.e.\ $\phi>\frac{b}{(1-b)}$,
then symmetric dispersion is stable (unstable) if $\phi>(<)\phi_{b}^{g}$,\footnote{The break-point $\phi_{b}^{g}$ is the solution to $\dfrac{\partial f}{\partial z}\left(\dfrac{1}{2};\phi_{b}^{g}\right)=0$.}
where 
\[
\phi_{b}^{g}\equiv\frac{2(\lambda+1)(\sigma-1)-\sqrt{4\sigma^{2}+4(\sigma-1)^{2}+8\sigma(\sigma-1)}}{(2\lambda+4)(\sigma-1)+2\sigma}.
\]
We assume that $\lambda>\frac{\sigma}{\sigma-1},$ so that $\phi_{b}^{g}\in\left(\frac{b}{(1-b)},1\right)$.
This is analogous to the ``no black-hole condition'' found in \citet{fujita1999spatial}.

Agglomeration is stable (unstable) when $\phi<(>)\phi_{s}^{g}$,\footnote{The sustain point $\phi_{s}^{g}$ is obtained from $\Delta v(1;\phi_{s}^{g})=0$.}
where $\phi_{s}^{g}$ satisfies 
\[
\Omega^{g}\equiv\frac{\mu\left[(b-1)(\lambda+2)\phi_{s}^{g}{}^{2}+3b\lambda+2b-\lambda\right]}{2\sigma}-\frac{\mu\log(\phi_{s}^{g})}{\sigma-1}=0.
\]
The sustain point $\phi_{s}^{g}\in\left(\frac{b}{(1-b)},1\right)$
is unique. We conclude that, as trade barriers decrease from a high
level, the shift in the spatial distribution now occurs from agglomeration
towards symmetric dispersion. The intuition is quite straightforward.
When economic integration is very low, spillovers are very localized,
implying agglomeration of researchers and firms in one region. A higher
freeness of trade enhances the intensity of knowledge spillovers between
different regions, which means that firms have incentive to relocate
to the smaller region from where they can leverage the concentrated
knowledge base of the larger region and avoid congestion effects on
innovation. 
\begin{figure}[h]
\centering{}\includegraphics[scale=0.6]{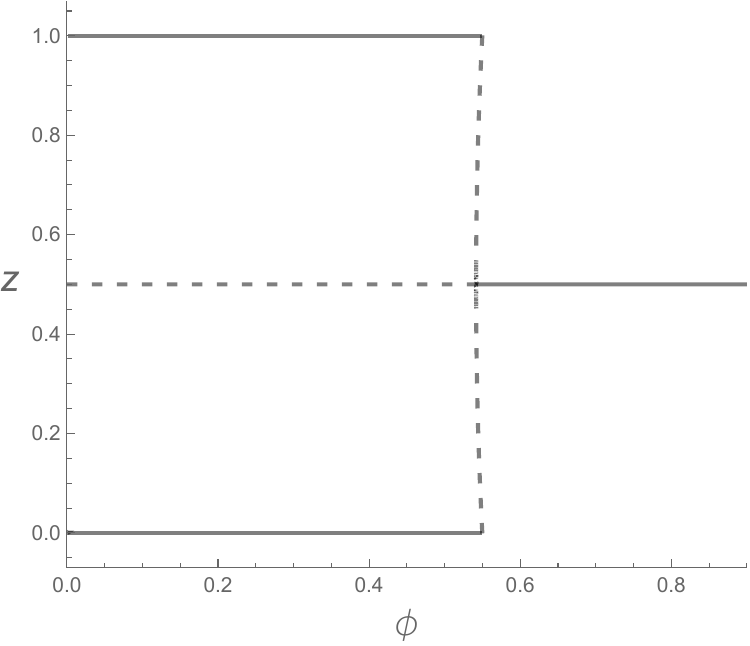}
\includegraphics[scale=0.6]{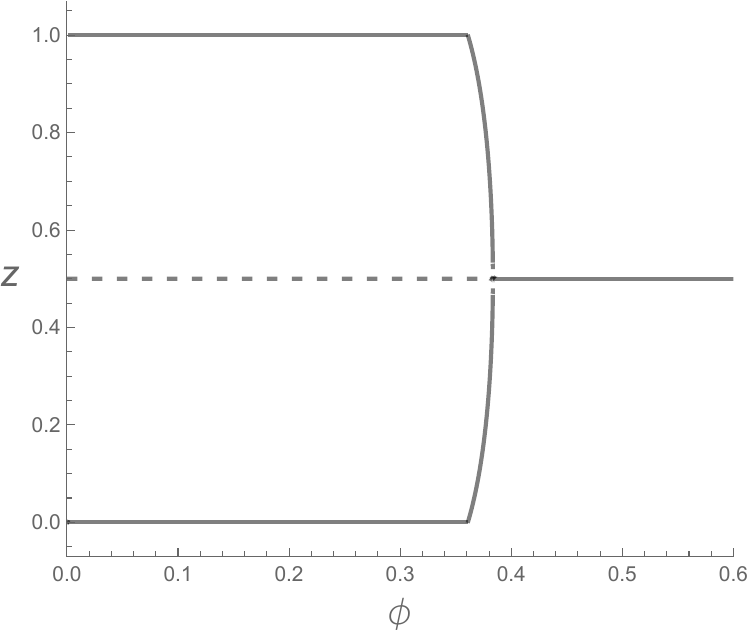}\caption{Bifurcation diagrams for $b<\tfrac{1}{2}$: (i) to the left, we set
$\lambda=2$; (ii) to the right, we set $\lambda=4$. Parameter values
are $(b,\sigma,\gamma,\psi)=(0.2,8,1,1).$}
\label{freertradepitchforks} 
\end{figure}

For a given $b$, the smoothness of the transition depends on the
fraction of agents that are mobile, captured by the term $\frac{1}{\lambda}$.
Figure \ref{freertradepitchforks} shows that the dispersion process
is sudden once $\phi>\phi_{s}^{g}$ if $\lambda=2$. If $\lambda=4$,
the dispersion process is smooth.\footnote{It can be shown analytically that the model undergoes a pitchfork
bifurcation at $\phi=\phi_{b}^{g}$. The criticality of the bifurcation
can be pinned down in terms of $b$. Above a certain threshold, the
pitchfork goes from supercritical to subcritical. Therefore, a higher
$b$ increases the likelihood of catastrophic dispersion for a high
enough $\phi$.} We conclude the following.
\begin{prop}
When global spillovers are magnified by freer trade (and $b<\tfrac{1}{2}$),
a progressive increase in economic integration leads to a monotone
dispersion process from agglomeration to symmetric dispersion.
\end{prop}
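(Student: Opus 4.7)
The plan is to read off the monotone agglomeration-to-dispersion transition from the bifurcation structure already developed in the excerpt, using the uniqueness of the sustain and break points $\phi_s^g, \phi_b^g$ together with the ``at-most-two zeros on $(1/2,1]$'' feature inherited from Proposition 1. Under the stated hypotheses, the preceding discussion provides three ingredients: (a) agglomeration is stable iff $\phi<\phi_s^g$ and $\phi_s^g$ is unique; (b) $\phi_b^g \in (b/(1-b),1)$ is unique and symmetric dispersion is stable iff $\phi>\phi_b^g$; (c) $\Delta v^g$ admits at most two zeros on $(1/2,1]$, so the stable interior equilibrium, if any, is unique.

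I would then distinguish two sub-cases according to the criticality of the pitchfork at $\phi=\phi_b^g$, which, as flagged in the preceding footnote, is pinned down by $b$. In the supercritical case one has $\phi_s^g<\phi_b^g$; a continuous branch of stable asymmetric equilibria $z^*(\phi)$ connects $z=1$ at $\phi=\phi_s^g$ to $z=1/2$ at $\phi=\phi_b^g$. Monotonicity along this branch follows from implicit differentiation of $\Delta v^g(z^*(\phi),\phi)=0$:
\[
\frac{dz^*}{d\phi} \;=\; -\,\frac{\partial \Delta v^g/\partial \phi}{\partial \Delta v^g/\partial z}\Big|_{(z^*,\phi)},
\]
where the denominator is negative by the stability criterion; the numerator I would sign negative at $z^*>1/2$ because raising $\phi$ simultaneously amplifies the inter-regional spillover $\phi(1-b)z_j$ (which disproportionately raises expected innovation profits in the smaller region) and flattens the CES market-access premium of the larger region. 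In the subcritical case one has $\phi_s^g>\phi_b^g$ and a bistable window $(\phi_b^g,\phi_s^g)$ with no stable interior equilibrium; an economy initialized at agglomeration then remains at $z=1$ throughout $(0,\phi_s^g)$ and jumps directly to $z=1/2$ at $\phi=\phi_s^g$. In both sub-cases, $z^*(\phi)$ is weakly non-increasing from $1$ to $1/2$, which is precisely the monotone dispersion process claimed.

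The main obstacle is the supercritical sub-case: signing $\partial \Delta v^g/\partial \phi$ unambiguously along the entire asymmetric branch. Because $\phi$ enters $F_i$, the price index, and the expenditure shares simultaneously, direct differentiation produces terms of opposing signs. I expect the cleanest route is to re-parametrize the equilibrium locus as $\lambda=\lambda^*(z;\phi)$ in the spirit of the substitution used in the proof of Proposition 5, reducing the question to the monotonicity of a scalar function of $\phi$ at fixed $z$; uniqueness from (c) then rules out spurious reversals. A secondary, more routine step is confirming that the supercritical/subcritical dichotomy is exhaustive over the parameter range in which the hypothesis ``global spillovers magnified by freer trade with $b<1/2$'' is stated, which reduces to checking the sign of the third-order derivative $\partial^3 \Delta v^g/\partial z^3$ at $(1/2,\phi_b^g)$ as a function of $b$.
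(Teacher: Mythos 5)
Your proposal follows essentially the same route as the paper: Section 10 establishes the result by combining the uniqueness of the sustain point $\phi_{s}^{g}$ and of the break point $\phi_{b}^{g}$, the at-most-two-equilibria property on $\left(\frac{1}{2},1\right]$ inherited from the argument of Proposition 1, and the pitchfork at $\phi=\phi_{b}^{g}$ whose criticality (depending on $b$ and $\lambda$) determines whether the transition is smooth or sudden. The step you flag as the main obstacle---signing $\partial\Delta v^{g}/\partial\phi$ along the asymmetric branch---is not carried out in the paper either: its monotone-dispersion claim rests on the uniqueness of the two thresholds (which precludes any return to agglomeration or loss of stability of symmetric dispersion once attained), with the shape of the asymmetric branch illustrated numerically rather than proved.
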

Our analysis is again in line with \citet{akamatsu2025spatialscaleagglomerationdispersion}
because the model now only contains dispersion forces that depend
on $\phi$ (market factors and regional spillovers).

Finally, using $F_{1}=F(z)=\gamma\left[bz+\psi\phi(1-b)(1-z)\right]$
and $F_{2}=F(1-z)$ in (\ref{eq:relativequalityss}) and (\ref{eq:frontiergrowthrate}),
it is possible to show that the growth rate of $a_{max}$ is maximized
at symmetric dispersion. Therefore, a freeness of trade above the
break-point yields a stable spatial distribution that maximizes $g$.
Specifically, we have 
\[
g=\frac{\omega\left[b(1-\phi)+\phi\right]}{2(\omega+2)},
\]
with $\omega=\frac{\ln\delta}{\delta\alpha}$. The average distance
to the frontier $1-\bar{a}$ is also maximized when symmetric dispersion
is stable. The regional aggregate knowledge levels are given by 
\[
A_{1}=A_{2}=\tfrac{1}{2c}\bar{a}=\frac{a_{max}}{\alpha\left(1+\omega\right)}.
\]
These results corroborate Proposition 7 and further illustrate how
integration increases the growth rate of the global frontier quality
$g.$

\section{Concluding remarks}

We have analyzed a two-region Schumpeterian spatial quality-ladder
model. We looked at how the spatial creation and diffusion of knowledge
and increasing returns in manufacturing interact to shape the spatial
economy. Knowledge levels translate into higher arrival rates of innovation
that continuously improve the quality of manufactures varieties and
push the global frontier quality level forward. In turn, innovation
rates depend on the spatial distribution of knowledge levels in the
economy. As the regional average quality levels relative to the global
frontier converge in the long-run, innovation depends crucially on
the spatially weighted spatial distribution of researchers across
regions. We thus introduce a spatial mechanism of knowledge spillovers
conveying the idea that public knowledge transfers imperfectly across
space, as argued by e.g. \citet{krugman1991geography} and supported
by the empirical evidence of \citet{audretsch1996R&D}.

We find that if global knowledge spillovers dominate, the model accounts
for (a complete) re-dispersion of economic activities after an initial
stage of progressive agglomeration as trade integration increases
from a low level. A higher weight of global spillovers captures the
idea that congestion dampens the production of knowledge, turning
innovation into a dispersion force, since researchers have incentives
to relocate to the smaller region and benefit from the sizable pool
of agents living in the larger region. Smaller regions tend to have
tightly--knit innovation ecosystems where researchers are geographically
closer. This proximity facilitates frequent interactions, speeding
up knowledge diffusion \citep{Doring01052006}. All this promotes
a more balanced distribution of researchers across regions for a given
level of economic integration. However, the relationship between economic
integration and spatial imbalances is far from trivial, as we have
shown a myriad of different qualitative possibilities regarding transitions
between different stable states that depend on the particular level
of the intensity of local spillovers relative to global spillovers.
This level determines the spatial distribution for a given level of
integration and impacts the smoothness of the transition from either
partial or full agglomeration to symmetric dispersion as trade integration
reaches high enough levels.

On the other hand, if local spillovers dominate, knowledge spillovers
are more localized and generate an additional agglomeration force
in the model. In this case, a higher economic integration leads to
progressive agglomeration, as in \citet{martin1999growing} and \citet{martin2001growth}.
Re-dispersion for a high freeness of trade is precluded because there
is no incentive to relocate to the smaller region where innovation
is less likely and expected profits are lower. Further, we show that
a higher intensity of local spillovers always favours more agglomerated
outcomes.

The empirical literature shows that knowledge tends to cluster in
space (see e.g. \citet{carlino2015agglomeration,kerr2020tech} and
references therein). However, increases in economic integration brought
by the ongoing globalization may increase the weight of inter-regional
interaction and thus lead to greater spatial convergence in the future.
In this sense, we have shown in Section 10 that a freer trade leads
to a monotone dispersion as global spillovers become relatively stronger.

The inclusion of a reduced form diffusion term in the dynamics for
innovation, leading to the convergence of regions' average distance
to the knowledge frontier, is admittedly a simplifying mechanism in
our model. However, it is warranted because it allows differences
in region to stem uniquely from the spatial distribution of researchers
and its impact on regional knowledge spillovers. This allows us to
focus on spatial outcomes as a result of self-reinforcing agglomeration
mechanisms stemming from the interaction between endogenous market
sizes and the economic geography of knowledge spillovers (as proposed
by \citet{BS2022}). However, we acknowledge that allowing for asymmetries
in regional average qualities relative to the global frontier to arise
endogenously is a promising avenue for future research.

Other topics for future research include the analysis of a multi-regional
setup that allows to take the model to the data and test its theoretical
predictions. In this sense, our paper may contribute to the literature
of Quantitative Spatial Economics by providing a new theoretical mechanism
that could be incorporated into larger-scale quantitative models to
better understand the determinants of the spatial distribution of
economic activity.

Finally, we argue that introducing forward-looking behaviour and inter-temporal
optimization would allow to reach deeper insights regarding regional
growth and its relation with spatial outcomes.


\section*{Acknowledgments}

We are thankful to Steven Bond-Smith, Sofia B. S. D. Castro, João
Correia da Silva, Pascal Mossay, Pietro Peretto, Perdo Reis, Anna
Rubinchik and Jorge Saraiva for very useful comments and suggestions.
We would also like to thank participants at several conferences for
their valuable feedback. Part of this study was developed while José
M. Gaspar was a researcher at the Research Centre in Management and
Economics, Católica Porto Business School, Universidade Católica Portuguesa.
Funding Information: Japan Society for the Promotion of Science Grant/Award
Number: 21K04299; Fundação para a Ciência e Tecnologia UID/04105/2023,
UIDB/00731/2020 and PTDC/EGE-ECO/30080/2017.

\appendix

\section*{Appendix A}

\subsection*{Proof of Theorem 1}

A steady-state requires $\frac{d\bar{a}_{1}}{dt}=\frac{d\bar{a}_{2}}{dt}=0,$
which, using (\ref{eq:relativequalitymotion}), implies 
\begin{align*}
\frac{1}{2}(\Phi_{1}+\Phi_{2})(1-\bar{a}_{1}) & =g\bar{a}_{1}.\\
\frac{1}{2}(\Phi_{1}+\Phi_{2})(1-\bar{a}_{2}) & =g\bar{a}_{2}.
\end{align*}
Hence, the symmetric state $(\bar{a}_{1},\bar{a}_{2})=(\bar{a},\bar{a})\in\mathcal{H}$
is a steady-state of (\ref{eq:relativequalitymotionfinal}) and is
unique. \medskip{}

\noindent\textbf{\emph{Global stability}}\emph{.} Let $x=\frac{\bar{a}_{1}}{\bar{a}_{2}}\in\mathbb{R}_{+}.$
The dynamics for the ratio $x$ are given by: 
\[
\dot{x}=\frac{1}{\bar{a}_{2}}\left(\frac{d\bar{a}_{1}}{dt}-x\frac{d\bar{a}_{2}}{dt}\right).
\]
Using (\ref{eq:relativequalitymotionfinal}) and setting $\bar{a}_{1}=x\bar{a}_{2}$,
we get 
\[
\dot{x}=\frac{\Phi_{1}+\Phi_{2}}{2\bar{a}_{2}}(1-x).
\]
Define the Lyapunov function $V(x)=\frac{1}{2}(x-1)^{2}.$ Then 
\[
\dot{V}=\dot{x}(1-x)=-\frac{\Phi_{1}+\Phi_{2}}{2\bar{a}_{2}}(x-1)^{2},
\]
which is strictly negative for $x\neq1.$ Hence, we have $x(t)\rightarrow1$
from any interior initial condition, i.e., $A_{1}/A_{2}\rightarrow1$
and thus $x=1$ is globally stable for the ratio dynamics.\medskip{}

\noindent\textbf{\emph{Average relative quality}}\emph{.} An interior
steady-state with $\bar{a}_{i}=\bar{a}\in(0,1)$ solves 
\[
\frac{1}{2}\left(\Phi_{1}+\Phi_{2}\right)(1-\bar{a})-g\bar{a}=0,
\]
which implies 
\[
\bar{a}=\frac{\Phi_{1}+\Phi_{2}}{\Phi_{1}+\Phi_{2}+2g},
\]
where $g=\ln\delta\left(n_{1}\Phi_{1}+n_{2}\Phi_{2}\right)$. Since
$F_{i}$ is homogeneous of degree $1,$ we have $F_{i}(A_{i},A_{j})=\frac{a_{max}}{\alpha}\bar{a}F_{i}(z_{i},z_{j})$.
Moreover, $G_{1}(1)=G_{2}(1)=1$. Let $F_{i}\equiv F_{i}(\Vtz)$.
Then the frontier growth $g$ is given by 
\[
g=\frac{\ln\delta}{\alpha^{2}}a_{max}\bar{a}\xi(c)\left(z_{1}F_{1}+z_{2}F_{2}\right).
\]
With $\xi\left(c\right)=\frac{1}{\delta}\frac{\alpha}{a_{max}}$,
this yields 
\begin{align*}
\bar{a} & =\frac{F_{1}+F_{2}}{F_{1}+F_{2}+\frac{2}{\alpha\delta}\ln\delta\left[zF_{1}+(1-z)F_{2}\right]},\\
g & =\frac{\ln\delta}{\alpha\delta}\bar{a}\left(z_{1}F_{1}+z_{2}F_{2}\right).
\end{align*}
completing the proof.\hfill{}$\square$

\section*{Appendix B}

This appendix contains the more cumbersome formal proofs that support
our main results from Section 6 onward.

\subsection*{B.1 Proof of proposition 1}

Differentiating $\Delta v(z)$ in (\ref{eq:utility differential})
yields: 
\begin{equation}
\frac{d\Delta v}{dz}(z)=\frac{\mu P(z)}{2(\sigma-1)\sigma\left[z(\phi-1)+1\right]{}^{2}\left[z(1-\phi)+\phi\right]{}^{2}},\label{prop 1 eq}
\end{equation}
where: 
\[
P(z)=a_{1}z^{4}+a_{2}bz^{3}-2(1-\phi)a_{3}z^{2}+2(1-\phi)a_{4}z+a_{5},
\]
with:{\small{} 
\begin{align*}
a_{1}= & 4(1-2b)\gamma\mu(\sigma-1)(\phi-1)^{3}(\phi+1)\\
a_{2}= & 8(2b-1)\gamma\mu(\sigma-1)(\phi-1)^{3}(\phi+1)\\
a_{3}= & \gamma(\sigma-1)\left\{ b(\phi+1)\left[(\lambda-2)\phi^{2}-\lambda+18\phi-4\right]-\phi\left[\lambda(\phi-1)\phi+\lambda+6\phi\right]+\lambda-8\phi+2\right\} \\
 & +\sigma(\phi+1)(\phi-1)^{2}\\
a_{4}= & \gamma(\sigma-1)\left\{ \lambda(\phi-1)\left[b(\phi+1)^{2}-\phi^{2}-1\right]+2\phi\left[b(\phi+1)(\phi+5)-\phi(\phi+2)-3\right]\right\} \\
 & +\sigma(\phi+1)(\phi-1)^{2}\\
a_{5}= & \gamma(\sigma-1)\left\{ \lambda\left(\phi^{2}+1\right)\left[b(\phi+1)^{2}-\phi^{2}-1\right]+2\phi\left[b\left(\phi^{3}+3\phi^{2}+\phi-1\right)-\phi\left(\phi^{2}+\phi+1\right)+1\right]\right\} \\
 & -2\sigma\phi\left(\phi^{2}-1\right).
\end{align*}
}{\small\par}

\noindent The denominator of (\ref{prop 1 eq}) is positive, which
means that the sign of $\frac{d\Delta v}{dz}(z)$ is given by the
sign of $P(z)$, which is a fourth degree polynomial in $z$. Therefore,
$\Delta v(z)$ has at most four turning points, and thus at most five
equilibria for $z\in[0,1]$. We know that $z=\frac{1}{2}$ is an invariant
pattern. By symmetry, we can establish that there exist at most two
equilibria for $z>\tfrac{1}{2}$, which concludes the proof.\hfill{}$\square$

\subsection*{B.2 Proof of Proposition 2}

Proceeding as in \citet{gaspar2018agglomeration,Gaspar-et-al-RSUE2021},
the equilibrium condition $\Delta v(z)=0$ yields: 
\begin{equation}
\lambda\equiv\lambda^{*}(z)=-2\frac{c_{1}c_{2}+c_{3}\ln\left[\frac{z(\phi-1)+1}{z(1-\phi)+\phi)}\right]}{c_{4}},\label{eq:equilibriumcondition}
\end{equation}
where: 
\begin{align*}
c_{1} & =\gamma(\sigma-1)(2z-1)\\
c_{2} & =\phi^{2}\left[2b(z-1)z+b-z^{2}+z-1\right]+(1-2b)(z-1)z+b\phi\\
c_{3} & =\sigma\left[z(\phi-1)+1\right]\left[z(\phi-1)-\phi\right]\\
c_{4} & =\gamma(\sigma-1)(2z-1)\left[b(\phi+1)^{2}-\phi^{2}-1\right].
\end{align*}
It is easy to note that $\lambda^{*}(z)$ has a vertical asymptote
if and only if $c_{4}=0$, i.e., iff: 
\[
b=\hat{b}\equiv\frac{\phi^{2}+1}{(\phi+1)^{2}}.
\]

\noindent For $z\in\left(\frac{1}{2},1\right],$ the log term of $\lambda^{*}(z)$
is negative, as is $c_{3}$. Next, we have $c_{1}>0$, and $c_{2}>0$
if: 
\[
b\geq\underline{b}\equiv\frac{(z-1)z\left(\phi^{2}-1\right)+\phi^{2}}{2(z-1)z\left(\phi^{2}-1\right)+\phi(\phi+1)},
\]
where $0<\underline{b}<\hat{b}$. Since $c_{4}<0$ only if $b<\hat{b}$,
we have $\lambda^{*}(z)>0$ if $b\in\left[\underline{b},\hat{b}\right)$
and $\lambda^{*}(z)<0$ if $b\in\left(\hat{b},1\right)$. For $b\in\left(0,\underline{b}\right)$,
we need further inspection.

We have that:{\footnotesize{} 
\[
\frac{\partial\lambda^{*}}{\partial b}(z)=\frac{2(\phi+1)\left[z(\phi-1)+1\right]\left[z(\phi-1)-\phi\right]\left\{ \gamma(\sigma-1)(2z-1)(\phi-1)+\sigma(\phi+1)\ln\left[\frac{z(\phi-1)+1}{z(1-\phi)+\phi)}\right]\right\} }{\gamma(\sigma-1)(2z-1)\left[-b(\phi+1)^{2}+\phi^{2}+1\right]^{2}},
\]
}which is positive for all $z\in\left(\frac{1}{2},1\right].$ The
unique zero of $\lambda^{*}(z)$ in terms of $b$ is given by:{\footnotesize{}
\[
b=\tilde{b}\equiv\frac{\gamma(\sigma-1)(2z-1)\left[(z-1)z\left(\phi^{2}-1\right)+\phi^{2}\right]-\sigma\left[z(\phi-1)+1\right]\left[z(\phi-1)-\phi\right]\ln\left[\frac{z(\phi-1)+1}{z(1-\phi)+\phi)}\right]}{\gamma(\sigma-1)(2z-1)(\phi+1)\left[2(z-1)z(\phi-1)+\phi\right]},
\]
} with $\tilde{b}<\underline{b}$ and $\lambda^{*}(z)>0$ for $b\in(\tilde{b},\hat{b})$.
It is possible to show that $\tilde{b}$ is increasing in $\gamma.$
Moreover, we have $\tilde{b}=0$ if and only if: 
\[
\gamma=\gamma_{c}\equiv\frac{\sigma\left[z(1-\phi)-1\right]\left[z(1-\phi)+\phi\right]\ln\left[\frac{z(\phi-1)+1}{z(1-\phi)+\phi}\right]}{(\sigma-1)(2z-1)\left[(z-1)z\left(\phi^{2}-1\right)+\phi^{2}\right]}>0.
\]
This means that $\tilde{b}\geq0$ if $\gamma\geq\gamma_{c}$ and $\tilde{b}<0$
if $\gamma<\gamma_{c}$ and $\gamma_{c}\in(0,1]$. Since $\gamma_{c}\in(0,+\infty)$,
we have $\tilde{b}<0$ if $\gamma_{c}>1$. As a result, we have $\tilde{b}<0$
if $\gamma\in\left(0,\min\{1,\gamma_{c}\}\right)$ and $\tilde{b}\geq0$
if $\gamma\in\left[\min\{1,\gamma_{c}\},1\right)$. Then $\lambda^{*}(z)>0$
if $\gamma\in\left(0,\min\{1,\gamma_{c}\}\right)$ and $b\in(0,\hat{b})$.
Otherwise, we have $\lambda^{*}(z)>0$ if $\gamma\in\left[\min\{1,\gamma_{c}\},1\right)$
and $b\in(\tilde{b},\hat{b}).$ Therefore, $\lambda^{*}(z)$ is positive
for $b\in\left(\max\left\{ 0,\tilde{b}\right\} ,\hat{b}\right)$ and
negative for $b\in\ \left(0,\max\left\{ 0,\tilde{b}\right\} \right)\cup\left(\hat{b},1\right)$,
where $\max\left\{ 0,\tilde{b}\right\} $ depends on $\gamma_{c}$
and on the value of $\gamma$ as described above.

Thus, we can assert that, if $b\in\left(\max\left\{ 0,\tilde{b}\right\} ,\hat{b}\right)$,
there exists a value of $\lambda>0$ such that at least one (at most
two) dispersion equilibrium $z\equiv z^{*}\in\left(\frac{1}{2},1\right]$
exists. This concludes the proof.\hfill{}$\square$

\subsection*{B.3 Proof of Proposition 3}

We have: 
\[
\lim_{\phi\rightarrow0^{+}}\Omega(\phi)=-\infty\text{ and }\text{\ensuremath{\Omega(1)=\frac{\gamma(2b-1)(\lambda+1)}{\sigma}.}}
\]
Therefore, $S(1)>0$ if $b>\tfrac{1}{2}$ and we conclude that $\mathcal{S}(\phi)$
has at least one zero for $\phi\in\left(0,1\right)$. Further, we
have: 
\[
\dfrac{d\Omega}{d\phi}(\phi)=\dfrac{1}{2\phi^{2}}\left\{ \frac{\gamma(b-1)\left[\lambda\left(\phi^{2}-1\right)+2\phi^{2}\right]}{\sigma}-\frac{2\phi}{\sigma-1}\right\} ,
\]
whose sign depends on that of the second term, which is a second degree
polynomial and thus has at most two zeros $\left\{ \phi^{-},\phi^{+}\right\} $,
with $\phi^{+}>\phi^{-}$. However, only $\phi^{+}$ lies on the interval
$\phi\in\left(0,1\right)$: 
\[
\phi^{+}=\frac{\sigma\left[\frac{1}{\sigma-1}-\sqrt{\frac{\gamma^{2}(b-1)^{2}\lambda(\lambda+2)}{\sigma^{2}}+\frac{1}{(\sigma-1)^{2}}}\right]}{\gamma(b-1)(\lambda+2)}.
\]
Given that the leading coefficient of the polynomial is negative,
we have that $\Omega(\phi)$ is increasing for $\phi\in\left(0,\phi^{+}\right)$
and decreasing for $\phi\in\left(\phi^{+},1\right)$. Thus, $\Omega(\phi)$
has at most two zeros for $\phi\in\left(0,1\right)$, called sustain
points $\phi_{s1}$ and $\phi_{s2}$ (with $\phi_{s1}<\phi_{s2}$).\footnote{One of which is given by $\phi=1$ if $b=\frac{1}{2}$.}
If $b<\tfrac{1}{2}$, there exist at most two sustain points $\phi_{s1}\in\left(0,1\right)$
and $\phi_{s2}\in\left(0,1\right)$ and we have $\Omega(\phi)<0$
for $\phi\in\left\{ \left(0,\phi_{1s}\right)\cup\left(\phi_{2s},1\right)\right\} $
and $\Omega(\phi)>0$ for $\phi\in(\phi_{1s},\phi_{2s})$. If $b>\tfrac{1}{2}$,
there exists one unique sustain point $\phi_{s1}\in\left(0,1\right)$
and we have $\Omega(\phi)<0$ for $\phi\in\left(0,\phi_{1s}\right)$
and $\Omega(\phi)>0$ for $\phi\in\left(\phi_{1s},1\right)$, which
concludes the proof.\hfill{}$\square$

\subsection*{B.4 Symmetric dispersion}

Using (\ref{eq:stability symmetric dispersion}), the break points
are given by: {\small{} 
\begin{align}
\phi_{b1} & =\frac{\sqrt{\gamma^{2}(\sigma-1)^{2}\left[8b(\lambda+1)^{2}-4\lambda^{2}-8\lambda-3\right]+4\gamma\sigma(\sigma-1)+4\sigma^{2}}-2b\gamma(\lambda+1)(\sigma-1)}{\gamma(\sigma-1)\left[2b(\lambda+1)-2\lambda-3\right]-2\sigma}\nonumber \\
\phi_{b2} & =-\frac{\sqrt{\gamma^{2}(\sigma-1)^{2}\left[8b(\lambda+1)^{2}-4\lambda^{2}-8\lambda-3\right]+4\gamma\sigma(\sigma-1)+4\sigma^{2}}+2b\gamma(\lambda+1)(\sigma-1)}{\gamma(\sigma-1)\left[2b(\lambda+1)-2\lambda-3\right]-2\sigma}.\label{eq:break points}
\end{align}
}{\small\par}

\noindent Define $\gamma_{1}$ as 
\[
\gamma_{1}\equiv\dfrac{2\sigma}{(2\lambda+1)(\sigma-1)},
\]
and $b_{1}$ and $b_{2}$ as 
\begin{align*}
b_{1} & \equiv\frac{\left[\gamma(2\lambda+1)(\sigma-1)-2\sigma\right]\left[\gamma(2\lambda+3)(\sigma-1)+2\sigma\right]}{8\gamma^{2}(\lambda+1)^{2}(\sigma-1)^{2}},\\
b_{2} & \equiv\frac{\gamma(2\lambda+1)(\sigma-1)-2\sigma}{2\gamma(\lambda+1)(\sigma-1)}.
\end{align*}
We have $b_{1}\in\left(0,\tfrac{1}{2}\right)$ and $b_{2}\in\left(b_{1},1\right)$.
The numerator of both break points in (\ref{eq:break points}) is
negative. The numerator of $\phi_{b1}$ can be shown to be positive
if $\gamma\leq\gamma_{1}$, implying $\phi_{b1}<0$. If $\gamma>\gamma_{1},$
then $\phi_{b1}>0$ if and only if $b\in[b_{1},b_{2})$, and $\phi_{b1}<1$
if and only if $b>b_{1}$. Therefore we have $\phi_{b1}\in(0,1)$
if and only if: 
\begin{enumerate}
\item $\gamma>\gamma_{1}$, 
\item $b\in\left[b_{1},b_{2}\right).$ 
\end{enumerate}
\noindent If, additionally, $b<\tfrac{1}{2}$, then we have also that
$\phi_{b2}\in\left(0,1\right)$.\footnote{If $\gamma\in\left(\gamma_{1},\frac{2\sigma}{\lambda(\sigma-1)}\right)$,
then $b_{2}\in\left(b_{1},\frac{1}{2}\right)$ and the condition is
trivially met by (ii). In this case, both break points exist.} If $\gamma\leq\gamma_{1},$ then only $\phi_{b2}$ exists when $b<\tfrac{1}{2}$.
If $b>\frac{1}{2}$, then $\phi_{b2}$ does not exist and we have
a single break point $\phi_{b1}$ that lies on the interval $(0,1)$
if conditions (i) and (ii) are satisfied. This means that the possibility
of complete re-dispersion following agglomeration as $\phi$ increases
requires that the relative weight of local spillovers is neither too
high below $0.5$ nor too low above zero.

\subsection*{B.5 Proof of Proposition 5}

Taking the derivative of $\mathcal{G}$ in (\ref{eq:stabasdisp})
with respect to $b$ we get: 
\[
\frac{\partial\mathcal{G}}{\partial b}=-2\gamma(\sigma-1)(2z-1)^{3}\left(1-\phi^{2}\right)<0,\ \ z\in\left(\frac{1}{2},1\right)
\]
Next, solving $\mathcal{G}$ in (\ref{eq:stabasdisp}) for $b$ yields:{\footnotesize{}
\begin{equation}
b=b_{c}\equiv\frac{(2z-1)\left(1-\phi^{2}\right)\left[\sigma+\gamma(\sigma-1)(1-2z)^{2}\right]-\sigma\left[2z^{2}(\phi-1)^{2}-2z(\phi-1)^{2}+\phi^{2}+1\right]\ln\left[\frac{z(\phi-1)+1}{z(1-\phi)+\phi}\right]}{2\gamma(\sigma-1)(2z-1)^{3}\left(1-\phi^{2}\right)}.\label{eq:bcritconda}
\end{equation}
}{\footnotesize\par}

\noindent As a result, we have $\mathcal{G}>0$ for $b<b_{c}$ and
$\mathcal{G}<0$ for $b>b_{c}$. Next, we will prove that $b_{c}<1/2$.

First, notice that $\lim_{\phi\rightarrow1}b_{c}=\frac{1}{2}.$ Next,
we have: 
\[
\dfrac{\partial b_{c}}{\partial\phi}=\frac{\sigma\mathcal{N}}{2\gamma(\sigma-1)(2z-1)^{3}\left(\phi^{2}-1\right)^{2}\left[z(\phi-1)+1\right]\left[z(\phi-1)-\phi\right]},
\]
where: 
\begin{align*}
\mathcal{N}= & (2z-1)\left(\phi^{2}-1\right)\left[2z^{2}(\phi-1)^{2}-2z(\phi-1)^{2}+\phi^{2}+1\right]-\\
 & -4\left[z(\phi-1)+1\right]{}^{2}\left[z(1-\phi)+\phi\right]{}^{2}\ln\left[\frac{z(\phi-1)+1}{z(1-\phi)+\phi}\right].
\end{align*}
The numerator of the derivative is negative. As for $\mathcal{N}$,
observe that: 
\begin{align*}
\dfrac{\partial\mathcal{N}}{\partial z}= & -4\sigma(2z-1)(\phi-1)^{2}\times\\
 & \times\left\{ (2z-1)\left(1-\phi^{2}\right)+2\left[z(\phi-1)+1\right]\left[z(\phi-1)-\phi\right]\ln\left[\frac{z(\phi-1)+1}{z(1-\phi)+\phi}\right]\right\} .
\end{align*}
The first term inside the curly brackets is positive and the second
one is negative as is the log term. Therefore, we have $\frac{\partial\mathcal{N}}{\partial z}<0$.
Since $\mathcal{N}\left(z=\frac{1}{2}\right)=0$ and given that $\mathcal{N}$
is continuous in $z$, we can conclude that $\mathcal{N}<0$ for $z\in\left(\frac{1}{2},1\right)$.
Thus, we have $\frac{\partial b_{c}}{\partial\phi}>0,$ which means
that $b_{c}<\frac{1}{2}$. Thus, if $b>\frac{1}{2}$, we have $\mathcal{G}<0$
for any value of $\lambda$ such that $\lambda^{*}(z)>0$. This concludes
the proof.

\subsection*{B.6 Bifurcation at symmetric dispersion}

\noindent We can get a better picture of the dynamic properties of
the model by studying the type of local bifurcation that the symmetric
equilibrium undergoes at some break-point $\phi=\phi_{b}.$ After
some tedious calculations, it is possible to show the following: 
\[
\dfrac{\partial f}{\partial z}\left(\dfrac{1}{2};\phi_{b}\right)=0;\ \dfrac{\partial^{2}f}{\partial z^{2}}\left(\dfrac{1}{2};\phi_{b}\right)=0;\ \dfrac{\partial f}{\partial\phi}\left(\dfrac{1}{2};\phi_{b}\right)=0;\ \dfrac{\partial^{2}f}{\partial\phi\partial z}\left(\dfrac{1}{2};\phi_{b}\right)>0,
\]
where $\phi_{b}\in\{\phi_{b1},\phi_{b2}\}$. According to \citeauthor{Guckenheimer2002}
(2002, pp.~150), the conditions above ensure that symmetric dispersion
undergoes a pitchfork bifurcation at $\phi=\phi_{b}$. Further, we
have{\small{} 
\[
\dfrac{\partial^{3}f}{\partial z^{3}}\left(\dfrac{1}{2};\phi_{b}\right)=\frac{32\mu(1-\phi_{b})}{(\sigma-1)\sigma(\phi_{b}+1)^{4}}\varsigma,
\]
}where 
\begin{equation}
\varsigma=\sigma(\phi_{b}-1)^{2}(\phi_{b}+1)-3\gamma(\sigma-1)\left[b(\phi_{b}+1)^{2}-\phi_{b}^{2}-1\right]\left[\lambda(\phi_{b}-1)+2\phi_{b}\right].\label{eq:criticality}
\end{equation}
If $\varsigma>0$, the derivative is positive, and the pitchfork is
subcritical and a curve of unstable asymmetric equilibria branches
from symmetric dispersion to its right. If $\varsigma<0$, the derivative
is negative and the pitchfork is supercritical and a curve of stable
asymmetric equilibria branches from symmetric dispersion. If $\varsigma=0$
we say that the pitchfork is degenerate. 

Equating $\varsigma$ to zero and solving for $b$ we have that $\varsigma<0$
if and only if 
\begin{equation}
b<b^{*}\equiv\frac{\frac{\sigma(\phi_{b}+1)(\phi_{b}-1)^{2}}{\gamma(\sigma-1)\left[\lambda(\phi_{b}-1)+2\phi_{b}\right]}+3\left(\phi_{b}^{2}+1\right)}{3(\phi_{b}+1)^{2}}.\label{eq:last-1}
\end{equation}
If the numerator of $b^{*}$ in (\ref{eq:last-1}) is negative, then
$b^{*}<0$ and $\varsigma>0$ for all $b\in(0,1)$. This happens if
$\gamma<\tilde{\gamma}$, where 
\[
\tilde{\gamma}\equiv-\frac{\sigma(\phi_{b}-1)^{2}(\phi_{b}+1)}{3(\sigma-1)\left(\phi_{b}^{2}+1\right)\left[\lambda(\phi_{b}-1)+2\phi_{b}\right]}.
\]
Since we are interested in the branching of the symmetric equilibrium
into a curve of asymmetric dispersion equilibria in $(\phi,z)\in(0,1)\times[0,1]$,
we assume that $b<\frac{\phi_{b}^{2}+1}{(\phi_{b}+1)^{2}}$, in accordance
with Proposition 2. We have $b^{*}<\frac{\phi_{b}^{2}+1}{(\phi_{b}+1)^{2}}$
if and only if 
\begin{align*}
\lambda & >\frac{2\phi_{b}}{1-\phi_{b}}.
\end{align*}
Note that $\tilde{\gamma}>0$ when $\lambda>\frac{2\phi_{b}}{1-\phi_{b}}$.
Thus, we have that $\varsigma<0$ if and only if (i). $\lambda>\frac{2\phi_{b}}{1-\phi_{b}}$,
(ii). $\gamma>\tilde{\gamma},$ and (iii). $b<b^{*}$. In these cases,
the pitchfork is supercritical. Otherwise, we have $\varsigma>0$
and the pitchfork is subcritical.

\bigskip{}

\noindent\textbf{\emph{Criticality of bifurcation when $b>\tfrac{1}{2}$.}}{} 

\noindent Let $b_{b}$ correspond to the level of $b$ such that $\dfrac{\partial f}{\partial z}\left(\dfrac{1}{2};b_{b}\right)=0$.
This implies equating $\mathcal{B}$ in (\ref{eq:stability symmetric dispersion})
to zero, yielding 
\[
b_{b}\equiv\frac{\gamma(\sigma-1)\left[(2\lambda+3)\phi^{2}+2\lambda+1\right]+2\sigma\left(\phi^{2}-1\right)}{2\gamma(\lambda+1)(\sigma-1)(\phi+1)^{2}}.
\]
Substituting $b_{b}$ in $\varsigma$ using (\ref{eq:last-1}) yields
\[
\left.\varsigma\right|_{b=b_{b}}=\frac{16\mu(\phi-1)^{2}\left\{ 3\gamma(\sigma-1)\left[\lambda(\phi-1)+2\phi\right]+2\sigma\left[2\lambda(\phi-1)+5\phi+1\right]\right\} }{(\lambda+1)(\sigma-1)\sigma(\phi+1)^{3}},
\]
which can be shown to be negative if $b_{b}>\tfrac{1}{2}.$ This means
that the pitchfork bifurcation is supercritical when $b\in\left(\tfrac{1}{2},1\right)$
and the break point $\phi_{b1}$ exists. Thus, a curve of stable asymmetric
equilibria branches from the symmetric dispersion at $\phi=\phi_{b1}$
and lies to its right.

\clearpage\singlespacing  \bibliographystyle{apalike}
\bibliography{refs}

\end{document}